   \def\@citecolor{blue}%
   \def\@urlcolor{blue}%
   \def\@linkcolor{blue}%
\def\orcidID#1{\smash{\href{http://orcid.org/#1}{\protect\raisebox{-1.25pt}{\protect\includegraphics{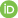}}}}}
\crefname{section}{Sec.}{Sec.(s)}
\crefname{appendix}{Appendix}{Appendices}
\crefname{figure}{Fig.}{Fig.(s)}
\crefname{theorem}{Thm.}{Thm.(s)}
\crefname{definition}{Def.}{Def.(s)}
\crefname{proposition}{Prop.}{Prop.(s)}
\crefname{corollary}{Cor.}{Cor.(s)}
\crefname{lemma}{Lem.}{Lem.(s)}
\crefname{example}{Ex.}{Ex.(s)}
\crefname{remark}{Rem.}{Rem.(s)}
\crefname{appendix}{App.}{App.(s)}
\newcommand\mytechnical[1]{}
\newcommand\cutout[1]{}
\begin{document}
%% YUL SYNTAX
% statements
\newcommand{\Sblock}[1]{\ma{\textup{\texttt{\{}} #1 \textup{\texttt{\}}}} }  % keyword
\newcommand{\funk}{\ma{\mathsf{function}}}                  % keyword
\newcommand{\Sfundef}[4]{\ma{\funk\,#1\textup{\texttt{(}}#2\textup{\texttt{)}}#3#4}}
\newcommand{\Sfundefto}[4]{\Sfundef{#1}{#2}{{\to}#3}{#4}}
\newcommand{\assignk}{\ma{\textup{\texttt{:=}}}}                     % keyword
\newcommand{\Svardecl}[2]{\ma{\letk\,#1\assignk#2}}
\newcommand{\Sassign}[2]{\ma{#1\assignk#2}}
\newcommand{\Scond}[2]{\ma{\ifk #1 #2}}
\newcommand{\switchk}{\ma{\mathsf{switch}}}                 % keyword
\newcommand{\casek}{\ma{\mathsf{case}}}                     % keyword
\newcommand{\defaultk}{\ma{\mathsf{default}}}               % keyword
\newcommand{\Sswitch}[2]{\ma{\switchk\,#1#2}}
\newcommand{\fork}{\ma{\mathsf{for}}}                       % keyword
\newcommand{\Sfor}[4]{\ma{\fork#1#2#3#4}}
\newcommand{\breakk}{\ma{\mathsf{break}}}                   % keyword
\newcommand{\continuek}{\ma{\mathsf{continue}}}             % keyword
\newcommand{\leavek}{\ma{\mathsf{leave}}}                   % keyword
% expressions
\newcommand{\Efuncall}[2]{\ma{#1\textup{\texttt{(}}#2\textup{\texttt{)}}}}
% opcodes
\newcommand{\opcode}[1][]{\ma{\mathop{\ifempty{#1}{op}{#1}}}}
% objects
\newcommand{\objectk}{\ma{\mathsf{object}}}                 % keyword
\newcommand{\codek}{\ma{\mathsf{code}}}                     % keyword
\newcommand{\yulobj}[3]{\ma{\objectk\,#1\Sblock{\codek\,#2\,#3}}}
\newcommand{\datak}{\ma{\mathsf{data}}}                     % keyword
\newcommand{\yuldat}[2]{\ma{\datak\,#1\,#2}}

% YUL BIG-STEP SEMANTICS:
\newcommand{\bsconf}[4]{\ma{\langle {#4} \mid #1 \mathop{;} #2 \mathop{;} #3 \rangle}}
\newcommand{\bseval}{\ma{\Downarrow}}
\newcommand{\bsevalexp}{\ma{\Downarrow_{\mathsf{exp}}}}
\newcommand{\bsevalseq}{\ma{\Downarrow_{\mathsf{seq}}}}
\newcommand{\regk}{\ma{\mathsf{regular}}}
\newcommand{\modeM}{\ma{\mathbb{M}}}
\newcommand{\restrict}{\mathord{\upharpoonright}}
\newcommand{\nameN}{\ma{\mathcal{N}}}

% YUL SMALL-STEP SEMANTICS:
\newcommand{\ssconf}[3]{\ma{\langle #1 \mid #2 \mathop{;} #3 \rangle}}
\newcommand{\ssframe}[1]{\ma{\llparenthesis #1 \rrparenthesis}}
\newcommand{\sscont}[1]{\ma{\llbracket #1 \rrbracket_{\mathsf{cnt}}} }
\newcommand{\ssbreak}[1]{\ma{\llbracket #1 \rrbracket_{\mathsf{brk}}} }
\newcommand{\opconf}[2]{\ma{\langle #1 \mid #2 \rangle}}

% YUL OP-CODE SEMANTICS
\newcommand{\toop}{\ma{\Downarrow_\mathsf{opc}}}

% YUL OBJECT SEMANTICS:
\newcommand{\objconf}[2]{\ma{\langle #1 \mid #2 \rangle}}

% YUL Data functions
\newcommand{\datasizek}{\ma{\mathsf{datasize}}}
\newcommand{\dataoffsetk}{\ma{\mathsf{dataoffset}}}
\newcommand{\datacopyk}{\ma{\mathsf{datacopy}}}
\newcommand{\setdelta}{\ma{\mathsf{set\Delta}}}

% YUL G STATE:
\newcommand{\mem}{\ma{\mathsf{mem}}}
\newcommand{\store}{\ma{\mathsf{store}}}
\newcommand{\state}{\ma{\mathsf{state}}}

% YUL Sets
\newcommand{\Num}{\ma{\mathsf{num}}} % Number type
\newcommand{\Str}{\ma{\mathsf{str}}} % String type
\newcommand{\Block}{\ma{\mathsf{Block}}} % Blocks
\newcommand{\Stmt}{\ma{\mathsf{Stmt}}} % Statements
\newcommand{\Case}{\ma{\mathsf{Case}}} % Cases
\newcommand{\Mode}{\ma{\mathsf{Mode}}} % Mode
\newcommand{\ExpC}{\ma{\mathsf{ECxt}}} % Evaluation Contexts
\newcommand{\StmtC}{\ma{\mathsf{SCxt}}} % Evaluation Contexts
\newcommand{\Repo}{\ma{\mathsf{Repo}}} % Function Repository

%% YUL Obj Sets:
\newcommand{\ObjT}{\ma{\mathsf{obj}}} % Object type
\newcommand{\AddrT}{\ma{\mathsf{addr}}} % Address type
\newcommand{\SizeT}{\ma{\mathsf{size}}} % Size type
\newcommand{\PosT}{\ma{\mathsf{pos}}} % Position type
\newcommand{\DataFuns}{\ma{\mathsf{DataFun}}} % Data Functions
\newcommand{\ObjSet}{\ma{\mathsf{Object}}} % Object Set
\newcommand{\DataSet}{\ma{\mathsf{Data}}} % Object Set

% YUL MEMORY :
\newcommand{\ByteT}{\ma{\mathsf{byte}}} % Byte type
\newcommand{\KeccakT}{\ma{\mathsf{keccak}}} % Keccak type

% YUL META OPS
\newcommand{\varsof}[1]{\ma{\mathsf{vars}(#1)}} % variable names
\newcommand{\funsof}[1]{\ma{\mathsf{funs}(#1)}} % abstract names

% TOOLS
\newcommand{\yult}{\ma{\textsc{YulTracer}}} % YulTracer
\raggedbottom

%% Title information
\title{An Operational Semantics for Yul}
%For the purpose of Open Access, the author has applied a CC BY public copyright licence to any Author Accepted Manuscript version arising from this submission.}}
\titlerunning{An Operational Semantics for Yul}
%
%\titlerunning{Abbreviated paper title}
% If the paper title is too long for the running head, you can set
% an abbreviated paper title here
%
\author{Vasileios Koutavas\inst{1,3}%
%\thanks{Koutavas \& Lin were supported by Science Foundation Ireland grant 13/RC/2094 (Lero)\!\!\!\!\!\!\!\!\!}%
\orcidID{0000-0002-3970-2486}
 \and
Yu-Yang Lin%
\inst{1}%
(\Envelope)%
\orcidID{0000-0001-5783-9454}
\and
Nikos Tzevelekos\inst{2}%
\orcidID{0000-0001-8509-8059}%
}

\authorrunning{V. Koutavas, Y.-Y. Lin, N. Tzevelekos}
% First names are abbreviated in the running head.
% If there are more than two authors, 'et al.' is used.
%

\institute{Trinity College Dublin, Dublin, Ireland
\email{\{Vasileios.Koutavas,linhouy\}@tcd.ie} \and
Queen Mary University of London, London, UK
\email{nikos.tzevelekos@qmul.ac.uk} \and
Lero - Science Foundation Ireland Research Centre for Software, Limerick, Ireland
}
\maketitle              % typeset the header of the contribution
\begin{abstract}
We present a big-step and small-step operational semantics for Yul\,---\,the intermediate language used by the Solidity compiler to produce EVM bytecode\,---\,in a mathematical notation that is congruous with the literature of programming languages, lends itself to language proofs, and can serve as a precise, widely accessible specification for the language. Our two semantics stay faithful to the original, informal specification of the language but also clarify under-specified cases such as void function calls.
Our presentation allows us to prove the equivalence between the two semantics.
We also implement the small-step semantics in an interpreter for Yul which avails of optimisations that are provably correct. We have tested the interpreter using tests from the Solidity compiler and our own.
  We envisage that this work will enable the development of verification and symbolic execution technology directly in Yul, contributing to the Ethereum security ecosystem, as well as aid the development of a provably sound future type system.

\keywords{Yul, Ethereum, Operational semantics, Programming languages, Formal methods}
\end{abstract}

  \section{Introduction}
  \label{sec:intro}
  Smart contracts are programs stored on a blockchain, executed within transactions, which read and write data on said blockchain.
They form an underpinning technology for decentralised finance applications by enabling the programming of autonomous services offering {lending, auction, new cryptocurrency creation, and }virtually any type of transactional facilities.
On Ethereum---the largest blockchain with the ability to execute smart contracts---smart contracts are typically written in a high-level programming language, predominantly Solidity, and then compiled down to Ethereum Virtual Machine (EVM) instructions.
Smart contracts already control significant amounts of cryptocurrency\footnote{As of 24 June 2024, DefiLlama (\url{https://defillama.com/chain/Ethereum}) lists a Total Value Locked (TVL) of USD 59.937b, equal to ETH 17.31m, on Ethereum.}, making them security-critical software.
Numerous documented exploits and bugs have resulted in the loss of billions of dollars worth of cryptocurrency for users~\cite{crypto-crimes-2024,eth-vulnerabilities,eth-bugs}, making smart contract correctness and security a problem of critical importance, {especially since these are immutable once deployed on the blockchain.}

The Solidity compiler team recently introduced Yul, an intermediate representation (IR) language sitting between Solidity and EVM bytecode, with one of its stated goals being
to help with the implementation of formal verification and optimisation techniques \cite{yul}.
Although such tools for Ethereum smart contracts exist, they either work with Solidity (e.g.~\textsc{solhint}\cite{tools:solidity:solhint}, \textsc{Slither}~\cite{tools:solidity:slither:paper,tools:solidity:slither},  \textsc{Echidna}~\cite{tools:solidity:echidna:paper,tools:solidity:echidna}, \textsc{Gambit}~\cite{tools:solidity:gambit}) or EVM bytecode (e.g. \textsc{Mythril}~\cite{tools:bytecode:mythril,tools:bytecode:mythril:survey}, \textsc{Oyente}~\cite{tools:bytecode:oyente,tools:bytecode:oyente:paper}, \textsc{Manticore}~\cite{tools:bytecode:manticore,tools:bytecode:manticore:paper}), making them hard to develop and maintain because of
the feature-rich, often evolving syntax of the former, or the low-level stack- and jump-based design of the latter.
Thus, Yul is designed on a small number of high-level constructs---loops, functions, conditional/switch and local variables---avoiding stack and jump complexity, and includes flexible, dialect-specific built-ins and data types.\footnote{Although currently Yul is used only with EVM primitives, it is designed to be parametric to a \emph{dialect} specifying the data types and machine instructions.}
All modern Solidity code can be compiled to Yul through the Solidity compiler, and almost all existing EVM bytecode can be decompiled to it~\cite{Smaragdakis-decompiler}.

In order for Yul to serve its intended purpose, it needs to have precise formal semantics on which formal verification techniques can be developed and optimisations can be proven correct.
The official documentation of Yul~\cite{yul}\footnote{Current latest version and the one accessed is v0.8.26 of the Solidity compiler} only provides an informal description of the language, its grammar, scoping rules, and its dynamic semantics in terms of an evaluation function written in pseudocode.
Although sufficient to understand the intent of the language constructs, the documentation is not a rigorous enough specification of Yul on which to base formal proofs.
As an example, the dynamic semantics in the documentation do not explicitly handle the case for calling functions with no return variables, which must be treated as statements rather than expressions (cf.~\cref{section:sem:expressions} and \cref{remark:void:functions}).

A number of mechanisations of the intended semantics of Yul in formal tools have appeared online. These include a shallow embedding of Yul into Dafny \cite{mechanisation:yul:dafny,mechanisation:yul:dafny:article}, and deep embeddings into the K-framework \cite{mechanisation:yul:K}, Isabelle/HOL \cite{mechanisation:yul:isabelle}, Lean \cite{mechanisation:yul:lean}, and ACL2 \cite{mechanisation:yul:acl2}.
Although these contribute significantly towards the goal of giving a solid mathematical footing for the semantics of Yul, they have inherent shortcomings.
{Besides not being peer-reviewed, their main drawback} is that in order to understand these mechanisations, one is required to have working knowledge {of} the formal tools used, something that, although increasingly more common, is certainly not universal among those who have received training in programming languages. Moreover, {the tools used} lend themselves more readily to mechanically formalising {the details of} various semantics rather than communicating {them} to experts.
{For example, the mechanisations of Yul contain explicit implementation of mathematical details that are otherwise conceptually straightforward (e.g. pattern-matching and manipulation of low-level dependencies such as sets, mappings, lists, etc), making them too lengthy and complex to serve as a high-level model of the language.}
%This makes tool formalisations of Yul quite lengthy and low-level.
Moreover, formalisation tools often incur engineering considerations that arise from their specific framework or underlying theories, which may require reshaping the presentation into a less standard form.
Finally, to even write a mechanisation of the Yul semantics one must first have a mental mathematical model of the language, which to our knowledge does not yet publicly exist. All this makes the existing Yul mechanisations hard to understand or trust as being equivalent to each other, which limits their suitability as definitive presentations of the semantics of Yul. For these reasons, we believe that the development of a diverse tool ecosystem for Yul would benefit from a more abstract, widely understood standard mathematical model thereof.

In this paper we contribute in this direction by providing a mathematical semantics of the language in textbook notation that, staying faithful to the intent of the original informal specification, can serve as a useful, widely understood, precise reference document. We hope our work to be useful for future programming language and formal verification research and implementations for Yul, the evolution of the language (e.g.~when the community decides to provide Yul with a type system and prove type soundness), and the validation of the existing mechanisations by comparing them with a more abstract model.

%including any relevant proof-theoretical pursuits (e.g. for the correctness of upcoming type systems for Yul). By providing a semantics in a notation that is standard in the study of programming languages, we hope to contribute a language-independent and tool-independent document from which implementations of formalisations and verification tools can be checked for equivalence or correctness, and more easily allow for future endeavours in formal methods and theoretical pursuits, particularly in the definition of theorems and their corresponding proofs. Moreover, we aim as well to resolve issues encountered in the official informal specifications of Yul.
%
%\yy{todo: say how we solve problems (1) to (4) above, e.g. we use standard notation and are independent of any tools... or rewrite story}

We present here an abstract syntax of Yul (\cref{sec:syntax}), together with both a big-step and small-step operational semantics (\cref{sec:sem}), and prove their correspondence {(\cref{sec:proof})}. We choose to present both styles of semantics because a big-step semantics is typically more intuitive and human-readable, and more readily relates to the informal specification of Yul, while a small-step semantics is easier to implement, especially in interpreters and symbolic execution tools.
Note that we present only the semantics for Yul, which is given separately from the semantics of EVM bytecode and dialect-specific notation such as objects. As mentioned earlier, this is because Yul is intended to be parametric to the specific dialect implemented. % \vk{(c.f.~\cref{rem:dialects})}.
%Additionally, since the EVM semantics has already been formalised, we define here only the semantics of Yul and make it parametric on the specific dialect desired, which shall be left as work for the future.
%{Here} we present how EVM-dialect objects could be incorporated into our semantics to handle IR programs compatible with the Solidity compiler (\cref{sec:obj}).
We also present a modular implementation of the semantics of Yul in the form of an interpreter that is parametric on a given dialect and avails of optimisations that are provably correct (\cref{sec:imp}). At the moment, this interpreter includes a subset of the semantics of the Shanghai upgrade of EVM as its only dialect, which is sufficient to evaluate the correctness of our semantics experimentally.
As such, our prototype interpreter only executes \emph{closed} programs that make use of a subset of the instructions available to the EVM. We evaluate our implementation with sample implementations of standard algorithms and tests from the Solidity compiler.
Finally, we discuss related work (\cref{sec:relwork}) and conclusions (\cref{sec:conclusion}).
Our work has been motivated by our ongoing efforts to implement a complete symbolic execution engine of EVM-dialect Yul, which would not have been possible without the semantics presented herein.

%%% Local Variables:
%%% mode: latex
%%% TeX-master: "../main"
%%% End:

  \section{Yul Syntax}
  \label{sec:syntax}
  Yul is designed to be a highly readable language with simple syntax and semantics that is as direct to transform into bytecode as possible.
Because the EVM---the primary target of the language---is a stack machine, Yul achieves these goals by providing syntax that replaces control-flow and variable management operations that would usually be done via stack operations (e.g. $\mathsf{push}$, $\mathsf{pop}$) and jumps.
%As such, Yul amounts to a relatively simple command/statement language with functions for control-flow and variable manipulation.

\begin{figure}[t]
  \[\begin{array}{r@{\;\;}r@{\,}c@{\,}l}
    \textsc{\Stmt:}  & S & \mis & \Sblock{S^*}
                           \mor \Sfundefto{x}{\vec x}{\vec x}{\Sblock{S^*}}
                           \mor \Svardecl{\vec x}{M}
                           \mor \Sassign{\vec x}{M}
                           \mor M
                           \mor \Scond{M}{\Sblock{S^*}}\\
                     & & & \mor \Sswitch{M}{(\casek\,v\Sblock{S^*})^*\defaultk\Sblock{S^*}}
                           \mor \Sfor{\Sblock{S^*}}{M}{\Sblock{S^*}}{\Sblock{S^*}}\\
                     & & & \mor \breakk
                           \mor \continuek
                           \mor \leavek
                           \\
    \textsc{\Exp:} & M & \mis & \Efuncall{x}{\vec M}
                           \mor \Efuncall{\opcode}{\vec M}
                           \mor x
                                \mor v
           \qquad \textsc{\Val:}\ u,v,c \qquad
                           \textsc{\Var:}\  x,y,z,f
    \end{array}\]\vspace{-5mm}
  \caption{Yul source-level syntax.}\label{fig:yul}
\end{figure}
We define the \emph{source-level} abstract syntax of Yul; that is, the syntax of Yul programs, in \cref{fig:yul}.
In this grammar we range over {(dialect-specific)} values {or constants} with $u,v,c$, variables with $x,y,z$, statements with $S$ and expressions with $M$, and variants thereof.
We use vector syntax (e.g.~$\vec{x}$) to mean a comma-separated, consecutively-indexed list of syntax ($x_1, x_2, \ldots, x_n$), and the Kleene star (e.g.~$S^*$) to mean consecutively-indexed juxtaposition of syntax without commas ($S_1 S_2 \ldots S_m$).
We next explain the syntactical constructs and their intention.
\begin{asparaitem}
\item \textbf{Block statement:} $\Sblock{S^*}$. These are sequences of statements that are to be executed one after the other unless a halting statement is encountered.
  % Note that the language provides no statement terminators. That is, the only delimiters are the curly braces used to define blocks.
  %Blocks are also used to define the scope of functions and variables for statements within them.
\item \textbf{Function definition:} \Sfundefto{x}{\vec x_{i}}{\vec x_{o}}{\Sblock{S^*}}.

  These contain the function name $x$, a vector of argument variables $\vec x_i$, a vector of return variables $\vec x_o$, and a function body block ${\Sblock{S^*}}$.
  Functions can return zero (hereafter referred to as void functions), one, or a tuple of values (hereafter referred to as multi-value functions).
  The use of curly braces in the function body is
  semantically important as blocks are leveraged to define the scope of variables in a function call.
  %In particular, functions in Yul do not perform any substitution. Instead, the scope is set such that the given arguments are assigned in scope to $\vec x_i$ and return values are those assigned to $\vec x_o$ when returning from a function call.
  Actual arguments are assigned to $\vec x_i$ when entering---and return values are those assigned to $\vec x_o$ when exiting---the function block.
\item \textbf{Variable declaration/assignment:} \Svardecl{\vec x}{M} / \Sassign{\vec x}{M}. Their behaviour is identical except for the conditions involved: $\vec x$ cannot be in scope if declared {with a \textsf{let}-statement} and must be in scope for {standard} assignment.
  The syntax allows for multi-value assignment, with $M$ being an expression which must match the arity of $\vec x$.
  %List of values are syntactic extensions in the operational syntax (c.f.~\cref{sec:sem}).
  As tuples are not part of the source-level syntax, they can only occur as a result of calling a multi-value function. Tuples do not appear in \cref{fig:yul} but will be introduced in the runtime syntax in \cref{sec:sem}.
  The Yul specification \cite{yul} allows for uninitialised declarations.
  For simplicity we consider these to be syntactic sugar for function calls returning the correct arity of zeros.
  %These are simply syntactic sugar for \Svardecl{x}{0} for a new variable $x$; uninitialised variables map to 0.
\item \textbf{Conditional statement:} \Scond{M}{\Sblock{S^*}}. $M$ must evaluate to a constant that represents either $\true$ or $\false$, and $\Sblock{S^*}$ is a block for scoping reasons; {there are no else-branches in conditionals}.
\item \textbf{Switch statement:} \Sswitch{M}{(\casek\,v\Sblock{S^*})^*\defaultk\Sblock{S_d^*}}.
  If $M$ evaluates to constant $v_{i}$ then block $\Sblock{S^*_i}$ executes, otherwise $\Sblock{S_d^*}$ does.
\item \textbf{Loop:} \Sfor{\Sblock{S_i^*}}{M}{\Sblock{S_p^*}}{\Sblock{S_b^*}}.
  Loops comprise an initialisation block $\Sblock{S_i^*}$,
  a conditional expression $M$ which evaluates to \true or \false\ at each iteration, a post-iteration block $\Sblock{S_p^*}$ for updating loop variables, and a loop body $\Sblock{S_b^*}$.
\item \textbf{Halting statement:} \breakk, \continuek, \leavek. These are control-flow statements; \breakk and \continuek may only occur in loop bodies to exit the loop or skip the rest of the loop body and continue to the post-iteration block, respectively; \leavek only occurs inside function bodies to exit the current function (cf. \cref{def:halt:restrict}).
\item \textbf{Function call expression:} \Efuncall{x}{\vec M}.
  A function call is normally a part of an expression and returns a single value. However there are two special cases of note:
  (1) a function call can appear as the expression of a multi-value declaration/assignment and must return a tuple of values of the correct arity;
  (2) a void function call that appears at the place of a statement, in which case it must return $\regk$; this is the only allowed expression that can appear at the place of a statement.
  %Function calls occurring as direct sub-statements to blocks must return $\regk$. Functions occurring as direct sub-statements to declarations or assignments must return the correct number of values.
  Arguments are evaluated \emph{right-to-left}.
\item \textbf{Opcode execution expression:} \Efuncall{\opcode}{\vec M}. These are calls to dialect-specific instructions external to Yul. Like function calls, arguments are evaluated right-to-left and return
  {zero, one, or multiple values.}
  %$v$, $\tuple{\vec v}$ or \regk.
  Unlike function calls, opcodes may return
  exceptional values which form part of the dialect, not named by the Yul specification. We will treat them as abnormal program exit values in our small-step semantics in the following section.
  %external modes which Yul is not officially defined to handle. In our small-step semantics, we explicitly return external modes to the top.
\end{asparaitem}

\begin{example}
  To illustrate the source-level language, consider a naive recursive implementation of a Fibonacci term generator.
  Note that in Yul double forward slashes are comments.
  %; we omit these as they do not posses any semantics.
\begin{lstlisting}[language=yul]
{ // Function that computes Fibonacci via naive recursion
  function fibonacci_rec(n) -> result {
    if lt(n,3) {
        result := 1
    }
    if gt(n,2) {
        result := add(fibonacci_rec(sub(n,1)),fibonacci_rec(sub(n,2)))
    }
  } // Call Fibonacci with 10
  let fib_10 := fibonacci_rec(10)
  mstore(0x00, fib_10)
}
\end{lstlisting}
  In the program above, function \textcode{fibonacci\_rec} computes the $n$-th Fibonacci number, where \textcode{lt}/\textcode{gt}, \textcode{add}/\textcode{sub} and \textcode{mstore} are comparison, arithmetic and memory instructions respectively. These are opcodes in the EVM dialect of Yul. Functions do not require a return statement. Instead, the defined output variable \textcode{result} is assigned and its value returned at the end. Nesting function calls within opcode/function calls is allowed, {and is only constrained by the stack limit of the underlying dialect (in EVM the stack limit is 1024 words)}. The program computes the 10th Fibonacci term and stores it at position \textcode{0x00} in memory.
\end{example}

\begin{remark}[Typing]
  Although simple checks are in place in the Solidity compiler, such as for the arity of function arguments/returns, Yul at the moment provides no formal specification for a type system. Its default dialect involves only EVM bytecode operations with a single value type of 256-bit unsigned integers ($\mathsf{U256}$). While the Yul source language involves literals of various kinds (such as hex literals), there is an implicit assumption that these are appropriately translated into values of the underlying dialect---e.g. $\mathsf{U256}$ for the EVM dialect.
  %in practice these are used in the parser either for names/identifiers (in the case of string literals) or numerals converted from various formats (hexadecimal, binary, decimal)
\end{remark}

\begin{remark}[Dialects]\label{rem:dialects}
  As mentioned in the Introduction, because Yul is meant to be target-independent, it is designed to be modular with regards to multiple so-called dialects. Each dialect in Yul would in principle be defined in relation to the specific back-end bytecode that it is meant to compile into.
  %(generate bytecode for).
  For this purpose, dialects amount to a collection of machine opcodes
  %(instructions)
  that remain after excluding those associated with control flow and variable management, and the respective value types used by said opcodes.
  Thus, while we use $\true$ and $\false$
  %(true and false)
  values in our semantics, these do not necessarily represent boolean values; instead they are placeholders for whichever values are considered true and false in a given dialect.
  For instance, in the EVM dialect, zero represents $\false$ while non-zero values are considered $\true$.
  This is a conceptually simple mapping that entails a variety of subtle behaviours one has to be careful about in the EVM
  (e.g. negating a "true" value does not make it a "false" one, as negation is bit-wise and could produce another non-zero value).
\end{remark}

%\subsection{EVM Instructions} \yy{TO REMOVE SECTION}
%From the Shanghai specifications we have the following categories:
%
%\paragraph[Arithmetic]{Arithmetic.}
%add, sub, mul, div, sdiv, mod, smod, addmod, mulmod, exp, signextend.
%
%\paragraph[Bitwise]{Bitwise.}
%and, or, xor, not, byte, shl, shr, sar.
%
%\paragraph[Block]{Block.}
%blockhash, coinbase, timestamp, number, prevrandao, gaslimit, chainid.
%
%\paragraph[Comparison]{Comparison.}
%lt, slt, gt, sgt, eq, iszero.
%
%\paragraph[Control Flow]{Control Flow.}
%stop, jump (not used), jumpi (not used), pc, gas, jumpdest (not used).
%
%\paragraph[Environment]{Environment.}
%address, balance, origin, caller, callvalue, calldataload, calldatasize, calldatacopy, codesize, codecopy, gasprice, extcodesize, extcodecopy, returndatasize, returndatacopy, extcodehash, selfbalance, basefee.
%
%\paragraph[Keccak]{Keccak.}
%keccak.
%
%\paragraph[Log]{Log.}
%logn (python specs), log0, log1, log2, log3, log4.
%
%\paragraph[Memory]{Memory.}
%mstore, mstore8, mload, msize.
%
%\paragraph[Stack]{Stack.}
%pop, push* (not in use), dup* (not in use), swap* (not in use.
%
%\paragraph[Storage]{Storage.}
%sload, sstore.
%
%\paragraph[System]{System.}
%genericcreate (for python specs), create, create2, return, genericcall (for python specs), call, callcode, selfdestruct, delegatecall, staticcal, revert.

%%% Local Variables:
%%% mode: latex
%%% TeX-master: "../main"
%%% End:

  \section{Yul Semantics}
  \label{sec:sem}
  In this section we present big- and small-step semantics for Yul. We shall present them together, grouped thematically under rules for blocks, variable manipulation, branching statements, loops, and expressions. For small-step semantics specifically, we additionally define top-level rules that carry around the evaluation contexts and handle external opcode instructions.

For both semantics, we shall use configurations of the form:
\[\bsconf{G}{L}{\nameN}{S}\]
with the following components:
\begin{asparaitem}
\item \textbf{Statement} $S$ (or \emph{term}) being evaluated. Statements are drawn from an extended \emph{operational syntax} defined in \cref{fig:yul2} (top), and discussed further below.
\item \textbf{Global environment} $G$ that is external to the semantics of Yul itself, {and in the context of smart contracts is meant to contain the state of the blockchain, the data of external calls to a smart contract, etc.;
  $G$ will be passed as state by the semantics of Yul and may change only by calls to dialect opcodes.}
  %e.g. in the EVM dialect, $G$ would effectively include the whole blockchain.
\item \textbf{Local state} $L$ of all variables. This is a dictionary $L : x \mapsto v$. All variables in scope for the term being evaluated will be in $L$.
\item \textbf{Namespace} $\nameN$ of all functions visible to the term being evaluated. This is like a function repository $\nameN : x \mapsto (\vec x_i, \vec x_o, S_b)$ for all the functions in scope.
\end{asparaitem}

%This is to simplify the presentation of the rules, which do not need $G$ as it is used only in top-level rules that may impact the entire term including the evaluation context. This also avoids redundantly carrying $E$ on every rule.
We mentioned that statements require an extended syntax;
  $\regk$  represents successful completion of a statement.
  Specific to the small-step semantics (cf.\ \Cref{def:small-step}) are
  special statements that
  arise from certain reductions.
  \begin{asparaitem}
\item \textbf{Block scoping:} $\Sblock{ \vec{S}}_{L}^{\nameN}$. {This is used to implement the scoping rules for blocks and are distinct from source-level blocks $\Sblock{S^*}$. Here $\nameN$ records functions and $L$ variables in scope before entering a block, needed when reinstating this scope after exiting the block.}
%  Note we write $\vec S$ here as we use a comma-separated sequence of statements (whereas statements in $S^*$ are separated by whitespace).
\item \textbf{Break/continue catching:} $\ssbreak{S}$ / $\sscont{S}$. These  catch $\breakk$ and $\continuek$ statements respectively within loop body blocks.
\item \textbf{Function call framing:} $\ssframe{S}_{L}^{\vec x}$. These encode function call frames by remembering the $L$ at call site and the return variables $\vec x$. At return, $L$ is used to restrict the domain and ensure correct scoping (similar to blocks) and $\vec x$ is used to retrieve the function return values.% (function bodies in Yul simply assign values to return variables).
  %The intention with these special contexts to encode Yul call stacks within an evaluation context $E$ by remembering within $E$ any nested function returns expected.
  \end{asparaitem}

  We also extend expressions $M$ with tuples $\tuple{\vec v}$, needed for multi-value assignments and function returns, which are defined when $|\vec v|>1$.
  Finally, we introduce \emph{modes} $\modeM$,
  as in the Yul specification, which can signal regular termination ($\regk$), e.g.\ of a void function call, $\breakk$/$\continuek$ modes for for-loops, and $\leavek$ mode for returning early from functions.

  Modes also include \emph{external irregular modes} $\mathcal{M}$
 that may be returned by opcode evaluation. These are parametric to the Yul semantics and depend on the specific dialect implemented; e.g.\ in the EVM dialect, such modes may be generated by opcodes signalling exhaustion of gas.
We do not use external error modes in big-step semantics rules which define only successful executions.

%
%Function and opcode application contexts evaluate arguments right-to-left.

%Before continuing with the rules, we define an \emph{operational syntax} for Yul. This extends the source-level classes of statements and expressions (cf.\ \cref{fig:yul}) and is depicted in \cref{fig:yul2} (top).
\begin{figure}[t]
  \[\begin{array}{r@{\;\;}r@{\;\,}c@{\;\,}l}
    \textsc{\Stmt:}  & S & \mis & \dots
                           \mor \regk
                           \mor \mathcal{M}
                           \mor \Sblock{ \vec{S}}_{L}^{\nameN}
                           \mor \ssbreak{S}
                           \mor \sscont{S}
                           \mor \ssframe{S}_{L}^{\vec x}
                           \\
    \textsc{\Exp:} & M & \mis & \dots
                         \mor \tuple{\vec v}\\
                         \
          \textsc{\Mode:} & \modeM & \mis & \breakk
                           \mor \continuek
                           \mor \leavek
                           \mor \regk
                           \mor \mathcal{M}
                           \\
    \textsc{\ExMode:}  &\mathcal M
    \end{array}\]\vspace{-1mm}
  \hrule\vspace{-1mm}
    \[\begin{array}{r@{\;\;}r@{\,}c@{\,}l}
    %\textsc{\IStmt:} & I & \mis & \vec S \mor \Sblock{ I,\vec{S}}_{L}^{\nameN} \mor \sscont{I}\\
    \textsc{\StmtC:} & E & \mis & \hole
                           \mor \Sblock{ E,\vec S}_{L}^{\nameN}
                           \mor \Svardecl{\vec x}{E}
                           \mor \Sassign{\vec x}{E}
                           \mor E_{\Exp}
                           \mor \sscont{E}
                           \mor \ssbreak{E}
                           \mor \Scond{E_{\Exp}}{\Sblock{S^*}}
                           \\
                     & & & \mor \Sswitch{E_{\Exp}}{(\casek\,v\Sblock{S^*})^*\defaultk\Sblock{S^*}} \\
    \textsc{\ExpC:} & E_{\Exp} & \mis & \hole
                           \mor \Efuncall{x}{\vec M,E_{\Exp},\vec v}
                           \mor \Efuncall{\opcode}{\vec M,E_{\Exp},\vec v}
                           \mor \ssframe{E}_{L}^{\vec x}
  \end{array}\]\vspace{-5mm}
\caption{Yul operational syntax (omitting constructs shown in \cref{fig:yul}).}\label{fig:yul2}
  \end{figure}
  % This is done because tuples are not part of the source-level syntax.

%  \begin{remark}[Intermediate statements for small-step]\label{rem:sstep-syntax}

%\end{remark}

%\begin{remark}[Call frame contexts]
In \cref{fig:yul2} (bottom) we define statement ($\textsc{\StmtC}$) and expression ($\textsc{\ExpC}$) evaluation contexts for our small-step semantics.
  Most evaluation contexts are derived in a standard way from the syntax of the language and ensure correct evaluation order in small-step semantics. For example block statements are evaluated left-to-right, whereas function and opcode call arguments are evaluated right-to-left.
  The inclusion of function call contexts $\ssframe{E}_{L}^{\vec x}$ encodes a call stack in our small-step semantics.
  As functions have a fresh variable scope when they are called, the surrounding scope $L$ of the caller must be recorded and reinstated at function return.
  The return variables must also be recorded to be returned to the caller.

  %The rationale for this becomes clearer in the context of future work (c.f.~\cref{sec:future}), but amounts to the ability to carry around more portable Yul terms when considering dialects that may need to produce calls stacks of Yul configurations (e.g. inter-contract communication on the Ethereum platform), which then absolves us from nested stacks that complicate the theory.
%\end{remark}

  {In the following subsections we present the rules of the big- and small-step semantics of Yul, grouped by the kind of term they handle. These rules precisely define the predicates of the next two definitions.}

%============ DEF: BIG STEP ============
\begin{definition}[Big-step semantics]\label{def:big-step}
% \yy{For our big-step semantics we shall use a subset of the operational syntax that omits external modes $\mathcal{M}$, and all evaluation contexts and their corresponding statements $\Sblock{ \vec{S}}_{L}^{\nameN}$, $\ssbreak{S}$, $\sscont{S}$ and $\ssframe{S}_{L}^{\vec x}$.}
We define a big-step semantics on successful executions that do not result in external irregular modes. We write
\[
  \bsconf{G}{L}{\nameN}{S}
  \bseval
  \bsconf{G'}{L'}{\nameN'}{\modeM}
\]
  for the top-level big-step evaluation of $S$ that produces a {successful} execution.
We additionally define variants thereof for intermediate evaluations:
%  that may not form complete executions. These are for:
\begin{itemize}
  \item the evaluation of sequences {of statements comprising}
    %that result from the evaluation of
    blocks
\[\bsconf{G}{L}{\nameN}{\vec S}\bsevalseq \bsconf{G'}{L'}{\nameN'}{\modeM}\]
\item the evaluation of expressions, {which result in} $S' \in \{v, \tuple{\vec v} , \regk\}$
\[\bsconf{G}{L}{\nameN}{M}\bsevalexp \bsconf{G'}{L'}{\nameN'}{S'}\]
\end{itemize}
The above are concretely defined by the top rules in \cref{fig:sem:blocks} to \cref{fig:sem:exp}.
\end{definition}

\begin{remark}
  %The difference between $\bseval$ and its variants ($\bsevalseq$ and $\bsevalexp$) is purely cosmetic for clarity -- all three can be considered to be the same operator. We write $\bsevalseq$ for configurations containing vectors of statements $\vec S$, whereas other configurations typically feature singleton statements. We use this additional evaluation operator for sequences to aid the presentation of catching irregular behaviours in blocks while staying faithful to the official documentation. Also note that $\bsevalexp$ is able to return vectors of values. This is an intermediate syntax that only exists in the evaluation of expressions. As such, we do not include them in the source-level syntax, and use $\bsevalexp$ wherever they may appear for clarity.
  {Although it would have been possible to define big-step rules to deal with propagation of external irregular modes, we follow standard practice and} define $\bseval$ only on successful {computation},
  {excluding irregular modes. We do handle these modes in the small-step semantics, which more closely specifies an interpreter for the language.}
  % This is because we consider it the more intuitive presentation of ours semantics and handling external modes adds unnecessary detail that is better suited and more succinctly presented in a small-step semantics.
  %External errors could be handled in a big-step fashion via additional rules per case that pass said errors around.
\end{remark}
%============ DEF: SMALL STEP ============
\begin{definition}[Small-step semantics]\label{def:small-step}
We write
\[
  \ssconf{S}{G;L}{\nameN}
  \to
  \ssconf{S'}{G';L'}{\nameN'}
\]
for a small-step transition in our small-step semantics.
  When a transition does not change $G$ (as only opcode execution does) we omit it from the configurations.
The reduction rules
are shown in the bottom part of \cref{fig:sem:blocks} to \cref{fig:sem:exp}, with the additional top-level rules:
  \[\begin{array}{l@{\;\,}ll}
  % TOP LEVEL
  \ssconf{E[S]}{G;L}{\nameN}
  \to
  \ssconf{E[S']}{G';L'}{\nameN'}
  & \text{if }
  \ssconf{S}{G;L}{\nameN}
  \to
  \ssconf{S'}{G';L'}{\nameN'}
  \\[1mm]
  % TOP LEVEL
  \ssconf{E[\mathcal M]}{G;L}{\nameN}
  \to
  \ssconf{\mathcal M}{G';L'}{\nameN'}
  \end{array}\]
\end{definition}

The top-level rules decompose the term to an evaluation context with an inner statement in its hole that can either perform a Yul reduction (former rule), or is an irregular return mode that terminates the program (latter rule).
We next present the rules for small- and big-step semantics for each type of term.

%Note
%the rules above thread $G$ and $E$ around, which otherwise create unnecessary clutter in the transition rules. The latter one,  for calling opcodes, returns an external irregular mode (e.g.\ errors). This is necessary as a top-level rule because it has the ability to modify the evaluation context $E$, which is untouched by every other small-step transition.

%============ START OF SEMANTICS ============
% \subsection{Top-Level Small-Step Semantics}

% We begin with the top-level rules for our small-step semantics. These are rules added for presentation, as they thread $G$ and $E$ around, which otherwise create unnecessary clutter in the transition rules.
%   \[\begin{array}{l@{\;\,}ll}
%   % TOP LEVEL
%   \ssconf{E[S]}{G;L}{\nameN}
%   \to
%   \ssconf{E[S']}{G;L'}{\nameN'}
%   & \text{if }
%   \ssconf{S}{L}{\nameN}
%   \to
%   \ssconf{S'}{L'}{\nameN'}

%   % TOP LEVEL OPCODE
%   \\[0.5em]
%   \opconf{E[\Efuncall{\opcode}{\vec M}]}{G;L;\nameN}
%   \to
%   \opconf{\mathcal M}{G';L;\nameN}
%   \qquad
%   & \text{if }
%   \opconf{\Efuncall{\opcode}{\vec M}}{G}
%   \toop
%   \opconf{\mathcal M}{G'}
%   \end{array}\]
% Note the rule for calling opcodes that returns an external irregular mode (e.g. errors). This is necessary as a top-level rule because it has the ability to modify the evaluation context $E$, which is untouched by every other small-step transition.

%We can show the following correspondence (cf.\ \cref{apx:equiv}).

\subsection{Block Statements}

\begin{figure}[t]
  \[\begin{array}{@{}cllll@{}}
    %%%%%% BLOCKS %%%%%%
    \irule[Block][block]{
      \nameN_1 = \funsof{\Sblock{S_1 .. S_n}}
      \\
      \bsconf{G}{L}{\nameN \uplus \nameN_1}{S_1 , .. , S_n}
      \bsevalseq
      \bsconf{G_1}{L_1}{\nameN \uplus \nameN_1}{\modeM}
    }{
      \bsconf{G}{L}{\nameN}{\Sblock{S_1 .. S_n}}
      \bseval
      \bsconf{G_1}{L_1\restrict L}{\nameN}{\modeM}
    }
    %%%%%% SEQUENCES %%%%%%
    %%%%%% Seq Empty %%%%%%
    \\[2em]
    \irule[SeqEmpty][seqempty]{
    }{
      \bsconf{G}{L}{\nameN}{\varepsilon}
      \bsevalseq
      \bsconf{G}{L}{\nameN}{\regk}
    }
    %%%%%% Seq Regular %%%%%%
    \\[2em]
    \irule[SeqReg][seqreg]{
      \bsconf{G}{L}{\nameN}{S_1}
      \bseval
      \bsconf{G_1}{L_1}{\nameN}{\regk}
      \\
      \bsconf{G_1}{L_1}{\nameN}{\vec S}
      \bsevalseq
      \bsconf{G_2}{L_2}{\nameN}{\modeM}
    }{
      \bsconf{G}{L}{\nameN}{S_1 , \vec S}
      \bsevalseq
      \bsconf{G_2}{L_2}{\nameN}{\modeM}
    }
    %%%%%% Seq Not Regular %%%%%%
    \\[2em]
    \irule[SeqIrreg][seqirreg]{
      \bsconf{G}{L}{\nameN}{S_1}
      \bseval
      \bsconf{G_1}{L_1}{\nameN}{\modeM}
      \\
      \modeM\in\{\breakk, \continuek, \leavek\}
    }{
      \bsconf{G}{L}{\nameN}{S_1 , \vec S}
      \bsevalseq
      \bsconf{G_1}{L_1}{\nameN}{\modeM}
    }
    %%%%%% FUNCTION DEFINITIONS %%%%%%
    \\[2em]
    \irule[FunDef][fundef]{
    }{
      \bsconf{G}{L}{\nameN}{\Sfundefto{x}{\vec y}{\vec z}{\,S}}
      \bseval
      \bsconf{G}{L}{\nameN}{\regk}
    }\\[-2mm]
  \end{array}\]
  \hrule
  \[\begin{array}{l@{\;\;}l}%\\[-2mm]
  % BLOCKS NON-EMPTY
  \ssconf{\Sblock{S_1..S_n}}{L}{\nameN}
  \to
  \ssconf{\Sblock{S_1,..,S_n}_L^\nameN}{L}{\nameN \uplus \nameN_1}
    & \nameN_1 = \funsof{\Sblock{S_1..S_n}},~n>0
\\[0.5em]
    % BLOCKS EMPTY
  \ssconf{\Sblock{}}{L}{\nameN}
  \to
  \ssconf{\regk}{L}{\nameN}
  % SEQUENCES
  % REGULAR
  \\[0.5em]
  \ssconf{\Sblock{\regk,\vec S}_L^\nameN}{L_1}{\nameN_1}
  \to
  \ssconf{\Sblock{\vec S}_L^\nameN}{L_1}{\nameN_1}
  & \vec S \neq \epsilon
  % REGULAR-eps
  \\[0.5em]
  \ssconf{\Sblock{\regk}_L^\nameN}{L_1}{\nameN_1}
  \to
  \ssconf{\regk}{L_1\restrict L}{\nameN}
  % IRREGULAR
  \\[0.5em]
  \ssconf{\Sblock{\modeM,\vec S}_L^\nameN}{L_1}{\nameN_1}
  \to
  \ssconf{\modeM}{L_1\restrict L}{\nameN}
  & \modeM \in \{\breakk,\continuek,\leavek\}
  % \\
  % & \text{or } \modeM = \regk \text{ and } \vec S = \epsilon

  % FUNCTION DEFINITION
  \\[0.5em]
  \ssconf{\Sfundef{x}{\vec y}{\vec z}{\,S}}{L}{\nameN}
  \to
  \ssconf{\regk}{L}{\nameN}
  \end{array}\]\vspace{-5mm}
  \caption{Big-step (top) and small-step (bottom) block semantics.}\label{fig:sem:blocks}
\end{figure}

{
The first rules are those for blocks, shown in \cref{fig:sem:blocks}. Blocks sequentially execute statements and extend the namespace of variables and functions.
In the big-step semantics we handle function scoping in the \iref{block} rule, where the existing function namespace $\nameN$ is extended by the functions defined in the block (given by $\nameN_1=\funsof{\Sblock{S_1 .. S_n}}$), and block statements are evaluated under $\nameN\uplus\nameN_1$. Moreover, any new variables defined within the scope (discussed in the following subsection), are pruned in the resulting configuration of the rule by $L_1\restrict L$, which restricts the variable environment $L_1$ to only the variables in $L$.
Rules \iref{seqempty}, \iref{seqreg}, and \iref{seqirreg} handle the sequencing of statement execution, with the last one skipping all remaining statements in a sequence if a Yul mode other than $\regk$ is encountered, which causes early exit from a loop iteration, an entire loop, or a function body.
Function declarations are ignored by rule \iref{fundef}, as these are handled by the first rule.
}

On the small-step side, we make use of a context $\Sblock{\dots}_L^\nameN$ that remembers the variable and function namespaces to handle scoping.
{This context is introduced when a block is entered and removed (with the appropriate scope restriction) when exiting the block.
Note that, in this semantics and the Yul specification, functions do not need to be declared before being called, as long as they are in the same, or surrounding, block of the call.
}
%These are innocuous to proofs as they are consumed symmetrically upon returning from the execution of a block.

{
\begin{remark}[Block optimisation]
  It is sound to simplify any
  $\Sblock{\ldots\Sblock{\Sblock{\cdot}_{L_1}^\nameN}_{L_2}^\nameN\ldots}_{L_n}^\nameN$
  to $\Sblock{\cdot}_L^\nameN$, provided $\dom{L_1} = \ldots = \dom{L_n}$ (cf.\ Corollary~\ref{cor:block:drop}), avoiding redundant block-exit reductions.
However, this may not provide a performance gain as calculating domain equality can be costlier than the additional reductions.
\end{remark}
}

\subsection{Variable Manipulation Statements}

\begin{figure}[t]
  \[\begin{array}{@{}cllll@{}}
    %%%%%% VARIABLE DECLARATIONS %%%%%%
    \irule[VarDecl][vardecl]{
      \bsconf{G}{L}{\nameN}{M}
      \bsevalexp
      \bsconf{G_1}{L_1}{\nameN}{v}
      \\
      x \not\in \dom{L}
    }{
      \bsconf{G}{L}{\nameN}{\Svardecl{x}{M}}
      \bseval
      \bsconf{G_1}{L_1[x \mapsto v]}{\nameN}{\regk}
    }
    %%%%%% ASSIGNMENTS %%%%%%
    % \\[2em]
    % \irule[VarAssign][assign]{
    %   \bsconf{G}{L}{\nameN}{M}
    %   \bsevalexp
    %   \bsconf{G_1}{L_1}{\nameN}{v}
    %   \\
    %   x \in \dom{L}
    % }{
    %   \bsconf{G}{L}{\nameN}{\Sassign{x}{M}}
    %   \bseval
    %   \bsconf{G_1}{L_1[x \mapsto v]}{\nameN}{\regk}
    % }
    % %%%%%% TUPLE DECLARATIONS %%%%%%
    \\[2em]
    \irule[TupleDecl][tvardecl]{
      \bsconf{G}{L}{\nameN}{M}
      \bsevalexp
      \bsconf{G_1}{L_1}{\nameN}{\tuple{\vec v}}
      \\
      \vec x \not\in \dom{L}
    }{
      \bsconf{G}{L}{\nameN}{\Svardecl{\vec x}{M}}
      \bseval
      \bsconf{G_1}{L_1[\vec x \mapsto \vec v]}{\nameN}{\regk}
    }
    % %%%%%% TUPLE ASSIGNMENTS %%%%%%
    % \\[2em]
    % \irule[TupleAssign][tassign]{
    %   \bsconf{G}{L}{\nameN}{M}
    %   \bsevalexp
    %   \bsconf{G_1}{L_1}{\nameN}{\tuple{\vec v}}
    %   \\
    %   \vec x \in \dom{L}
    % }{
    %   \bsconf{G}{L}{\nameN}{\Sassign{\vec x}{M}}
    %   \bseval
    %   \bsconf{G_1}{L_1[\vec x \mapsto \vec v]}{\nameN}{\regk}
    % }
  \end{array}\]
  \hrule\vspace{2mm}
  \[\begin{array}{l@{\;\;}l}

  % VARIABLE DECLARATION FOR TUPLES
  \ssconf{\Svardecl{\vec x}{\tuple{\vec v}}}{L}{\nameN}
  \to
  \ssconf{\regk} {L[\vec x \mapsto \vec v]}{\nameN}
  & \vec x \not\in\dom{L}

  % ASSIGNMENTS FOR TUPLES
  % \\[0.5em]
  % \ssconf{\Sassign{\vec x}{\tuple {\vec v}}}{L}{\nameN}
  % \to
  % \ssconf{\regk} {L[\vec x \mapsto \vec v]}{\nameN}
  % & \vec x \in\dom{L}

  % VARIABLE DECLARATION
  \\[0.5em]
  \ssconf{\Svardecl{x}{v}}{L}{\nameN}
  \to
  \ssconf{\regk} {L[x \mapsto v]}{\nameN}
  & x \not\in\dom{L}

  % ASSIGNMENTS
  % \\[0.5em]
  % \ssconf{\Sassign{x}{v}}{L}{\nameN}
  % \to
  % \ssconf{\regk} {L[x \mapsto v]}{\nameN}
  % & x \in\dom{L}
  \end{array}\]\vspace{-3mm}
  \caption{Big-step (top) and small-step (bottom) declaration semantics. Similar rules for assignments (e.g.\ $\Sassign{\vec x}{\tuple {\vec v}}$, with $\vec x\in\dom{L}$) omitted for economy.}\label{fig:sem:vars}
\end{figure}

{
Single- and multi-value variable declaration are handled by the rules in \cref{fig:sem:vars}.
Assignment rules are similar, with the only differences being that the statement is missing the $\letk$-keyword, and the side condition in all rules requires $x \in\dom{L}$ (or $\vec x \in\dom{L}$) instead of $x\not\in\dom{L}$ (resp., $\vec x\not\in\dom{L}$).
These rules behave as expected: they add or update $L$ bindings.
Unlike functions, variables must be declared before use.
Only function and opcode calls may return multiple values.
}
%Note assignments operate only on expressions that evaluate to values.
%Of these, tuples occur only transiently after a function call and are not part of the source-level syntax.

\subsection{Branching Statements}

\begin{figure}[t]
  \[\begin{array}{@{}cllll@{}}
    %%%%%% IF CONDITION %%%%%%
    %%%%%% false %%%%%%
    \irule[IfF][iffalse]{
      \bsconf{G}{L}{\nameN}{M}
      \bsevalexp
      \bsconf{G_0}{L_0}{\nameN}{\false}
    }{
      \bsconf{G}{L}{\nameN}{\Scond{M}{{ S}}}
      \bseval
      \bsconf{G_0}{L_0}{\nameN}{\regk}
    }
    %%%%%% true %%%%%%
    \\[2em]
    \irule[IfT][iftrue]{
      \bsconf{G}{L}{\nameN}{M}
      \bsevalexp
      \bsconf{G_0}{L_0}{\nameN}{\true}
      \\
      \bsconf{G_0}{L_0}{\nameN}{{ S}}
      \bseval
      \bsconf{G_1}{L_1}{\nameN}{\modeM}
    }{
      \bsconf{G}{L}{\nameN}{\Scond{M}{{ S}}}
      \bseval
      \bsconf{G_1}{L_1}{\nameN}{\modeM}
    }
    %%%%%% SWITCH %%%%%%
    %%%%%% Default %%%%%%
    \\[2em]
    \irule[SwD][switchdef]{
      \bsconf{G}{L}{\nameN}{M}
      \bsevalexp
      \bsconf{G_0}{L_0}{\nameN}{v}
      \\
      \bsconf{G_0}{L_0}{\nameN}{S_d}
      \bseval
      \bsconf{G_1}{L_1}{\nameN}{\modeM}
    }{
      \bsconf{G}{L}{\nameN}{\Sswitch{M}{C_1 .. C_n \defaultk S_d}}
      \bseval
      \bsconf{G_1}{L_1}{\nameN}{\modeM}
    }
    %%%%%% Case %%%%%%
    \\[2em]
    \irule[SwC][switchcase]{
      \bsconf{G}{L}{\nameN}{M}
      \bsevalexp
      \bsconf{G_0}{L_0}{\nameN}{c_{n+1}}
      % \\\\
      % C_i = \casek\,c_i S_i
      % \\
      % v \not\in \{c_1 ,..,c_k\}
      \\
      % v = c_{k+1}
      % \\
      % 0 \leq k < n
      % \\\\
      \bsconf{G_0}{L_0}{\nameN}{S_{n+1}}
      \bseval
      \bsconf{G_1}{L_1}{\nameN}{\modeM}
    }{
      \bsconf{G}{L}{\nameN}{\Sswitch{M}{C_1 .. C_{n+1}.. \defaultk S_d}}
      \bseval
      \bsconf{G_1}{L_1}{\nameN}{\modeM}
    }\\[-2mm]
  \end{array}\]
  \hrule
  \[\begin{array}{l@{\;\,}ll}
\\[-2mm]
  % IF FALSE
  \ssconf{\Scond{\false}{\,S}}{L}{\nameN}
  \to
  \ssconf{\regk}{L}{\nameN}
  % IF TRUE
  \qquad\qquad
  \ssconf{\Scond{\true}{\,S}}{L}{\nameN}
  \to
  \ssconf{S}{L}{\nameN}

  % CASE DEFAULT
  \\[0.5em]
  \ssconf{\Sswitch{v\,}{C_1 .. C_n \defaultk\, S_d}}{L}{\nameN}
  % & C_i = \casek\,c_i S_i
  % \\\quad
  \to
  \ssconf{S_d}{L}{\nameN}
  % & v \not\in \{c_1 , .. , c_n\}
  % CASE MATCH
  \\[0.5em]
  \ssconf{\Sswitch{c_{n+1}\,}{C_1 .. C_{n+1}..  \defaultk\, S_d}}{L}{\nameN}
  % & C_i = \casek\,c_i S_i
  % \\\quad
  \to
  \ssconf{S_{n+1}}{L}{\nameN}
  % & v \not\in \{c_1 , .. , c_k\}
  %   \land
  %   (v = c_{k+1})
   % \land
   % (0 \leq k < n)

  \end{array}\]\vspace{-3mm}
\caption{Big-step (top) and small-step (bottom) branching semantics. For switch rules we assume side conditions:
      $C_i = \casek\,c_i S_i$ and $v,c_{n+1} \not\in \{c_1 ,..,c_n\}$.
}\label{fig:sem:cond}
\end{figure}

Control branching in Yul is done by conditional and switch statements.
Conditionals make use of $\true$ and $\false$, which are not necessarily boolean values (as Yul is parametric in the dialect types) but, instead,  placeholders for values that are to be interpreted correspondingly. We assume a dialect comes equipped with predicates that let us identify which values are $\true$ and $\false$.

%\begin{definition}
%Given dialect-specific predicates $\mathsf{isTrue}$ and $\mathsf{isFalse}$, we write $\true$ and $\false$ for a value $v$ such that $\mathsf{isTrue}(v)$ and $\mathsf{isFalse}(v)$, respectively.
%\end{definition}

{
The rules for these statements are given in \cref{fig:sem:cond}.
On the big-step side, we have standard \iref{iffalse} and \iref{iftrue} rules for conditionals, and \iref{switchdef} and \iref{switchcase} for case-switching. All switch statements have a default case that is called if {the deciding expression $M$ evaluates to} $v$ {which} does not match any $c_i$ in the listed cases. The same behaviour is captured by standard small-step rules.
Note that source-level syntax requires inner statements in conditional and switch statements be (potentially empty) blocks, with their own scope.
%This is done to piggyback off blocks to set up the variable and function context.
}

\subsection{Loop Statements}

\begin{figure}[t]
  \[\begin{array}{@{}cllll@{}}
    %%%%%% FOR LOOPS %%%%%%
    %%%%%% init leave %%%%%%
    \irule[ForInit][forinit]{
      \bsconf{G_1}{L_1}{\nameN}{\Sblock{S_1 .. S_n\,\Sfor{\Sblock{}}{M}{S_{p}}{ S_{b}}}}
      \bseval
      \bsconf{G_2}{L_2}{\nameN}{\modeM}
      \\
      n > 0
    }{
      \bsconf{G_1}{L_1}{\nameN}{\Sfor{\Sblock{S_1 .. S_n}}{M}{{S_{p}}}{{ S_{b}}}}
      \bseval
      \bsconf{G_2}{L_2}{\nameN}{\modeM}
    }
    %%%%%% cond false %%%%%%
    \\[2em]
    \irule[ForFalse][forfalse]{
      \bsconf{G}{L}{\nameN}{M}
      \bsevalexp
      \bsconf{G_1}{L_1}{\nameN}{\false}
    }{
      \bsconf{G}{L}{\nameN}{\Sfor{\Sblock{}}{M}{{ S_{p}}}{{ S_{b}}}}
      \bseval
      \bsconf{G_1}{L_1}{\nameN}{\regk}
    }
    %%%%%% break & leave %%%%%%
    \\[2em]
    \irule[ForHalt1][forhalt1]{
      % (\modeM_1,\modeM_2) \in \{(\breakk,\regk) , (\leavek,\leavek)\}
      % \\
      \bsconf{G}{L}{\nameN}{M}
      \bsevalexp
      \bsconf{G_1}{L_1}{\nameN}{\true}
      \\
      \bsconf{G_1}{L_1}{\nameN}{{ S_{b}}}
      \bseval
      \bsconf{G_2}{L_2}{\nameN}{\modeM_{lb}}
    }{
      \bsconf{G}{L}{\nameN}{\Sfor{\Sblock{}}{M}{{ S_{p}}}{{ S_{b}}}}
      \bseval
      \bsconf{G_2}{L_2}{\nameN}{\modeM_{lb}'}
    }
    %%%%%% loop leave %%%%%%
    \\[2em]
    \irule[ForHalt2][forhalt2]{
      \bsconf{G}{L}{\nameN}{M}
      \bsevalexp
      \bsconf{G_1}{L_1}{\nameN}{\true}
      \\
      \bsconf{G_1}{L_1}{\nameN}{{ S_{b}}}
      \bseval
      \bsconf{G_2}{L_2}{\nameN}{\modeM_{\neq lb}}
      % \\\modeM \not\in \{\leavek , \breakk\}
      \\
      \bsconf{G_2}{L_2}{\nameN}{{ S_{p}}}
      \bseval
      \bsconf{G_3}{L_3}{\nameN}{\leavek}
    }{
      \bsconf{G}{L}{\nameN}{\Sfor{\Sblock{}}{M}{{ S_{p}}}{{ S_{b}}}}
      \bseval
      \bsconf{G_3}{L_3}{\nameN}{\leavek}
    }
    %%%%%% loop  %%%%%%
    \\[2em]
    \irule[ForLoop][forloop]{
      \bsconf{G}{L}{\nameN}{M}
      \bsevalexp
      \bsconf{G_1}{L_1}{\nameN}{\true}
      \\
      \bsconf{G_1}{L_1}{\nameN}{{ S_{b}}}
      \bseval
      \bsconf{G_2}{L_2}{\nameN}{\modeM_{\neq lb}}
      % \\\modeM \not\in \{\leavek , \breakk\}
      \\
      \bsconf{G_2}{L_2}{\nameN}{{ S_{p}}}
      \bseval
      \bsconf{G_3}{L_3}{\nameN}{\regk}
      \quad
      \bsconf{G_3}{L_3}{\nameN}{\Sfor{\Sblock{}}{M}{{ S_{p}}}{{ S_{b}}}}
      \bseval
      \bsconf{G_4}{L_4}{\nameN}{\modeM}
    }{
      \bsconf{G}{L}{\nameN}{\Sfor{\Sblock{}}{M}{{ S_{p}}}{{ S_{b}}}}
      \bseval
      \bsconf{G_4}{L_4}{\nameN}{\modeM}
    }
  \end{array}\]
  \hrule
  \[\begin{array}{l@{\;\;}l}
  % FOR LOOPS
  % INIT
      \ssconf{\Sfor{\Sblock{S_1 .. S_n}}{M}{{S_{p}}}{{ S_{b}}}}{L}{\nameN}
%  \\\quad
  \to
  \ssconf{
        \Sblock{S_1 .. S_n\,\Sfor{\Sblock{}}{M}{S_{p}}{ S_{b}}}
  }{L}{\nameN}
& n>0
  % CONDITION
  \\[0.5em]
  \ssconf{\Sfor{\Sblock{}}{M}{{S_{p}}}{{ S_{b}}}}{L}{\nameN}
%  \\\quad
  \to
  \ssconf{
    \ssbreak{
    \Scond{M}{
    \Sblock{
      \sscont{S_b} S_p
      \Sfor{\Sblock{}}{M}{{S_{p}}}{{ S_{b}}}}
    }
    }
  }{L}{\nameN}

  % CONTINUE
  \\[0.5em]
  \ssconf{\sscont{\modeM}}{L}{\nameN}
  \to
  \ssconf{\modeM}{L}{\nameN}
  & \modeM \in \{\regk,\breakk,\leavek\}
  \\[0.5em]
  \ssconf{\sscont{\continuek}}{L}{\nameN}
  \to
  \ssconf{\regk}{L}{\nameN}

  % BREAK
  \\[0.5em]
  \ssconf{\ssbreak{\modeM}}{L}{\nameN}
  \to
  \ssconf{\modeM}{L}{\nameN}
  & \modeM \in \{\regk,\leavek\}
  \\[0.5em]
  \ssconf{\ssbreak{\breakk}}{L}{\nameN}
  \to
  \ssconf{\regk}{L}{\nameN}

  \end{array}\]\vspace{-5mm}
\caption{Big-step (top) and small-step (bottom) loop semantics. Side conditions: $(\modeM_{lb},\modeM_{lb}')\in\{(\leavek,\leavek),(\breakk,\regk)\}$, whereas
$\modeM_{\neq lb}\in\{\regk,\continuek\}$.}\label{fig:sem:loop}
\end{figure}

Loops in Yul consist of four parts: an initialisation block, a condition, a post-iteration block, and a loop-body block. In well-formed Yul programs \breakk and \continuek may appear only in a loop body. On the other hand, \leavek may appear in any part of a loop, so long as the loop itself occurs inside a function.
The semantics for loops is given in \cref{fig:sem:loop}.

Here we once again make use of the semantics of blocks, this time to handle the sequencing of loops. \iref{forinit} refactors the initialisation block out of the loop and places the whole loop in a new block that prefixes the initialisation statements\,---\,the small-step rule mirrors this exactly. \iref{forhalt2} handles break and leave statements within the loop body, while \iref{forhalt2} handles leave statements within the post-iteration. Note that \continuek is handled implicitly by the big-step rules; we piggyback off \iref{seqirreg} in \cref{fig:sem:blocks} to exit out of the loop body if \continuek -- i.e. an irregular mode\,---\,is encountered. The small-step side handles irregular modes more explicitly:
%This is because we cannot look ahead to the end of the evaluation of a sub-statement.
special evaluation contexts $\ssbreak{\dots}$ and $\sscont{\dots}$ are used to encode the frames that check for \breakk and \continuek respectively. Lastly, note that while \iref{forloop} directly executes the loop body and otherwise breaks with \iref{forfalse}, the small-step equivalent further leverages the semantics of conditional statements to encode this behaviour.
%This is done to simplify presentation, which would otherwise involve more contexts that behave identically to conditional contexts.
%This idiosyncrasy also explains the need for $\ssbreak{\dots}$ and $\sscont{\dots}$, without which \nt{we would need to} leverage the semantics of blocks to exit the entire loop on a \continuek, and propagate both \breakk and \continuek upwards, breaking out of all loops encountered up to the outermost loop in a stack of nested loops.

{
\begin{remark}[Break optimisation]
  Multiple small-step unrollings of a for-loop result in statements containing the pattern $\ssbreak{\Sblock{\ssbreak{\Sblock{\ssbreak{\ldots\Sblock{\ssbreak{\cdot}}_{L_n}^\nameN\ldots}}_{L_2}^\nameN}}_{L_1}^\nameN}$
  with $\dom{L_1}=\ldots=\dom{L_n}$.
Using our semantics we prove that such statements can be simplified by dropping the inner break and block contexts
  (cf.~\cref{cor:break:drop,lem:redundant}). This has brought significant speedup in the implementation of our interpreter.
  That is, for any $E$ that contains only interleavings of $\ssbreak{\cdot}$ and empty $\Sblock{\cdot}_{L_j}^{\nameN}$ contexts with the same $\nameN$ and domain in all $L_j$'s:
\begin{align*}
  &\ssconf{\ssbreak{E[\ssbreak{S'}]}}{L'}{\nameN'}
  \to^*
  \ssconf{\modeM}{L''}{\nameN''}
  \text{ iff }
  \ssconf{\ssbreak{E[{S'}]}}{L'}{\nameN'}
  \to^*
  \ssconf{\modeM}{L''}{\nameN''}
  \\
  &\ssconf{\Sblock{E[\Sblock{S'}_{L_2}^\nameN]}_{L_1}^\nameN}{L'}{\nameN'}
  \to^*
  \ssconf{\modeM}{L''}{\nameN''}
  \text{ iff }
  \ssconf{\Sblock{E[{S'}]}_{L_1}^\nameN}{L'}{\nameN'}
  \to^*
  \ssconf{\modeM}{L''}{\nameN''}
\end{align*}
%  by Corollary~\ref{cor:break:drop} the outermost $\ssbreak{\cdot}$ in $E$ subsumes all other $\ssbreak{\cdot}$ in $E$. i.e.
%In practice this means we can avoid introducing more break contexts if one already exists in the evaluation stack. This has a slight implementational benefit of smaller configurations during execution and decreases the number of reductions by skipping one break frame reduction per loop iteration.
\end{remark}
}

\subsection{Expressions}\label{section:sem:expressions}

Finally, we define the semantics of expressions, handling variable look-up, function calls and opcode calls. Because of this, they  either return values, \regk, or an external irregular mode $\mathcal{M}$ if handled. The rules are given in \cref{fig:sem:exp}.

\begin{figure}[t]
  \[\begin{array}{@{}cllll@{}}
    %%%%%% IDENTIFIERS %%%%%%
    \irule[Ident][ident]{
    }{
      \bsconf{G}{L}{\nameN}{x}
      \bsevalexp
      \bsconf{G}{L}{\nameN}{L(x)}
    }
    %%%%%% FUNCTION CALL VALUE %%%%%%
    \\[2em]
    \irule[FunCall][funcall]{
      \bsconf{G}{L}{\nameN}{M_n}
      \bsevalexp
      \bsconf{G_1}{L_1}{\nameN}{v_n}\ \dots\
      \bsconf{G_{n-1}}{L_{n-1}}{\nameN}{M_1}
      \bsevalexp
      \bsconf{G_n}{L_n}{\nameN}{v_1}
      \\
      \bsconf{G_n}{L_f}{\nameN}{{S_b}}
      \bseval
      \bsconf{G''}{L'}{\nameN}{\modeM}
      % \\
      % S_{ret} =
      % \nbox{
      % \begin{cases}
      % L'(z_1) &\text{if $m = 1$} \\
      % \tuple{L'(z_1),..,L'(z_m)} &\text{if $m > 1$} \\
      % \regk &\text{otherwise}
      % \end{cases}
      % }
    }{
      \bsconf{G}{L}{\nameN}{\Efuncall{f}{M_1 , .. , M_n}}
      \bsevalexp
      \bsconf{G''}{L_n}{\nameN}{\tuple{L'(z_1),..,L'(z_m)}}
    }
    %%%%%% OPCODE CALL %%%%%%
    \\[2em]
    \irule[OpcCall][opcall]{
      \bsconf{G}{L}{\nameN}{M_n}
      \bsevalexp
      \bsconf{G_1}{L_1}{\nameN}{v_n}
      \ \dots\
      \bsconf{G_{n-1}}{L_{n-1}}{\nameN}{M_1}
      \bsevalexp
      \bsconf{G_n}{L_n}{\nameN}{v_1}
      \\
      \opconf{\Efuncall{\opcode}{v_1 , .. , v_n}}{G_n}
      \toop
      \opconf{v'_1 , .. , v'_m}{G''}
      % \\\\
      % S_{ret} =
      % \nbox{
      % \begin{cases}
      % v'_1 &\text{if $m = 1$} \\
      % \tuple{v'_1 , .. , v'_m} &\text{if $m > 1$} \\
      % \regk &\text{otherwise}
      % \end{cases}
      % }
    }{
      \bsconf{G}{L}{\nameN}{\Efuncall{\opcode}{M_1, .. , M_n}}
      \bsevalexp
      \bsconf{G''}{L_n}{\nameN}{\tuple{v_1',..,v_m'}}
    }
  \end{array}\]
  \hrule
  \[\begin{array}{l@{\;\;}ll}

  % EXPRESSIONS
  % IDENTIFIER
  \ssconf{x}{L}{\nameN}
  \to
  \ssconf{L(x)}{L}{\nameN}
  & x \in L

  % FUNCTION CALL
  \\[0.5em]
  \ssconf{\Efuncall{f}{\vec v}}{L}{\nameN}
%  \\\quad
  \to
  \ssconf{\ssframe{S_b}^{\vec z}_{L}}{L_f}{\nameN}
  %& \nameN(x) = (\vec y, \vec z, S_b)

  % FUNCTION RETURN VOID
  % \\[0.5em]
  % {
  %   \ssconf{\ssframe{\modeM}^{\epsilon}_L,\vec S}{L'}{\nameN}
  % }
  % {
  % \to
  % \ssconf{\regk}{L}{\nameN}
  % }
  % & \modeM \in \{\leavek,\regk\}

  % FUNCTION RETURN
  % \\[0.5em]
  % \ssconf{\ssframe{\modeM}^{z}_L}{L'}{\nameN}
  % \to
  % \ssconf{L'(z)}{L}{\nameN}
  % & \modeM \in \{\leavek,\regk\}

  % FUNCTION RETURN TUPLE
  \\[0.5em]
  \ssconf{{\ssframe{\modeM}^{z_1 , .. , z_m}_L}}{L'}{\nameN}
%  \\\quad
  \to
  \ssconf{\tuple{L'(z_1), .. ,L'(z_m)}}{L}{\nameN}
  & \modeM \in \{\leavek,\regk\}%,~ n>1

  % % OPCODE EMPTY
  % \\[0.5em]
  % \opconf{\Efuncall{\opcode}{\vec v}}{G;L;\nameN}
  % \to
  % \opconf{\regk}{G';L;\nameN}
  % \qquad
  % & \text{if }
  % \opconf{\Efuncall{\opcode}{\vec v}}{G}
  % \toop
  % \opconf{\varepsilon}{G'}

  % % OPCODE VALUE
  % \\[0.5em]
  % \opconf{\Efuncall{\opcode}{\vec v}}{G;L;\nameN}
  % \to
  % \opconf{{v}}{G';L;\nameN}
  % \qquad
  % & \text{if }
  % \opconf{\Efuncall{\opcode}{\vec v}}{G}
  % \toop
  % \opconf{v}{G'}

  % OPCODE TUPLE
  % \text{ and }  \mid \vec v \mid  > 1

  \\[0.5em]
    {
  \opconf{\Efuncall{\opcode}{\vec v}}{G;L;\nameN}
  \to
  \opconf{S'}{G';L;\nameN}
  \qquad
    }  &
    \nbox{
    \text{if }
    \opconf{\Efuncall{\opcode}{\vec v}}{G}
    \toop
    \opconf{S'}{G'}
    \\\text{and }
    S'\in\{\tuple{\vec v}, \mathcal M\} }
%  % TOP LEVEL OPCODE
%  \\[0.5em]
%  \opconf{\Efuncall{\opcode}{\vec M}}{G;L;\nameN}
%  \to
%  \opconf{\mathcal M}{G';L;\nameN}
%  \qquad
%  & \text{if }
%  \opconf{\Efuncall{\opcode}{\vec M}}{G}
%  \toop
%  \opconf{\mathcal M}{G'}
%
  \end{array}\]\vspace{-6mm}
\caption{Big-step (top) and small-step (bottom) expression semantics. Side-conditions:
    {$\nameN(f) = ((y_1,..,y_n),(z_1,..,z_m),S_{b})$,
    $L_f = \{(y_1,..,y_n) \mapsto (v_1, .. , v_n)\} \uplus \{\{z_1,..,z_m\} \mapsto 0\}$.}
 For uniformity, we slightly abuse tuple notation to write $\tuple{v}=v$ and $\tuple{}=\regk$.}\label{fig:sem:exp}
\end{figure}
\iref{ident} and its small-step counterpart behave as expected; they look up the value of $x$ within $L$. Function calls on the other hand are less standard. Instead of substitution for function application, calls assign in $L$ the correct value for every argument and then leverage the semantics of blocks to set up the namespace. Similarly, for returns, instead of a return statement, functions assign to the return variables and look up their values in the local $L$, again leveraging block semantics to clean up the namespace. Note that in \iref{funcall}, functions may return either a sequence of values if there are return variables, or \regk if the function is not expected to return anything (analogous to $\mathsf{void}$ functions in C-like languages). \iref{opcall} mirrors function calls. Note that we do not handle external modes in the big-step semantics. For the small-step transitions involving function calls we introduce a special context $\ssframe{\dots}^{\vec z}_{L}$. Much like with other special contexts introduced in the small-step semantics, its purpose is to encode within the evaluation context all the frames that form the call stack, including the local state $L$ to reinstate and the return variables $\vec z$.

{
\begin{remark}[Void return and modes]\label{remark:void:functions}
The evaluation function provided in the Yul specification~\cite{yul} is undefined for void function calls. This introduces ambiguity in the
intended details of the Yul semantics.
%  interpretation of modes. For instance, one could interpret modes to be side conditions (flags) to void-type returns unless specified.
  We decided to resolve this in a
  consistent manner to~\cite{yul}, making modes a subset of statements, produced by evaluation, and for \regk to be returned by void functions.
  Since such functions are called in block statements, it was necessary that they return \regk in order for the evaluation of subsequent block statements to occur.
\end{remark}
}
%%% Local Variables:
%%% mode: latex
%%% TeX-master: "../main"
%%% End:

  \section{Semantic Equivalence}
  \label{sec:proof}
  Here we show that the big-step and small-step semantics are equivalent.
A necessary result is a congruence lemma for small-step semantics.

%============== CONGRUENCE ==============
\begin{lemma}[Congruence]\label{lem:congruence}
Given any evaluation context $E$ and a reduction sequence
$\bsconf{G}{L}{\nameN}{S} \to^* \bsconf{G'}{L'}{\nameN'}{S'}$
then it is the case that:
\begin{align*}
&\bsconf{G}{L}{\nameN}{S} \to^* \bsconf{G'}{L'}{\nameN'}{S'}
\implies
\bsconf{G}{L}{\nameN}{E[S]} \to^* \bsconf{G'}{L'}{\nameN'}{E[S']}
\end{align*}
\end{lemma}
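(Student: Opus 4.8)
The plan is to prove the statement by a simple induction on the number $k$ of steps in the reduction $\bsconf{G}{L}{\nameN}{S}\to^{k}\bsconf{G'}{L'}{\nameN'}{S'}$. If $k=0$, the source and target configurations coincide, so $E[S]=E[S']$ and the zero-step reduction witnesses the conclusion. If $k=n+1$, write the reduction as $\bsconf{G}{L}{\nameN}{S}\to\bsconf{G_1}{L_1}{\nameN_1}{S_1}\to^{n}\bsconf{G'}{L'}{\nameN'}{S'}$. The first of the two top-level rules of \cref{def:small-step}\,---\,the one that lifts a transition through an evaluation context\,---\,applies to the initial step with context $E$, yielding $\bsconf{G}{L}{\nameN}{E[S]}\to\bsconf{G_1}{L_1}{\nameN_1}{E[S_1]}$; the induction hypothesis, applied to the remaining $n$-step reduction and the same $E$, gives $\bsconf{G_1}{L_1}{\nameN_1}{E[S_1]}\to^{n}\bsconf{G'}{L'}{\nameN'}{E[S']}$; concatenating the two sequences proves the claim.

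The step that needs a moment's care\,---\,and the only candidate for an obstacle\,---\,is justifying that the context rule may be applied to an \emph{arbitrary} single step $\bsconf{G}{L}{\nameN}{\hat S}\to\bsconf{\hat G}{\hat L}{\hat{\nameN}}{\hat S'}$, irrespective of whether that step is one of the local reductions in the figures, a step already lifted through an inner context, or an irregular-mode abort, i.e.\ $\hat S=E''[\mathcal M]$ and $\hat S'=\mathcal M$. This is sound because the premise of the context rule is the \emph{full} reduction relation $\to$, and because the rule leaves the state components untouched: it passes $(G,L,\nameN)$ in and $(\hat G,\hat L,\hat{\nameN})$ out exactly as the inner step does, so every side condition attached to the inner step (such as $x\in\dom{L}$ for identifier look-up, or the domain restrictions $L_1\restrict L$ at block and frame exit, which depend only on $L$, $\nameN$, and the annotations recorded inside the redex) still holds verbatim once the redex is placed under $E$. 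Hence the lemma is, in essence, the transitive closure of the built-in congruence rule, and the argument above is the whole proof. A slightly longer alternative that does not lean on the precise formulation of the context rule would prove single-step congruence by structural induction on $E$ instead; that variant additionally requires the closure property that the composition $E\circ E'$ of two evaluation contexts is again an evaluation context\,---\,an easy mutual induction on the grammars of $\StmtC$ and $\ExpC$ in \cref{fig:yul2}\,---\,and then finishes with the same induction on reduction length.
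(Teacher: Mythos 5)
Your proposal is correct and follows exactly the paper's route: the paper proves \cref{lem:congruence} by induction on the length of the reduction sequence, discharging each single step via the first top-level rule of \cref{def:small-step}, which is precisely your argument (the paper simply omits the details you spell out). Your observation that the context rule's premise is the full relation $\to$, so it lifts even already-contextualised or abort steps, is the right justification and needs no further machinery.
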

%============== PROOF ==============
\begin{proof}
This follows by induction on the length of the reduction.
\qed
\end{proof}

%In this section we prove semantic equivalence between our small-step and big-step semantics (\cref{theorem:equiv}). We do this via two auxiliary lemmas: semantic entailment from big-step to small-step (\cref{lem:1}) and semantic entailment from small-step to big-step (\cref{lem:2}).

%============== THEOREM ==============
\begin{theorem}[Semantics Equivalence]\label{theorem:equiv}
Given statement $S$ in source syntax, global environment $G$, local state $L$, and function namespace $\nameN$,
  then, for a configuration $\bsconf{G}{L}{\nameN}{S}$, statements (1) and (2) below are equivalent.
\begin{enumerate}[(1)]
\item $\bsconf{G}{L}{\nameN}{S} \bseval \bsconf{G'}{L'}{\nameN'}{\regk}$
\item $\bsconf{G}{L}{\nameN}{S} \to^* \bsconf{G'}{L'}{\nameN'}{\regk}$
\end{enumerate}
\end{theorem}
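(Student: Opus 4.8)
The plan is to prove the two implications separately, but first to \emph{strengthen what is being proved} so that the inductions close. Three generalisations are needed. (i)~The big-step semantics is really three mutually-recursive judgments\,---\,$\bseval$ on statements, $\bsevalseq$ on sequences, $\bsevalexp$ on expressions\,---\,so the statement must range over all three at once. (ii)~Intermediate big-step results are arbitrary modes $\modeM$, not just $\regk$ (a block or a loop body may exit with $\breakk$, $\continuek$ or $\leavek$), and expressions return $v$, $\tuple{\vec v}$ or $\regk$; so for each judgment I would prove the biconditional with its small-step analogue ($\ssconf{S}{G;L}{\nameN}\to^*\ssconf{\modeM}{G';L'}{\nameN'}$, and correspondingly for sequences and expressions) for \emph{all} admissible results. (iii)~Small-step reduction of a source statement transits through runtime syntax ($\Sblock{\cdot}_L^\nameN$, $\ssbreak{\cdot}$, $\sscont{\cdot}$, $\ssframe{\cdot}_L^{\vec z}$, tuples) that never occurs in a source-level big-step derivation, so I would add the evident big-step rules for these runtime constructs\,---\,a scoping frame behaves like \iref{block} but restores the recorded $L$, $\nameN$; $\ssbreak{\cdot}$/$\sscont{\cdot}$ turn $\breakk$/$\continuek$ into $\regk$ and pass other modes through; $\ssframe{\cdot}_L^{\vec z}$ behaves like the tail of \iref{funcall}\,---\,and prove the equivalence over all such (runtime) configurations. \cref{theorem:equiv} is then the instance $\modeM=\regk$ on a source statement, which contains no runtime syntax.

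For \emph{soundness} (direction (1)$\Rightarrow$(2)) I would use simultaneous induction on the big-step derivation. Each big-step rule is simulated by a bounded block of small-step steps: the induction hypotheses convert the premises' sub-derivations into reduction sequences for the immediate subterms; the congruence lemma (\cref{lem:congruence}) lifts those into the surrounding evaluation context; and the administrative reductions (pushing/popping a $\Sblock{\cdot}_L^\nameN$ or $\ssframe{\cdot}_L^{\vec z}$ frame, discharging $\ssbreak{\cdot}$/$\sscont{\cdot}$, the two conditional rules) splice the pieces together, reaching the same final configuration by inspection of the side conditions (the $\restrict$ restrictions, the $L_f$ setup, etc.). The only awkward rules are the loop rules: small-step rewrites a for-loop into an $\ssbreak{\cdot}$-guarded conditional that rebuilds the loop and wraps the body in $\sscont{\cdot}$, so \iref{forinit}, \iref{forfalse}, \iref{forhalt1}, \iref{forhalt2} and \iref{forloop} must each be matched against exactly one unfolding of that encoding\,---\,in particular reconciling that $\continuek$ is absorbed implicitly in big-step (by \iref{seqirreg}) but explicitly by a $\sscont{\cdot}$ frame in small-step.

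For \emph{completeness} (direction (2)$\Rightarrow$(1)) I would use strong induction on the length of the reduction $\ssconf{S}{G;L}{\nameN}\to^*\ssconf{\regk}{G';L';\nameN'}$ (generalised as above to all results). The crucial auxiliary is a \emph{factorisation lemma}: since $\to$ is deterministic and every non-terminal term decomposes \emph{uniquely} as $E[R]$ with $R$ a redex (the context grammar of \cref{fig:yul2} is unambiguous), any reduction of $E[S]$ to a result factors through an intermediate term $E[r]$, where $r$ is the value, tuple or mode produced by reducing $S$ to completion; hence the subcomputation reducing $S$ is itself a strictly shorter reduction to which the induction hypothesis applies. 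Given this, case analysis on $S$ reduces each clause to the shape of a big-step rule\,---\,e.g.\ for a source block one peels off $\nameN_1$, applies the IH to the body sequence under $\Sblock{\cdot}_L^\nameN$, and reads off \iref{block}; for a function call one applies the IH to the argument expressions right-to-left and then to the body under $\ssframe{\cdot}_L^{\vec z}$, matching \iref{funcall}. Note also that no opcode step returning an external mode $\mathcal M$ can occur on the way to $\regk$, since $\mathcal M$ is terminal, so those clauses are vacuous here\,---\,consistently with big-step being defined only on successful runs.

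The main obstacle is completeness for loops. A reduction that has gone around a loop several times accumulates a stack of nested $\ssbreak{\cdot}$ and $\Sblock{\cdot}_{L_j}^\nameN$ contexts with equal domains, and recognising inside it a single big-step \iref{forloop} (or \iref{forhalt1}/\iref{forhalt2}) sub-derivation requires first normalising away these redundant contexts\,---\,precisely what the block/break simplification results behind the optimisation remarks (\cref{cor:block:drop,cor:break:drop,lem:redundant}) provide. Beyond that, the chief bookkeeping cost is threading the mutual recursion between statements, sequences and expressions through both inductions uniformly, and picking measures that actually decrease: reduction length suffices for the outer completeness induction, but soundness on loops and the factorisation lemma need a well-founded order on (runtime) terms, since the loop encoding re-creates a for-loop and so does not shrink term size.
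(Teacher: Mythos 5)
Your proposal is correct in its essentials and matches the paper's skeleton: strengthen the claim from $\regk$ to a general result $S_{ret}\in\{v,\tuple{\vec v},\modeM\}$ across the three mutually recursive judgments, prove soundness by induction on the big-step derivation using the congruence lemma (\cref{lem:congruence}) to lift subcomputations into evaluation contexts, and prove completeness by induction on the length of the reduction sequence, factoring the trace through the intermediate configurations dictated by unique decomposition. Where you diverge is in the auxiliary machinery. First, you propose extending the big-step relation with rules for the runtime constructs ($\Sblock{\cdot}_L^\nameN$, $\ssbreak{\cdot}$, $\sscont{\cdot}$, $\ssframe{\cdot}_L^{\vec z}$); the paper avoids this entirely by arranging the factorisation so that the induction hypothesis is only ever applied to sub-traces whose endpoints carry \emph{source-level} terms ($M_i$, $S_b$, $S_p$, the residual for-loop), discharging the surrounding runtime frames one administrative step at a time --- this keeps the big-step relation untouched at the cost of slightly longer case analyses. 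Second, you enlist the redundant-context results (\cref{lem:redundant}, \cref{cor:block:drop}, \cref{cor:break:drop}) to normalise the nested $\ssbreak{\cdot}$/$\Sblock{\cdot}_{L_j}^\nameN$ frames accumulated by loop unrolling; in the paper these lemmas serve only the interpreter optimisations, and the equivalence proof never needs them, because each recursive application of the completeness argument peels off exactly one layer of wrapping before invoking the IH on the bare inner loop (the domain bookkeeping is instead handled by \cref{lem:bound:l} and \cref{lem:l:invariance:loops}, together with the syntactic restriction \cref{def:halt:restrict} and \cref{lem:break:cont} to rule out $\breakk$/$\continuek$ escaping a loop). Your explicit factorisation-through-determinism lemma is a genuine improvement in rigour over the paper, which asserts the shape of the trace without proof; conversely, your worry that soundness for loops needs a well-founded order on terms is unfounded --- structural induction on the big-step derivation already handles \iref{forloop}, since its recursive premise is a strict sub-derivation regardless of term size.
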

%== PROOF ==
\begin{proof}
We show something stronger, whereby reduction is not necessarily to $\regk$ but to some $S_{ret}$ taken from:
  $
S_{ret}\ ::=\ v\mid \tuple{\vec v}\mid \mathbb{M}
  $.

  The proof of the forward direction is by structural induction on the derivation of the reduction (\cref{appendix:entailment:1})
    and the reverse is by induction on the number of steps in the reduction sequence
(\cref{appendix:entailment:2}). 
Here we show the case of function application.
  We refer to \cref{lem:congruence} by (CG).

\noindent
\textbf{Forward direction.}
Let $S = \Efuncall{f}{M_1 , .. , M_n}$ and suppose
$\nameN(f) = (\vec y,\vec z,S_{b})$ and
$L_f = \{\vec y \mapsto \vec v\} \uplus \{\{\vec z\} \mapsto 0\}$.  
We assume~(1):
\[
  \irule[FunCall][funcall]{
    \bsconf{G}{L}{\nameN}{M_n}
    \bsevalexp
    \bsconf{G_1}{L_1}{\nameN}{v_n}\\ \dots\
    \bsconf{G_{n-1}}{L_{n-1}}{\nameN}{M_1}
    \bsevalexp
    \bsconf{G_n}{L_n}{\nameN}{v_1}
    \\
    \bsconf{G_n}{L_f}{\nameN}{{S_b}}
    \bseval
    \bsconf{G'}{L'}{\nameN}{\modeM}
    % \\
    % S_{ret} =
    % \nbox{
    % \begin{cases}
    % L'(z_1) &\text{if $m = 1$} \\
    % \tuple{L'(z_1),..,L'(z_m)} &\text{if $m > 1$} \\
    % \regk &\text{otherwise}
    % \end{cases}
    % }
  }{
    \bsconf{G}{L}{\nameN}{\Efuncall{f}{M_1 , .. , M_n}}
    \bsevalexp
    \bsconf{{G'}}{L_n}{\nameN}{S_{ret}}
  }
\]
with $S_{ret}$ being: $L'(z_1)$ if $ |\vec z|=1$;
$\tuple{L'(z_1),..,L'(z_m)}$ if $|\vec z| > 1$;
$\regk$                        if $|\vec z| = 0$.
% \[S_{ret} =
%   \begin{cases}
% L'(z_1)                     & \nt{|\vec z|} = 1\\
% \tuple{L'(z_1),..,L'(z_m)} & \nt{|\vec z|} > 1\\
% \regk                        & \nt{|\vec z|} = 0
% \end{cases}
% \]
We then obtain~(2) as follows:
\begin{align*}
&\bsconf{G}{L}{\nameN}{\Efuncall{f}{M_1 , .. , M_n}} \\
&\to^*\bsconf{G_1}{L_1}{\nameN}{\Efuncall{f}{M_1 , .. M_{n-1}, v_n}}
&&\quad\text{by CG and IH$\bsconf{G}{L}{\nameN}{M_n}$}\\
%&\to^*\bsconf{G_2}{L_2}{\nameN}{\Efuncall{x}{M_1 , .. v_{n-1}, v_n}}
%&&\quad\text{by CG and IH$\bsconf{G_1}{L_1}{\nameN}{M_{n-1}}$}\\
&\dots\\
&\to^*\bsconf{G_{n}}{L_{n}}{\nameN}{\Efuncall{f}{v_1 , .. , v_n}}
&&\quad\text{by CG and IH$\bsconf{G_{n-1}}{L_{n-1}}{\nameN}{M_1}$}\\
&\to\bsconf{G_{n}}{L_f}{\nameN}{\ssframe{S_b}^{\vec z}_{L}}
&&\quad\text{where $\nameN(f),L_f$ as above}\\
&\to^*\bsconf{G'}{L'}{\nameN}{\ssframe{\modeM}^{\vec z}_{L_n}}
&&\quad\text{by CG and IH$\bsconf{G_{n}}{L_f}{\nameN}{S_b}$}\\
&\to\bsconf{G'}{L_n}{\nameN}{S_{ret}}
\end{align*}

\noindent
\textbf{Reverse direction.}
For a term $S = \Efuncall{f}{M_1, .., M_n}$, an evaluation to $S_{ret}$ as in~(2) requires a trace:
\begin{align*}
&\bsconf{G_0}{L_0}{\nameN}{\Efuncall{f}{M_1 , .. , M_n}}
%\to^*
%\bsconf{G_1}{L_1}{\nameN}{\Efuncall{x}{M_1 , .. , v_n}}
\to^*
\bsconf{G_{n}}{L_{n}}{\nameN}{\Efuncall{f}{v_1 , .. , v_n}}\\
&
\to^*
\bsconf{G_{n}}{L_f}
%{\{ \vec y \mapsto \vec v \} \uplus \{ \vec z \mapsto 0 \}}
{\nameN}
  {\ssframe{S_b}^{\vec z}_{L_n}}%&\nameN(x),L_f\text{ as above}\\ %= (\vec y, \vec z, S_b)\\
\to^*
\bsconf
{G'}
{L''}
{\nameN}
{\ssframe{\modeM}^{\vec z}_{L_n}}
\to^*
\bsconf
{G'}
{L_n}
{\nameN}
{S_{ret}}%&\text{$\vec z = z_1, .., z_m$, $m > 1$}
\end{align*}
with $\nameN(x),L_f,S_{ret}$ satisfying the same conditions as in the forward direction above, and with intermediate steps:
\begin{align*}
  &\bsconf{G_{n-i}}{L_{n-i}}{\nameN}{M_i}
  \to^*
  \bsconf{G_{n-i+1}}{L_{n-i+1}}{\nameN}{v_i} \tag{A$_i$}
  \\&
  \bsconf{G_{n}}
  {L_f}
  {\nameN}
  {S_b}
  \to^*
  \bsconf{G'}
  {L'}
  {\nameN}
  {\modeM}
  \tag{B}
\end{align*}
To obtain~(1), by IH on~($A_i$):
$
\bsconf{G_{n-i}}{L_{n-i}}{\nameN}{M_i}
\bseval
\bsconf{G_{n-i+1}}{L_{n-i+1}}{\nameN}{v_i}
$.
\\
Next, by IH on~(B):
$
\bsconf{G_{n}}
{L_f}
{\nameN}
{S_b}
\bseval
\bsconf{G'}{L'}{\nameN}{\modeM}
$.
We therefore obtain by \iref{funcall}:
$
%    \irule[FunCall]{
%      \bsconf{G_0}{L_0}{\nameN}{M_n}
%      \bsevalexp
%      \bsconf{G_1}{L_1}{\nameN}{v_n}
%      \\\\
%      :
%      \\\\
%      \bsconf{G_{n-1}}{L_{n-1}}{\nameN}{M_1}
%      \bsevalexp
%      \bsconf{G_n}{L_n}{\nameN}{v_1}
%      \\\\
%      \bsconf{G_n}{\{ \vec y \mapsto \vec v \} \uplus \{ \vec z \mapsto 0 \}}{\nameN}{{S_b}}
%      \bseval
%      \bsconf{G''}{L''}{\nameN}{\modeM}
%    }{
\bsconf{G_0}{L_0}{\nameN}{\Efuncall{f}{M_1 , .. , M_n}}
\bsevalexp
\bsconf{G'}{L_n}{\nameN}{S_{ret}}
%    }
$.
~\qed
\end{proof}

% %============== LEMMA ==============
% \begin{lemma}[Semantic Entailment 1]\label{lem:1}
% Given statement $S$ in source syntax, global environment $G$, state $L$, namespace $\nameN$, and a return statement $S_{ret} \in \{\breakk,\continuek,\leavek,\regk, v, \tuple{\vec v}\}$,
%   %such that $S_{ret}\neq \mathcal{M}$,
% then
% $\bsconf{G}{L}{\nameN}{S} \bseval \bsconf{G'}{L'}{\nameN'}{S_{ret}}$
%   implies 
% $\bsconf{G}{L}{\nameN}{S} \to^* \bsconf{G'}{L'}{\nameN'}{S_{ret}}$.
% %  for a configuration $\bsconf{G}{L}{\nameN}{S}$:
% %\begin{enumerate}[ (1) ]
% %\item $\bsconf{G}{L}{\nameN}{S} \bseval \bsconf{G'}{L'}{\nameN'}{S_{ret}}$
% %\item $\bsconf{G}{L}{\nameN}{S} \to^* \bsconf{G'}{L'}{\nameN'}{S_{ret}}$
% %\end{enumerate}
% %it is the case that (1) $\implies$ (2).
% %== PROOF ==
% \begin{proof}
% c.f. \cref{appendix:entailment:1}
% \end{proof}
% \end{lemma}
% 
% %============== LEMMA ==============
% \begin{lemma}[Semantic Entailment 2]\label{lem:2}
% Given statement $S$ in source syntax, global environment $G$, state $L$, namespace $\nameN$, and a non-stuck return statement $S_{ret} \in \{\modeM, v, \tuple{\vec v}\}$ such that $S_{ret}\neq \mathcal{M}$,
% then for a configuration $\bsconf{G}{L}{\nameN}{S}$:
% \begin{enumerate}[(1)]
% \item $\bsconf{G}{L}{\nameN}{S} \to^* \bsconf{G'}{L'}{\nameN'}{S_{ret}}$
% \item $\bsconf{G}{L}{\nameN}{S} \bseval \bsconf{G'}{L'}{\nameN'}{S_{ret}}$
% \end{enumerate}
% it is the case that (1) $\implies$ (2).
% %== PROOF ==
% \begin{proof}
% c.f. \cref{appendix:entailment:2}
% \end{proof}
% \end{lemma}

%%% Local Variables:
%%% mode: latex
%%% TeX-master: "../main"
%%% End:

  \FloatBarrier
  \section{Implementation}
  \label{sec:imp}
  We {implemented} \yult{}~\cite{tools:yul:yultracer, tools:yul:release}, a prototype interpreter {based on our operational semantics}
%and future symbolic execution engine
for Yul that is parametric on dialect implementations. Currently, a partial implementation of the Shanghai {upgrade} (Solidity 0.8.20 to 0.8.23) of the EVM execution specifications~\cite{evm-execution-specs} is provided as its sole dialect. We provide opcodes for the following categories: Arithmetic, Bitwise, Comparison, Control-flow, Memory, Stack, and Storage. We omit the implementation of opcodes that are not listed in the official documentation for Yul~\cite{yul} under version 0.8.23, but otherwise adapt the Python specification to our use as faithfully as possible. \yult implements our small-step semantics in the style of CEK machines~\cite{FelleisenF87} and is written using OCaml. Modularity for dialects is achieved using functors: a module signature is provided for dialects to satisfy, which are then used to instantiate any concrete implementation of the interpreter. Lastly, we also provide in the EVM dialect a partial implementation of gas consumption. Since Yul replaces some operations that consume gas but does not provide an official model for gas consumption, we do not calculate gas consumption with perfect accuracy, but instead attempt to accurately compute it for the subset of opcodes provided.

We evaluate our interpreter using custom tests\,---1263 Lines of Code (LoC)---\,and a subset of tests found in the Solidity {compiler} repository~\cite{solidity-repo}\,---269 LoC. From the 50 files found under \textsf{test/libyul/yulInterpreterTests} in the Solidity repository (commit 2b2c76c), we kept 19 files that do not make use of unimplemented EVM instructions: opcodes for inter-contract communication in the System category; those in the Block, Environment or Log categories; the opcode for Keccak, which is implemented but unused as it uses a model that requires symbolic execution; or opcodes implemented in the Cancun upgrade (i.e. Solidity 0.8.24) such as those for transient memory or \textsf{mcopy}. From the 19 remaining tests, we mark and omit 3 tests that do not behave according to the official specifications of the EVM; these are: \textsf{access\_large\_memory\_offsets.yul}, which under standard operation would exceed the gas limit when trying to access large offsets but does not; \textsf{expr\_nesting\_depth\_exceeded.yul}, which imposes a nesting limit on expressions that is not specified in the Yul documentation; and \textsf{hex\_literals.yul}, which {uses} underscores \textsf{\_} in hexadecimal strings, not allowed by the official Yul grammar. For our own custom examples, we implemented two categories of programs: sequence generators (426 LoC) and sorting algorithms (837 LoC). For the former, we have generators for Catalan numbers, the Heighway dragon curve, Fibonacci numbers, Pell numbers, Prime numbers, and the Thue–Morse sequence. For the latter we have implementations of bubble, heap, insertion, quick and shell sort on arrays of 6, 300 and 1000 items. These were chosen as non-trivial algorithms with standard implementations that we can compare to for reference that also make use of a range of Yul features and EVM opcodes while testing {\yult's} performance.

All examples were executed on a laptop featuring an 8-core Intel Core i7-8665U CPU with 32 GiB of RAM running Ubuntu 23.04. The interpreter was compiled with the OCaml 4.10.0 compiler using the Dune build system. We measure for each batch of examples a three-trial average total execution time of 0.478s for all 8 sequence generator files, 31.960s for all 12 sorting algorithm files, and 2.932s for all 16 of the Solidity {tests} (including an infinite loop set to time out at 2s).
{\yult produced the expected output for all programs.}
%\yult behaved as expected with these tests, producing exactly the expected output for every Yul program executed this way.

%%% Local Variables:
%%% mode: latex
%%% TeX-master: "../main"
%%% End:

  \section{Related Work}
  \label{sec:relwork}
As discussed in the Introduction, the body of work on formalising Yul involves mechanisations in e.g.\ Dafny~\cite{mechanisation:yul:dafny,mechanisation:yul:dafny:article}, the K-framework \cite{mechanisation:yul:K}, Isabelle/HOL \cite{mechanisation:yul:isabelle}, Lean \cite{mechanisation:yul:lean}, and ACL2 \cite{mechanisation:yul:acl2}. To our knowledge, there is no peer-reviewed tool-independent presentation of a formal semantics for Yul. Closely related is the area of formalising Solidity, for which
a more substantial body of work exists: e.g.\ \cite{related:solidity:SSCalc}~mathematically presents a calculus to reason about Solidity smart contracts and then implements and mechanically verifies its soundness in Isabelle/HOL; \cite{related:solidity:fstar}~presents a shallow embedding of Solidity in F*; \cite{related:solidity:towards-opsem} presents an operational semantics for a subset of Solidity and presents an abstract model for memory; \cite{related:solidity:towards-ver}~formalises another subset of Solidity by presenting a big-step semantics; \cite{related:solidity:smt-memory}~presents a semantics for the memory model of Solidity by mapping a fragment of Solidity to SMT-based constructs; \cite{related:solidity:denot} implements a denotational semantics for a subset of Solidity in 1500 lines of Isabelle. Also relevant are specifications for the EVM, which are necessary to implement the EVM dialect for Yul, such as: the Ethereum execution specification in Python~\cite{evm-execution-specs}; other formalisation work such as~\cite{related:evm:fstar}, which presents the first complete small-step semantics of EVM bytecode and formalises it in F*; and~\cite{related:evm:dafny}, which presents a complete and formal specification in Dafny. Lastly, formalisations efforts are also common practice outside smart contracts, such as with the Glasgow Haskell Compiler (GHC) intermediate language {Core} which implements System FC~\cite{related:system-fc} and features a maintained formal specification~\cite{related:ghc:core}.

\section{Conclusions}
  \label{sec:conclusion}
In this paper, we have presented an operational semantics for Yul.
 {We first defined a source-level abstract syntax constructed by inspecting the grammar specified in \cite{yul} (\Cref{sec:syntax}), which we subsequently extended
   with constructs necessary for evaluation (\Cref{sec:sem})}. We then {presented} big- and small-step semantics, constructed by carefully inspecting the pseudocode evaluation function provided in \cite{yul}, and supplied a proof of semantic equivalence between them (\Cref{theorem:equiv}). Lastly,
 % after briefly describing Yul objects in Section~\ref{sec:obj},
 we presented \yult, an interpreter that {implements our small-step semantics
   % in the style of CEK machines and
   for} a subset of the EVM dialect based on the Shanghai {upgrade}~\cite{evm-execution-specs}. We evaluated our implementation against tests from the Solidity repository~\cite{solidity-repo} and our own ones based on standard sequence generators and sorting algorithms.

 We believe our {work} offers multiple advantages. {Being} in standard mathematical notation, {and free of implementation or formalisation considerations,}
 % that is free of implementation considerations and requires no background in specific tools and formalisation languages. For this reason, we
{our semantics can} serve to unify various extant tool-specific formalisations and verify any current and future implementations,
 {while it can also be directly used} %the use of mathematical notation allows for their direct use
 in mathematical proofs, especially those involving the development and verification of language properties. {Moreover}, we provide an exhaustive specification for statements that irons out issues with ambiguity arising from under-specification of the semantics; e.g. function returns with no return variables. Lastly, we anticipate that a semantics such as ours will aid the development of future theories for Yul. We shall discuss this last point in more detail next.

%\subsection{Scope and Future Work} \label{sec:future}
This document focuses only the semantics of Yul statements. As such, we leave a detailed discussion of the EVM dialect and object semantics for future {work}. In particular, the EVM semantics includes behaviours which may be of technical interest,  such as instructions related to transactions and interaction with the blockchain, as well as inter-contract communication. {\yult{} and our semantics work} towards the development of symbolic execution {tools and theories developed directly in Yul, avoiding low-level complexities of EVM bytecode and the feature-rich, evolving syntax of Solidity};
%We envision our \vk{work} will allow the development of such theories,
in particular those which combine symbolic execution with game semantics (e.g. \cite{LinT20} and \cite{KoutavasLT22}) to {model} higher-order interactions of smart contracts {relevant to} reentrancy exploits.
%---\,most prominently with reentrancy exploits\,---\,and thus perform fully automatic verification of smart contracts.
%Operational game semantics for symbolic execution integrate particularly well with small-step semantics such as the one presented here.
Lastly, {our} semantics {could lead to} soundness proofs for a {future Yul} type system.

%%% Local Variables:
%%% mode: latex
%%% TeX-master: "../main"
%%% End:

  \small{\paragraph{Acknowledgements:}
  This work was supported in part by: Science Foundation Ireland under grant number 13/RC/2094\_2;
  %(https://lero.ie/);
  the Cisco University Research Program Fund, a corporate advised fund of Silicon Valley Community Foundation; and Ethereum Foundation grant FY23-1127.
  For the purpose of
  Open Access, the authors have applied a CC BY public copyright licence to any Author Accepted
  Manuscript version arising from this submission.}

  % %% Acknowledgments
  % \begin{acks}                            %% acks environment is optional
  %                                         %% contents suppressed with 'anonymous'
  %   %% Commands \grantsponsor{<sponsorID>}{<name>}{<url>} and
  %   %% \grantnum[<url>]{<sponsorID>}{<number>} should be used to
  %   %% acknowledge financial support and will be used by metadata
  %   %% extraction tools.
  %   This material is based upon work supported by the
  %   \grantsponsor{GS100000001}{National Science
  %     Foundation}{http://dx.doi.org/10.13039/100000001} under Grant
  %   No.~\grantnum{GS100000001}{nnnnnnn} and Grant
  %   No.~\grantnum{GS100000001}{mmmmmmm}.  Any opinions, findings, and
  %   conclusions or recommendations expressed in this material are those
  %   of the author and do not necessarily reflect the views of the
  %   National Science Foundation.
  % \end{acks}

  %% Bibliography
  \bibliographystyle{splncs04}

  %\newpage
  %\linespread{0.99}
  \bibliography{references}

  %% Appendix
  \clearpage
%   \emph{\footnotesize This appendix is provided for the benefit of the reviewers, and will not appear in a final version of this paper.}
  \appendix

  \section{Restrictions for Sub-Statements}
  In this section we define and prove properties of our semantics that will become useful to define syntactic restrictions to the source-level language.

%============== SUB-STATEMENTS ==============
\begin{definition}[Sub-statements]\label{def:sub}
We say a statement $S$ is a direct sub-statement of $S'$, written $S \prec S'$, if:
\begin{itemize}
\item $S' = \Sblock{S_1 .. S_n}$ such that $S \in \{S_1, .., S_n\}$
\item $S' = \Scond{M}{S}$
\item $S' = \Sswitch{M}{\casek\,v_1\,S_1 .. \casek\,v_n\,S_n}{\defaultk S_d}$ such that $S = S_d$ or $S \in \{S_1, .., S_n\}$
\item $S' = \Sfor{S_i}{M}{S_p}{S_b}$ such that $S \in \{S_i, S_p, S_b\}$
\item $S' = \Sfundefto{x}{\vec y}{\vec z}{S}$
\end{itemize}
We also say a statement $S$ is a transitive sub-statement of $S'$, written $S \preceq S'$ if:
\begin{itemize}
\item $S' = S$
\item $S' = \Sblock{S_1 .. S_n}$ such that $\exists i \in [1 , n] . S \preceq S_i$
\item $S' = \Scond{M}{S_b}$ if $S \preceq S_b$
\item $S' = \Sswitch{M}{\casek\,v_1\,S_1 .. \casek\,v_n\,S_n}{\defaultk S_d}$ such that $\exists i \in [1 , n] . S \preceq S_i$
\item $S' = \Sfor{S_i}{M}{S_p}{S_b}$ such that $\exists S'' \in \{S_i, S_p, S_b\} . S \preceq S''$
\item $S' = \Sfundefto{x}{\vec y}{\vec z}{S_b}$ and $S \preceq S_b$
\end{itemize}
When generally referring to sub-statements without specifying which kind, we always mean transitively.
\end{definition}

\begin{definition}[Syntactic restrictions for halting statements]\label{def:halt:restrict}
We impose the following syntactic restrictions on halting statements:
\begin{itemize}
\item \textbf{Loop halting:} Let $S \in \{\breakk , \continuek\}$.
For all $S \preceq S'$, there exists an $S_f = \Sfor{\Sblock{S_i}}{M}{{ S_{p}}}{{ S_{b}}}$ such that $S_f \preceq S'$ and $S \preceq S_b$, but there cannot be a statement $S_f' = \Sfundefto{x}{\vec y}{\vec z}{S_b'}$ such that $S_f' \preceq S_f$ and $S \preceq S_b'$.
\item \textbf{Function halting:} For all $\leavek \preceq S'$, there exists an $S_f = \Sfundefto{x}{\vec y}{\vec z}{S_b}$ such that $S_f \preceq S'$ and $S \preceq S_b$.
\end{itemize}
\end{definition}

%============== COMMUTATIVITY OF SEQUENCING ==============
\begin{lemma}[Commutativity of sequencing]\label{lem:seq:comm}
Given two non-empty sequences of statements $\vec S$ and $\vec S'$, then (1) $\iff$ (2) where:
\begin{flalign*}
\text{(1) } &
\bsconf{G}{L}{\nameN}{\vec S , \vec S'}
\bseval
\bsconf{G''}{L''}{\nameN''}{\modeM}
\\
\begin{aligned}
\text{(2) }
\\\phantom{}
\end{aligned}
&
\begin{aligned}
\exists (G',L',\nameN') .
&\bsconf{G}{L}{\nameN}{\vec S}
\bseval
\bsconf{G'}{L'}{\nameN'}{\regk}
\\
&\land
\bsconf{G'}{L'}{\nameN'}{\vec S'}
\bseval
\bsconf{G''}{L''}{\nameN''}{\modeM}
\end{aligned}
\end{flalign*}
\end{lemma}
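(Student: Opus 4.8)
The plan is to prove the two implications of the biconditional separately, in each case by induction on the length $n\ge 1$ of the prefix sequence $\vec S$ (equivalently, on the height of the derivation of the relevant sequence-evaluation judgement $\bsevalseq$), with a case split on its last rule. Writing $\vec S = S_1,\vec T$, we have $\vec S,\vec S' = S_1,(\vec T,\vec S')$, and since $\vec S$ is non-empty the last rule of any derivation for either side is \iref{seqreg} or \iref{seqirreg}, never \iref{seqempty}. Observe also that none of \iref{seqempty}, \iref{seqreg}, \iref{seqirreg} alters the function namespace, so $\nameN$ is threaded unchanged and $\nameN = \nameN' = \nameN''$ throughout; the only real bookkeeping concerns the intermediate pair $(G',L')$.

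For the backward direction, (2)~$\Rightarrow$~(1), I would induct on $n$. When $n=1$, writing $\vec S=S_1$, the derivation of $\bsconf{G}{L}{\nameN}{S_1}\bsevalseq\bsconf{G'}{L'}{\nameN}{\regk}$ ends in \iref{seqreg} with an empty tail, hence contains $\bsconf{G}{L}{\nameN}{S_1}\bseval\bsconf{G'}{L'}{\nameN}{\regk}$; feeding this and the second conjunct of~(2) into \iref{seqreg} with tail $\vec S'$ yields~(1). When $n>1$, the derivation of $\vec S\bsevalseq\regk$ ends in \iref{seqreg}, giving $S_1\bseval\regk$ into some $(G_1,L_1)$ and $\vec T\bsevalseq\regk$ from $(G_1,L_1)$ to $(G',L')$; the induction hypothesis on the shorter prefix $\vec T$ (with the same $\vec S'$) yields $\vec T,\vec S'\bsevalseq\modeM$ from $(G_1,L_1)$, and one more \iref{seqreg} step prepending $S_1$ gives $\vec S,\vec S'\bsevalseq\modeM$. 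The forward direction, (1)~$\Rightarrow$~(2), is symmetric whenever the last rule is \iref{seqreg}: from $\vec S,\vec S'\bsevalseq\modeM$ ending in \iref{seqreg} we recover $S_1\bseval\regk$ into $(G_1,L_1)$ and $\vec T,\vec S'\bsevalseq\modeM$ from there; if $\vec T$ is empty this is already $\vec S'\bsevalseq\modeM$, so $(G_1,L_1)$ is the required witness, since $\vec S=S_1\bsevalseq\regk$ holds by \iref{seqreg} with an \iref{seqempty} premise; if $\vec T$ is non-empty, the induction hypothesis on $\vec T,\vec S'$ supplies a witness $(G',L')$ with $\vec T\bsevalseq\regk$ to $(G',L')$ and $\vec S'\bsevalseq\modeM$ from $(G',L')$, and prepending $S_1$ via \iref{seqreg} promotes $\vec T\bsevalseq\regk$ to $\vec S\bsevalseq\regk$ reaching the very same $(G',L')$.

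The main obstacle is the \iref{seqirreg} case of the forward direction: there the head $S_1$ of $\vec S$ already evaluates to an irregular mode $\modeM\in\{\breakk,\continuek,\leavek\}$, so the remainder $\vec T,\vec S'$ --- and in particular $\vec S'$ --- is never reached, and no genuinely intermediate configuration with $\vec S\bsevalseq\regk$ arises in the naive way. The resolution is to note that $\vec S$ on its own evaluates, by the very same \iref{seqirreg} instance applied to its head $S_1$, to the identical mode $\modeM$ in the identical configuration as $\vec S,\vec S'$ does; this subcase is thus the degenerate one in which the prefix already completes the computation, and the substantive content of the lemma --- and of its proof --- is the complementary situation in which $\vec S$ runs to $\regk$ and only then does $\vec S'$ take over. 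Carrying the intermediate configuration consistently through the \iref{seqreg}/\iref{seqirreg} split and, via the induction hypothesis, through the case where the hand-off point between $\vec S$ and $\vec S'$ lies strictly inside $\vec T$, is the part most prone to slips and where I would concentrate the care; the remaining cases are routine rule inversions and re-derivations.
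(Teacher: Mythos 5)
Your overall strategy---induction on the length of the prefix $\vec S$ with inversion on \iref{seqreg}/\iref{seqirreg}/\iref{seqempty}, threading the namespace unchanged---is essentially the paper's, and your backward direction (2)$\Rightarrow$(1), together with the \iref{seqreg} subcases of the forward direction (splitting at the hand-off point, with the empty-tail subcase discharged by \iref{seqreg} plus \iref{seqempty}), is correct. The genuine gap is your treatment of the \iref{seqirreg} case of the forward direction. You correctly diagnose the problem, but your ``resolution'' does not close it: observing that $\vec S$ on its own evaluates, via the same \iref{seqirreg} instance, to the irregular mode $\modeM$ shows precisely that the first conjunct of (2), $\bsconf{G}{L}{\nameN}{\vec S}\bsevalseq\bsconf{G'}{L'}{\nameN'}{\regk}$, is \emph{false} there---the evaluation relation is deterministic, so $\vec S$ cannot also evaluate to $\regk$. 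In that subcase (1) holds while (2) fails, so no proof of (1)$\Rightarrow$(2) can go through; calling the subcase ``degenerate'' and relocating the ``substantive content'' of the lemma to the complementary situation is a concession that the biconditional fails there, not an argument that it holds.

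The missing ingredient is an explicit side condition, which is exactly what the paper's (very terse) proof smuggles in when it writes that ``\iref{seqirreg} does not apply because $\bsconf{G}{L}{\nameN}{\vec S}$ terminates regularly'': the forward direction is only sound under the additional hypothesis that the prefix $\vec S$ evaluates to $\regk$ (equivalently, that no statement of $\vec S$ produces $\breakk$, $\continuek$ or $\leavek$), a hypothesis that is discharged at the lemma's point of use in \cref{lem:break:cont} via the syntactic restrictions of \cref{def:halt:restrict}. If you state and assume that hypothesis, your case analysis becomes complete and the proof is correct; without it, the \iref{seqirreg} branch of your forward direction is not a proof step but a counterexample, and the attempt cannot be repaired as written.
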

%============== PROOF ==============
\begin{proof}
This proof follows by structural induction on the derivation tree of $\bsconf{G}{L}{\nameN}{\vec S}$ and by case analysis on the form of $\vec S$. The base case, where $\vec S = S_1$, follows by direct application of \iref{seqreg}. The inductive case where $\vec S = S_1 , \vec S_n$ follows by application of \iref{seqreg} and by the inductive hypothesis on the derived configurations containing $\vec S_n$. By assumption, \iref{seqirreg} does not apply because $\bsconf{G}{L}{\nameN}{\vec S}$ terminates regularly, while \iref{seqempty} holds syntactically.
\end{proof}

%============== UNCOMPUTABILITY OF BREAK ==============
\begin{lemma}[Containment of Loop Halting Statements]\label{lem:break:cont}
Given a configuration $\bsconf{G}{L}{\nameN}{\Sfor{S_i}{M}{{ S_{p}}}{{ S_{b}}}}$ and $\modeM \in \{\breakk,\continuek\}$, then:
\[
      \bsconf{G}{L}{\nameN}{\Sfor{S_i}{M}{{ S_{p}}}{{ S_{b}}}}
      \not\bseval
      \bsconf{G'}{L'}{\nameN}{\modeM}
\]
\end{lemma}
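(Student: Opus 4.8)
The plan is to prove, by structural induction on the derivation of $\bsconf{G}{L}{\nameN}{\Sfor{S_i}{M}{S_p}{S_b}}\bseval\bsconf{G'}{L'}{\nameN}{\modeM}$, the equivalent statement that every such derivation has $\modeM\notin\{\breakk,\continuek\}$. The induction proceeds by case analysis on the last rule applied, which, the term being a \fork-loop, must be one of \iref{forinit}, \iref{forfalse}, \iref{forhalt1}, \iref{forhalt2}, \iref{forloop} --- and if $S_i$ is not a (possibly empty) block, none of these applies and there is nothing to prove. Four of the five cases are immediate from the shape of the conclusion: \iref{forfalse} yields $\regk$; \iref{forhalt1} yields $\modeM_{lb}'$, which by its side condition lies in $\{\leavek,\regk\}$; \iref{forhalt2} yields $\leavek$; and in \iref{forloop} the conclusion's mode is the mode of the strictly smaller sub-derivation of $\bsconf{G_3}{L_3}{\nameN}{\Sfor{\Sblock{}}{M}{S_p}{S_b}}$, so the induction hypothesis applies verbatim. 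The salient point, visible in these rules, is that the body's mode is allowed to be $\breakk$ or $\continuek$ (the $\modeM_{lb}$ of \iref{forhalt1}, the $\modeM_{\neq lb}$ of \iref{forhalt2} and \iref{forloop}): the loop construct is precisely what absorbs them, so they never reach the conclusion.

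The only real work is in the \iref{forinit} case, where $S_i=\Sblock{S_1..S_n}$ with $n>0$ and the derivation continues with $\bsconf{G}{L}{\nameN}{\Sblock{S_1..S_n\,\Sfor{\Sblock{}}{M}{S_p}{S_b}}}\bseval\bsconf{G'}{L'}{\nameN}{\modeM}$. Unfolding \iref{block} shows $\modeM$ is exactly the mode produced by the statement sequence $S_1,..,S_n,\Sfor{\Sblock{}}{M}{S_p}{S_b}$ (the closing restriction $\restrict L$ leaves the mode untouched). By inspection of \iref{seqempty}/\iref{seqreg}/\iref{seqirreg} (invoking \cref{lem:seq:comm} to streamline the bookkeeping once the initialisation statements are known to terminate regularly), this mode is either (i) the mode of the trailing sub-derivation of $\Sfor{\Sblock{}}{M}{S_p}{S_b}$, which is handled by the induction hypothesis, or (ii) an irregular mode $\modeM'\in\{\breakk,\continuek,\leavek\}$ escaping from one of the initialisation statements $S_j$ via \iref{seqirreg}. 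In case (ii) it remains to exclude $\modeM'\in\{\breakk,\continuek\}$: since $S_j$ sits in the loop's initialisation, not in a loop body, the loop-halting restriction of \cref{def:halt:restrict} forces every $\breakk$/$\continuek$ occurring in $S_j$ to lie inside the body of some \fork-loop nested within $S_j$, so $S_j$ has no escaping loop-halting statement and should not be able to evaluate to $\breakk$ or $\continuek$.

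The main obstacle is making that last clause rigorous: the conclusion we need about $S_j$ concerns an arbitrary statement, not a \fork-loop, so the induction as stated above does not directly supply it. The clean fix is to strengthen the claim and prove simultaneously, by structural induction on big-step derivations, that \emph{any} statement all of whose $\breakk$ and $\continuek$ occurrences are nested inside some \fork-body never evaluates to $\breakk$ or $\continuek$; the \fork-loop cases of that induction are exactly the analysis above (using that a loop absorbs its body's halting modes and that, under \cref{def:halt:restrict}, the initialisation and post-iteration blocks are themselves escaping-free), while the remaining cases --- blocks, conditionals, switches, variable statements, and opcode/function calls --- are routine congruences that merely propagate the non-halting modes. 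The present lemma is then the instance of this generalisation applied to the loop $\Sfor{S_i}{M}{S_p}{S_b}$ itself.
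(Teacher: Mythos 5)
Your proposal is correct and follows essentially the same route as the paper's proof: induction on the derivation with a case split over the five loop rules, where \iref{forfalse}, \iref{forhalt1} and \iref{forhalt2} are immediate from the shape of their conclusions, \iref{forloop} is a direct appeal to the inductive hypothesis, and \iref{forinit} is discharged via \cref{def:halt:restrict} and \cref{lem:seq:comm}. The one place you go beyond the paper is the \iref{forinit} case: the paper simply asserts that, by \cref{def:halt:restrict}, the initialisation statements cannot evaluate to \breakk or \continuek, whereas you correctly observe that this claim is about arbitrary statements rather than loops, so the induction hypothesis as stated does not supply it, and you patch this by strengthening the statement to cover every statement whose \breakk/\continuek occurrences are guarded by an enclosing loop body. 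That strengthening is the right fix and makes your write-up more rigorous than the paper's own proof of this lemma.
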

%============== PROOF ==============
\begin{proof}
This follows by induction on the derivation tree of $\bsconf{G}{L}{\nameN}{\Sfor{S_i}{M}{{ S_{p}}}{{ S_{b}}}}$ and by inspection on the inference rules for loops. Base cases:
\begin{itemize}
\item \iref{forfalse} always evaluates to \regk;
\item \iref{forhalt1} catches all \breakk in $S_b$ and produces \regk as the top-level result;
\item \iref{forhalt2} only handles \leavek and thus cannot evaluate to \breakk or \continuek.
\end{itemize}
Inductive cases:
\begin{itemize}
\item \iref{forloop} follows by direct application of the inductive hypothesis -- \continuek is not propagated because the result of evaluating $S_b$ is thrown away;
\item \iref{forinit} derives a statement prefixed with $S_i$, which by Definition~\ref{def:halt:restrict} does not contain any statements that evaluate to \breakk or \continuek, so by Lemma~\ref{lem:seq:comm} we directly apply the inductive hypothesis.
\end{itemize}
\qed
\end{proof}

%============== LOWER BOUND OF L ==============
\begin{lemma}[Lower bound of $L$]\label{lem:bound:l}
Given a configuration $\bsconf{G}{L}{\nameN}{S}$ for some $S$ in source-level syntax, if:
\[
      \bsconf{G}{L}{\nameN}{S}
      \to^*
      \bsconf{G'}{L'}{\nameN'}{\modeM}
\]
then $\dom{L'} \supseteq \dom{L}$.
\end{lemma}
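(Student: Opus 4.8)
The plan is to transport the statement to the big-step semantics via \cref{theorem:equiv} and then settle it there by a structural induction on derivations. Precisely, invoking the strengthened form of \cref{theorem:equiv} established in its proof---where the reduct ranges over $S_{ret}\ ::=\ v\mid\tuple{\vec v}\mid\modeM$---a top-level reduction $\bsconf{G}{L}{\nameN}{S}\to^*\bsconf{G'}{L'}{\nameN'}{\modeM}$ out of a source-level $S$ to a (non-external) mode $\modeM$ is mirrored by a big-step evaluation $\bsconf{G}{L}{\nameN}{S}\bseval\bsconf{G'}{L'}{\nameN'}{\modeM}$. It therefore suffices to show, by mutual structural induction on the derivations of $\bseval$, $\bsevalseq$ and $\bsevalexp$, that all three relations satisfy $\dom{L}\subseteq\dom{L'}$ between input and output configuration.

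The inductive step is a rule-by-rule inspection and is routine for all but two families of rules (the induction being on derivations, rules such as \iref{forinit} and \iref{forloop} whose premises are not sub-terms cause no difficulty). The sequencing, conditional, switch and loop rules merely feed the output environment of each premise into the next, so the inclusions compose by the IH; \iref{vardecl}, \iref{tvardecl} and the assignment rules output $L_1[\vec x\mapsto\vec v]$, which can only enlarge $\dom{L_1}$; and \iref{ident} and \iref{fundef} leave $L$ untouched. The two cases with content are \iref{block}, whose output environment is $L_1\restrict L$---here the IH on the sequence premise gives $\dom{L_1}\supseteq\dom{L}$, so $\dom{L_1\restrict L}=\dom{L}$ and the restriction discards nothing of $L$---and \iref{funcall} (mirrored by \iref{opcall}), whose output environment is $L_n$, namely the \emph{caller's} environment obtained after evaluating the arguments $M_n,\dots,M_1$ right-to-left, and \emph{not} the callee's frame: chaining the IH over the $n$ argument premises yields $\dom{L}\subseteq\dom{L_1}\subseteq\cdots\subseteq\dom{L_n}$ (with $L_n=L$ when $n=0$), and the callee's fresh frame $L_f$, being unrelated to $L$, never enters the accounting.

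The subtlety the proof must respect---and the reason a bare induction on the length of the small-step reduction does not work---is that $\dom{L}$ is \emph{not} monotone step by step: entering a function replaces $L$ by a fresh $L_f$, and exiting a block or a call frame restricts $L$ back down. Routing the argument through the big-step semantics is precisely what sidesteps this, since there every function body is consumed ``in one shot'' (so the caller's environment re-emerges) and block exit becomes the sole site of shrinkage, already handled in the \iref{block} case above; I expect this to be the main obstacle. Arguing directly on $\to^*$ would instead force one to strengthen the invariant to track the environments recorded inside the $\Sblock{\cdot}_{L}^{\nameN}$ and $\ssframe{\cdot}_{L}^{\vec x}$ contexts of the running term, which is exactly the bookkeeping the big-step detour avoids. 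The only case not covered by \cref{theorem:equiv}, namely $\modeM$ an external irregular mode, is not needed for the uses of this lemma (it feeds only the controlled-termination arguments behind the block and break optimisations).
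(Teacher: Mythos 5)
Your big-step argument is internally sound, and you have correctly isolated the real subtlety: $\dom{L}$ is not monotone along individual small steps, because function entry swaps in a fresh $L_f$ and block/frame exit restricts the environment back down. The problem is that the route you take to the big-step world is circular in this development. The direction of \cref{theorem:equiv} you need (small-step $\to^*$ entails big-step, i.e.\ \cref{lem:2}) is itself proved in the paper \emph{using} \cref{lem:bound:l}: the loop cases of that proof invoke \cref{lem:l:invariance:loops} and \cref{lem:bound:l} directly, and the proof of \cref{lem:l:invariance:loops} in turn cites \cref{lem:bound:l} to conclude $\dom{L}=\dom{L''\restrict L}$. This is also why the lemma is stated and proved in the appendix \emph{before} the equivalence results. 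As written, your proof cannot be slotted into the paper without breaking the dependency order; what is required is a direct argument on the small-step side. That direct argument is the one you set aside: a structural induction on the form of the source-level $S$ with inspection of the reduction rules, in which the bookkeeping you wanted to avoid---that every $\Sblock{\cdot}_{L}^{\nameN}$ or $\ssframe{\cdot}_{L}^{\vec x}$ context opened while evaluating $S$ is closed again before $\modeM$ is reached, and that closing restores a superset of the environment recorded at opening---is precisely the content of the induction. (The paper compresses this to one line, but that is the shape of its proof.)

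A secondary point: your detour also silently drops the case where $\modeM$ is an external irregular mode $\mathcal M$, since the big-step relation is undefined there, whereas the lemma as stated quantifies over all modes. You argue this case is not needed downstream, but a direct small-step induction covers it at no extra cost, so there is no reason to weaken the statement.
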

%============== PROOF ==============
\begin{proof}
This follows by structural induction and inspection on the form of $S$.
\qed
\end{proof}

%============== INVARIANCE OF DOMAIN OF L IN LOOPS ==============
\begin{lemma}[Invariance of $\dom{L}$ in loops]\label{lem:l:invariance:loops}
Given a configuration $\bsconf{G}{L}{\nameN}{\Sfor{S_i}{M}{{ S_{p}}}{{ S_{b}}}}$, if:
\[
      \bsconf{G}{L}{\nameN}{\Sfor{S_i}{M}{{ S_{p}}}{{ S_{b}}}}
      \to^*
      \bsconf{G'}{L'}{\nameN}{\modeM}
\]
then $\dom{L} = \dom{L'}$.
\end{lemma}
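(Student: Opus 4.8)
The plan is to combine \cref{lem:bound:l}, which already yields $\dom{L}\subseteq\dom{L'}$ for the source-level statement $\Sfor{S_i}{M}{S_p}{S_b}$, with a dual bound $\dom{L'}\subseteq\dom{L}$ extracted from the way the small-step loop rules (\cref{fig:sem:loop}) manufacture a block-scoping barrier. The key observation is that, after its first one or two reduction steps, a source-level for-loop always produces a configuration whose term is guarded by a context $\Sblock{\cdot}_{L}^{\nameN}$ that records \emph{exactly} the loop's entry environment $L$; by inspection of the context grammar (\cref{fig:yul2}) such a context can only be discharged through $\Sblock{\regk}_{L}^{\nameN}\to\regk$ or $\Sblock{\modeM,\vec S}_{L}^{\nameN}\to\modeM$, each of which leaves the resulting environment as $L_1\restrict L$ for whatever $L_1$ has been reached underneath — hence with domain contained in $\dom{L}$. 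Since a mode is a normal form, that discharged environment \emph{is} the final $L'$ (up to the trivial mode-rewriting of an outer $\ssbreak{\cdot}$), which gives $\dom{L'}\subseteq\dom{L}$ and, with \cref{lem:bound:l}, equality.

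Concretely I would split on $S_i$. If $S_i=\Sblock{S_1 .. S_n}$ with $n>0$, the loop reduces in two steps to $\Sblock{S_1,..,S_n,\Sfor{\Sblock{}}{M}{S_p}{S_b}}_{L}^{\nameN}$, so the barrier is in place at once; thereafter every reduction either occurs in the leftmost hole of this context or discharges it, and termination at a mode forces $L'=L_1\restrict L$. If $S_i=\Sblock{}$, the loop first reduces to $\ssbreak{\Scond{M}{\Sblock{\sscont{S_b}\,S_p\,\Sfor{\Sblock{}}{M}{S_p}{S_b}}}}$; here I would first argue that evaluating the source-level expression $M$ in the hole $\ssbreak{\Scond{\hole}{\dots}}$ does not change $\dom{L}$ (a variable or literal leaves $L$ untouched; a function call pushes a frame $\ssframe{\cdot}^{\vec z}_{L}$ recording $L$ and restores it on return, cf.\ \cref{fig:sem:exp}; an opcode call touches only $G$), so that we reach either $\regk$ at domain $\dom{L}$ (when $M\to^*\false$) or $\ssbreak{\Sblock{\sscont{S_b}\,S_p\,\Sfor{\Sblock{}}{M}{S_p}{S_b}}}$, at which point entering the inner block re-establishes a barrier $\Sblock{\cdot}_{L}^{\nameN}$ and the previous case applies, the trailing $\ssbreak{\cdot}$ only rewriting modes to modes without enlarging the domain. (Any other value of $M$, or divergence of its evaluation, would contradict the hypothesis that a terminal $\modeM$ is reached.)

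The main obstacle is the auxiliary fact used silently above — that evaluation of $M$, and, behind the barrier, of the source-level blocks $S_b$ and $S_p$, returns $\dom{L}$ to its starting value — i.e.\ the general statement that anything reduced ``under'' a recorded block or frame context is re-restricted once that context is discharged. I would establish this by induction on the length of the reduction with a hypothesis strengthened to cover source-level blocks, conditionals, switches and value-producing expressions simultaneously; the present lemma is then just the loop clause of that induction, assembled from the two cases above. External irregular modes $\mathcal{M}$ need only minor extra attention, since they arise solely from opcode evaluation, which leaves $L$ unchanged, and are propagated to the top level without further enlarging the recorded environments carried by the pending contexts.
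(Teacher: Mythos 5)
Your proposal is correct and follows essentially the same route as the paper's proof: both arguments hinge on the block-scoping context $\Sblock{\cdot}_{L}^{\nameN}$ recording the entry environment and discharging to $L_1\restrict L$, combined with \cref{lem:bound:l} to turn the resulting inclusion into an equality. You are somewhat more explicit than the paper about the empty-initialisation case and the auxiliary fact that expression evaluation preserves $\dom{L}$ (the paper dismisses these as "simpler" cases), but the underlying mechanism is identical.
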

%============== PROOF ==============
\begin{proof}
This follows by structural induction and inspection on the reduction rules. Consider loop initialisation:
\begin{align*}
  \ssconf{\Sfor{\Sblock{S_1 .. S_n}}{M}{{S_{p}}}{{ S_{b}}}}{L}{\nameN}
  &
  \to
  \ssconf{
        \Sblock{S_1 .. S_n\,\Sfor{\Sblock{}}{M}{S_{p}}{ S_{b}}}
  }{L}{\nameN}\\
  &
  \to
  \ssconf{
        \Sblock{S_1 .. S_n\,\Sfor{\Sblock{}}{M}{S_{p}}{ S_{b}}}_L^\nameN
  }{L}{\nameN}
\end{align*}
By case analysis on the rules for blocks:
\begin{itemize}
\item if $\ssconf{S_1 .. S_n}{L}{\nameN} \to^* \ssconf{\regk}{L'}{\nameN'}$, then
\begin{align*}
  \to^*
  \ssconf{
        \Sblock{\regk\,\Sfor{\Sblock{}}{M}{S_{p}}{ S_{b}}}_L^\nameN
  }{L'}{\nameN'}
  &\to
  \ssconf{
        \Sblock{\Sfor{\Sblock{}}{M}{S_{p}}{ S_{b}}}_L^\nameN
  }{L'}{\nameN'}\\
  &\to^*
  \ssconf{
        \Sblock{\modeM}_L^\nameN
  }{L''}{\nameN'}\\
    &\to^*
    \ssconf{
          \modeM
    }{L''\restrict L}{\nameN}
\end{align*}
We know $\dom{L} = \dom{L'' \restrict L}$ by Lemma~\ref{lem:bound:l}.

\item if $\ssconf{S_1 .. S_n}{L}{\nameN} \to^* \ssconf{\modeM}{L'}{\nameN'}$ where $\modeM \neq \regk$, this case is simpler than the previous case and shown by inspection.
\end{itemize}
The remaining cases are simpler than the one for initialisation.
\qed
\end{proof}

%%% Local Variables:
%%% mode: latex
%%% TeX-master: "main"
%%% End:

  \section{Big-Step and Small-Step Equivalence}
\label{apx:equiv}
  %============== LEMMA ==============
\begin{lemma}[Semantic Entailment 1]\label{lem:1}
Given statement $S$ in source syntax, global environment $G$, state $L$, namespace $\nameN$, %and a non-stuck return statement $S_{ret} \in \{\modeM, v, \tuple{\vec v}\}$ such that %$S_{ret}\neq \mathcal{M}$,
then for a configuration $\bsconf{G}{L}{\nameN}{S}$:
\begin{enumerate}[ (1) ]
\item $\bsconf{G}{L}{\nameN}{S} \bseval \bsconf{G'}{L'}{\nameN'}{S_{ret}}$
\item $\bsconf{G}{L}{\nameN}{S} \to^* \bsconf{G'}{L'}{\nameN'}{S_{ret}}$
\end{enumerate}
it is the case that (1) $\implies$ (2).
\end{lemma}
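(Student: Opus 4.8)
The plan is to proceed by induction on the derivation of the big-step judgment, treating the three forms $\bseval$, $\bsevalseq$ and $\bsevalexp$ together by a single mutual induction, since the \iref{block} rule appeals to $\bsevalseq$, the branching and call rules appeal to $\bsevalexp$, and so on (here $S_{ret}$ is a value, a tuple, or a mode, exactly as in \cref{theorem:equiv}). For each big-step rule I will construct a matching small-step reduction sequence out of three ingredients: (i) the small-step rule that \emph{introduces} the runtime construct associated with that big-step rule, namely one of $\Sblock{\cdot}_L^{\nameN}$, $\ssframe{\cdot}_L^{\vec z}$, $\sscont{\cdot}$, $\ssbreak{\cdot}$; (ii) \cref{lem:congruence} (CG), to lift the reduction sequences returned by the induction hypotheses on the sub-derivations into the appropriate statement or expression evaluation context; and (iii) the small-step rule that \emph{eliminates} that construct, which performs exactly the same scope restriction $L_1\restrict L$, or return-variable read-back, as the big-step rule. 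The function-application case, already worked out in the body of the paper, is the blueprint for all the call-like cases.

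First I would dispatch the routine cases. \iref{ident} and \iref{fundef} reduce in zero or one step on both sides. For \iref{vardecl}, \iref{tvardecl} and the analogous (omitted) assignment rules I apply the IH to the expression premise, lift it through $\Svardecl{\vec x}{\hole}$ (resp.\ $\Sassign{\vec x}{\hole}$) with CG, then fire the matching small-step update. For \iref{iffalse}, \iref{iftrue}, \iref{switchdef} and \iref{switchcase} I lift the IH on the deciding expression through $\Scond{\hole}{\dots}$ or $\Sswitch{\hole}{\dots}$, take the branch-selection step, then apply the IH to the selected block. For \iref{funcall} and \iref{opcall} the IHs on $M_n,\dots,M_1$ are lifted right-to-left through $\Efuncall{f}{\vec M,\hole,\vec v}$, followed by the call step that introduces $\ssframe{S_b}_L^{\vec z}$, then the IH on the body lifted through $\ssframe{\hole}_L^{\vec z}$, then the frame-exit step that produces $S_{ret}$ — the conventions $\tuple{v}=v$ and $\tuple{}=\regk$ collapsing the void, single-value and multi-value sub-cases uniformly, and the $\toop$ side condition for opcodes being carried over unchanged. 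For sequencing, \iref{seqempty} matches the empty-block exit, \iref{seqreg} concatenates two reduction sequences, and \iref{seqirreg} produces a mode $\modeM\in\{\breakk,\continuek,\leavek\}$ which, inside a $\Sblock{\cdot}_L^{\nameN}$ context, is propagated and discards the trailing statements via $\ssconf{\Sblock{\modeM,\vec S}_L^{\nameN}}{L_1}{\nameN_1}\to\ssconf{\modeM}{L_1\restrict L}{\nameN}$; \iref{block} then wraps a $\bsevalseq$ sub-derivation in $\Sblock{\cdot}_L^{\nameN}$ and closes with a block-exit step whose restriction is precisely the big-step one.

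The loop rules are where the real work lies. \iref{forinit} is mirrored step-for-step by the small-step loop-initialisation rule, so it reduces to the IH on the refactored statement (with \cref{lem:seq:comm} relating the prefixed initialisation statements). For the other loop rules I first take the small-step unfolding $\Sfor{\Sblock{}}{M}{S_p}{S_b}\to\ssbreak{\Scond{M}{\Sblock{\sscont{S_b}\,S_p\,\Sfor{\Sblock{}}{M}{S_p}{S_b}}}}$ and then simulate the big-step rule with the conditional small-step rules together with the $\sscont{\cdot}$ and $\ssbreak{\cdot}$ elimination rules: \iref{forfalse} takes the false branch and then $\ssbreak{\regk}\to\regk$; \iref{forhalt1} evaluates $S_b$ by the IH to a mode in $\{\leavek,\breakk\}$, which the $\sscont{\cdot}$ frame carries through and the $\ssbreak{\cdot}$ frame maps to $\leavek$ or $\regk$, matching the side condition $(\modeM_{lb},\modeM_{lb}')$; and \iref{forhalt2}, \iref{forloop} handle $\modeM_{\neq lb}\in\{\regk,\continuek\}$, where $\sscont{\continuek}\to\regk$ (or $\sscont{\regk}\to\regk$) lets control reach $S_p$, whose IH-evaluation gives either $\leavek$ (propagated out, for \iref{forhalt2}) or $\regk$, after which the residual term is exactly $\Sfor{\Sblock{}}{M}{S_p}{S_b}$ and the IH applies to the strictly smaller for-loop sub-derivation — the induction being on derivations rather than terms, so the recursive loop appeal is sound.

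The step I expect to be the main obstacle is this loop fragment: the big-step semantics handles $\continuek$ implicitly, piggybacking on \iref{seqirreg}, whereas the small-step semantics makes the $\sscont{\cdot}$ and $\ssbreak{\cdot}$ catching frames explicit, so the simulation must verify that the frame bookkeeping reproduces the mode algebra captured by the side conditions $(\modeM_{lb},\modeM_{lb}')$ and $\modeM_{\neq lb}$, and that neither $\breakk$ nor $\continuek$ escapes the loop — for which I would invoke \cref{lem:break:cont}. The other delicate, but ultimately routine, point is keeping the nested block-exit restrictions $L_1\restrict L$ aligned between the two derivations once the loop has been unfolded.
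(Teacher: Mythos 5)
Your proposal is correct and follows essentially the same route as the paper's proof: structural induction on the big-step derivation (covering $\bseval$, $\bsevalseq$, $\bsevalexp$ together), using \cref{lem:congruence} to lift the inductive reduction sequences into the introduced runtime contexts and then firing the matching elimination step, with the bulk of the work in the loop cases where the $\sscont{\cdot}$/$\ssbreak{\cdot}$ frames must reproduce the big-step mode side conditions and \cref{lem:break:cont} rules out an escaping $\breakk$. You have also correctly identified that the recursive appeal in the \iref{forloop} case is justified by inducting on derivations rather than on terms, which is exactly the paper's justification.
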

\subsection{Proof of Semantic Entailment 1}\label{appendix:entailment:1}
%============== PROOF ==============
%\begin{proof}
Consider a configuration $\bsconf{G}{L}{\nameN}{S}$ such that
\[\bsconf{G}{L}{\nameN}{S} \bseval \bsconf{G'}{L'}{\nameN'}{S_{r}}\]

By structural induction on the derivation tree of the big-step evaluation of $\bsconf{G}{L}{\nameN}{S}$ and by case analysis on the form of $S$, we have base cases:
\begin{itemize}
\item \textbf{Empty Block:}
Let $S = \Sblock{\varepsilon}$, then for (1) we have
\[
    \irule[Block]{
    \irule[SeqEmpty]{
    }{
      \bsconf{G}{L}{\nameN}{\varepsilon}
      \bsevalseq
      \bsconf{G}{L}{\nameN}{\regk}
    }
    }{
      \bsconf{G}{L}{\nameN}{\Sblock{}}
      \bseval
      \bsconf{G}{L}{\nameN}{\regk}
    }
\]
For (2), we have by definition:
\[
\bsconf{G}{L}{\nameN}{\Sblock{}} \to \bsconf{G}{L}{\nameN}{\regk}
\]
Therefore (1) $=$ (2).
\item \textbf{Function Definition:}
Let $S = \Sfundefto{x}{\vec y}{\vec z}{M}$, then for (1) we have
\[
    \irule[FunDef]{
    }{
      \bsconf{G}{L}{\nameN}{\Sfundefto{x}{\vec y}{\vec z}{M}}
      \bseval
      \bsconf{G}{L}{\nameN}{\regk}
    }
\]
For (2), we have by definition:
\[
\bsconf{G}{L}{\nameN}{\Sfundefto{x}{\vec y}{\vec z}{M}} \to \bsconf{G}{L}{\nameN}{\regk}
\]
Therefore (1) $=$ (2).
\item \textbf{Identifier:}
Let $S = x$, then for (1) we have
\[
    \irule[Ident]{
    v = L(x)
    }{
      \bsconf{G}{L}{\nameN}{x}
      \bsevalexp
      \bsconf{G}{L}{\nameN}{v}
    }
\]
For (2), we have by definition:
\[
\bsconf{G}{L}{\nameN}{x} \to \bsconf{G}{L}{\nameN}{v}\qquad\text{where $v = L(x)$}
\]
Therefore (1) $=$ (2).
\end{itemize}

With base cases done, we now proceed with the inductive cases. Recall from structural induction that the inductive hypothesis (IH) states that:
\begin{align*}
&\bsconf{G'}{L'}{\nameN'}{S'} \bseval \bsconf{G''}{L''}{\nameN''}{S_{ret}}
\\
&\implies \bsconf{G'}{L'}{\nameN'}{S'} \to^* \bsconf{G''}{L''}{\nameN''}{S_{ret}}
\end{align*}
for any $\bsconf{G'}{L'}{\nameN'}{S'} \bseval \bsconf{G''}{L''}{\nameN''}{S_{ret}}$ derivable from $\bsconf{G}{L}{\nameN}{S}$. Also recall Lemma~\ref{lem:congruence} which states congruence (CG).

\begin{itemize}
%------- CASE -------
\item \textbf{Blocks:}
Let $S = \Sblock{S_1 .. S_n}$, $\vec S_n = S_2, .. , S_n$ and $\nameN' = \nameN \uplus \funsof{\Sblock{S_1 .. S_n}}$. By case analysis on the evaluation of $S_1$, for (1) we have the following cases:
\begin{enumerate}[{[}A{]}]
%------- SUB-CASE -------
\item \textbf{Regular Evaluation:}
\[
    \irule[Block]{
    \irule[SeqReg]{
      \bsconf{G}{L}{\nameN'}{S_1}
      \bsevalseq
      \bsconf{G_1}{L_1}{\nameN'}{\regk}
      \\\\
      \bsconf{G_1}{L_1}{\nameN'}{\vec S_n}
      \bsevalseq
      \bsconf{G_n}{L_n}{\nameN'}{\modeM_n}
    }{
      \bsconf{G}{L}{\nameN'}{S_1 , \vec S_n}
      \bsevalseq
      \bsconf{G_n}{L_n}{\nameN'}{\modeM_n}
    }
    }{
      \bsconf{G}{L}{\nameN}{\Sblock{S_1 .. S_n}}
      \bseval
      \bsconf{G_n}{L_n\restrict L}{\nameN}{\modeM_n}
    }
\]

%------- SUB-CASE -------
\item \textbf{Irregular Evaluation:}
\[
    \irule[Block]{
    \irule[SeqIrreg]{
      \bsconf{G}{L}{\nameN'}{S_1}
      \bseval
      \bsconf{G_1}{L_1}{\nameN'}{\modeM_1}
      \\
      \modeM_1 \neq \regk
    }{
      \bsconf{G}{L}{\nameN'}{S_1 , \vec S_n}
      \bsevalseq
      \bsconf{G_1}{L_1}{\nameN'}{\modeM_1}
    }
    }{
      \bsconf{G}{L}{\nameN}{\Sblock{S_1 .. S_n}}
      \bseval
      \bsconf{G_1}{L_1\restrict L}{\nameN}{\modeM_1}
    }
\]
\end{enumerate}
Now, for (2) we have the following cases:
\begin{enumerate}[(A)]
%------- SUB-CASE -------
\item Assume [A] is the case. We have the following trace:
\begin{align*}
&\bsconf{G}{L}{\nameN}{\Sblock{S_1 .. S_n}}
\\
&\to
\bsconf{G}{L}{\nameN'}{\Sblock{S_1,\vec S_n}_L^\nameN}
\\
&\to^*
\bsconf{G_1}{L_1}{\nameN'}{\Sblock{\regk,\vec S_n}_L^\nameN}
&&\text{by CG and IH$\bsconf{G}{L}{\nameN'}{S_1}$ in [A]}
\\
&\to
\bsconf{G_1}{L_1}{\nameN'}{\Sblock{\vec S_n}_L^\nameN}
\\
&\to^*
\bsconf{G_n}{L_n}{\nameN'}{\Sblock{\modeM_n}_L^\nameN}
&&\text{by CG and IH$\bsconf{G_1}{L_1}{\nameN'}{\vec S_n}$ in [A]}
\\
&\to
\bsconf{G_n}{L_n\restrict L}{\nameN}{{\modeM_n}}
\end{align*}
Therefore [A] $\implies$ (A).

%------- SUB-CASE -------
\item Assume [B] is the case. We have the following trace:
\begin{align*}
&\bsconf{G}{L}{\nameN}{\Sblock{S_1 .. S_n}}
\\
&\to
\bsconf{G}{L}{\nameN'}{\Sblock{S_1,\vec S_n}_L^\nameN}
\\
&\to^*
\bsconf{G_1}{L_1}{\nameN'}{\Sblock{\modeM_1,\vec S_n}_L^\nameN}
&\text{by CG and IH$\bsconf{G}{L}{\nameN'}{S_1}$ in [B]}
\\
&\to
\bsconf{G_1}{L_1\restrict L}{\nameN}{{\modeM_1}}
&\text{}
\end{align*}
Therefore [B] $\implies$ (B).
\end{enumerate}
Since [A] $\implies$ (A) and [B] $\implies$ (B), we have (1) $\implies$ (2).

%------- CASE -------
\item \textbf{Variable Manipulation:} Let us consider rule \iref{tvardecl}. Let $S = \Svardecl{\vec x}{M}$. Assuming $\vec x \not\in\dom{L}$, for (1) we have:
\[
    \irule[VarDecl]{
      \bsconf{G}{L}{\nameN}{M}
      \bsevalexp
      \bsconf{G_1}{L_1}{\nameN}{\tuple{\vec v}}
    }{
      \bsconf{G}{L}{\nameN}{\Svardecl{\vec x}{M}}
      \bseval
      \bsconf{G_1}{L_1[\vec x \mapsto \vec v]}{\nameN}{\regk}
    }
\]
For (2) we have:
\begin{align*}
&\bsconf{G}{L}{\nameN}{\Svardecl{\vec x}{M}}
\\
&\to^*\bsconf{G_1}{L_1}{\nameN}{\Svardecl{\vec x}{\tuple{\vec v}}} &\text{by CG and IH$\bsconf{G}{L}{\nameN}{M}$}
\\
&\to\bsconf{G_1}{L_1[\vec x \mapsto \vec v]}{\nameN}{\regk}
\end{align*}
Therefore (1) $\implies$ (2) for \iref{tvardecl}. Cases \iref{vardecl}, \iref{assign} and \iref{tassign} are similar to above, but with singleton values $v$ for \iref{vardecl} and \iref{assign}, and assumption $\vec x \in\dom{L}$ for \iref{vardecl} and \iref{assign}.

%------- CASE -------
\item \textbf{Conditional Branching:} Let us consider rule \iref{iftrue}. Let $S = \Scond{M}{{ S}}$. For (1) we have:
\[
    \irule[IfTrue]{
      \bsconf{G}{L}{\nameN}{M}
      \bsevalexp
      \bsconf{G_0}{L_0}{\nameN}{\true}
      \\\\
      \bsconf{G_0}{L_0}{\nameN}{{ S}}
      \bseval
      \bsconf{G_1}{L_1}{\nameN}{\modeM}
    }{
      \bsconf{G}{L}{\nameN}{\Scond{M}{{ S}}}
      \bseval
      \bsconf{G_1}{L_1}{\nameN}{\modeM}
    }
\]
For (2) we have:
\begin{align*}
&\bsconf{G}{L}{\nameN}{\Scond{M}{{ S}}}
\\
&\to^*\bsconf{G_0}{L_0}{\nameN}{\Scond{\true}{{ S}}}
&&\text{by CG and IH$\bsconf{G}{L}{\nameN}{M}$}
\\
&\to\bsconf{G_0}{L_0}{\nameN}{S}
\\
&\to^*\bsconf{G_1}{L_1}{\nameN}{\modeM}
&&\text{by CG and IH$\bsconf{G_0}{L_0}{\nameN}{{S}}$}
\end{align*}
Therefore (1) $\implies$ (2) for \iref{iftrue}. Case \iref{iffalse} is a simpler version of the above with $\false$ instead of $\true$, $\regk$ in place of $\modeM$, and no application of IH$\bsconf{G_0}{L_0}{\nameN}{{ S}}$.

%------- CASE -------
\item \textbf{Switch Branching:} Cases for \iref{switchcase} and \iref{switchdef} are similar to the case for \iref{iftrue}, but with $v$ in place of $\true$, and either $S_i$ or $S_d$ in place of $S$ if $v = c_i$ or $v \not\in\{c_1, .. , c_n\}$ respectively.

%------- CASE -------
\item \textbf{Loop Initialisation:} Let $\Sfor{\Sblock{S_1 .. S_n}}{M}{{S_{p}}}{{ S_{b}}}$. For (1) we have:
\[
    \irule[ForInit]{
      \bsconf{G_1}{L_1}{\nameN}{\Sblock{S_1 .. S_n\,\Sfor{\Sblock{}}{M}{S_{p}}{ S_{b}}}}
      \bseval
      \bsconf{G_2}{L_2}{\nameN}{\modeM}
      \\
      n > 0
    }{
      \bsconf{G_1}{L_1}{\nameN}{\Sfor{\Sblock{S_1 .. S_n}}{M}{{S_{p}}}{{ S_{b}}}}
      \bseval
      \bsconf{G_2}{L_2}{\nameN}{\modeM}
    }
\]
For (2) we have:
\begin{align*}
&\bsconf{G_1}{L_1}{\nameN}{\Sfor{\Sblock{S_1 .. S_n}}{M}{{S_{p}}}{{ S_{b}}}}
\\
&\to\bsconf{G_1}{L_1}{\nameN}{\Sblock{S_1 .. S_n\,\Sfor{\Sblock{}}{M}{S_{p}}{ S_{b}}}}
\\
&\to^*\bsconf{G_2}{L_2}{\nameN}{\modeM} \qquad\text{by IH$\bsconf{G_1}{L_1}{\nameN}{\Sblock{S_1 .. S_n\,\Sfor{\Sblock{}}{M}{S_{p}}{ S_{b}}}}$}
\end{align*}
Therefore (1) $\implies$ (2).

%------- CASE -------
\item \textbf{Loop False:} Let $S = \Sfor{\Sblock{}}{M}{{ S_{p}}}{{ S_{b}}}$. For (1) we have:
\[
    \irule[ForFalse]{
      \bsconf{G}{L}{\nameN}{M}
      \bsevalexp
      \bsconf{G_1}{L_1}{\nameN}{\false}
    }{
      \bsconf{G}{L}{\nameN}{\Sfor{\Sblock{}}{M}{{ S_{p}}}{{ S_{b}}}}
      \bseval
      \bsconf{G_1}{L_1}{\nameN}{\regk}
    }
\]
For (2) we have:
\begin{align*}
&\bsconf{G}{L}{\nameN}{\Sfor{\Sblock{}}{M}{{ S_{p}}}{{ S_{b}}}}
\\
&\to
\bsconf{G}{L}{\nameN}{
\ssbreak{
    \Scond{M}{
      \Sblock{
        \sscont{S_b} S_p
        \Sfor{\Sblock{}}{M}{{S_{p}}}{{ S_{b}}}
      }
    }
  }
}
\\
&\to^*
\bsconf{G_1}{L_1}{\nameN}{
\ssbreak{
    \Scond{\false}{
      \Sblock{
        \sscont{S_b} S_p
        \Sfor{\Sblock{}}{M}{{S_{p}}}{{ S_{b}}}
      }
    }
  }
}\quad\text{by CG and IH$\langle M \mid \dots \rangle$}
\\
&\to
\bsconf{G_1}{L_1}{\nameN}{
\ssbreak{\regk}
}
\\
&\to
\bsconf{G_1}{L_1}{\nameN}{
\regk
}
\end{align*}

%------- CASE -------
\item \textbf{Loop True:} Let $S = \Sfor{\Sblock{}}{M}{{ S_{p}}}{{ S_{b}}}$. For (1) we have the following cases:
\begin{enumerate}[{[}A{]}]
\item \textbf{Loop-Body Halt:}
\[
    \irule[ForHalt1]{
      (\modeM_1,\modeM_2) \in \{(\breakk,\regk) , (\leavek,\leavek)\}
      \\\\
      \bsconf{G}{L}{\nameN}{M}
      \bsevalexp
      \bsconf{G_1}{L_1}{\nameN}{\true}
      \\\\
      \bsconf{G_1}{L_1}{\nameN}{{ S_{b}}}
      \bseval
      \bsconf{G_2}{L_2}{\nameN}{\modeM_1}
    }{
      \bsconf{G}{L}{\nameN}{\Sfor{\Sblock{}}{M}{{ S_{p}}}{{ S_{b}}}}
      \bseval
      \bsconf{G_2}{L_2}{\nameN}{\modeM_2}
    }
\]
\item \textbf{Post-Iteration Halt:}
\[
    \irule[ForHalt2][forhalt2]{
      \bsconf{G}{L}{\nameN}{M}
      \bsevalexp
      \bsconf{G_1}{L_1}{\nameN}{\true}
      \\\\
      \bsconf{G_1}{L_1}{\nameN}{{ S_{b}}}
      \bseval
      \bsconf{G_2}{L_2}{\nameN}{\modeM}
      \\\modeM \not\in \{\leavek , \breakk\}
      \\\\
      \bsconf{G_2}{L_2}{\nameN}{{ S_{p}}}
      \bseval
      \bsconf{G_3}{L_3}{\nameN}{\leavek}
    }{
      \bsconf{G}{L}{\nameN}{\Sfor{\Sblock{}}{M}{{ S_{p}}}{{ S_{b}}}}
      \bseval
      \bsconf{G_3}{L_3}{\nameN}{\leavek}
    }
\]
\item \textbf{Loop Iteration:}
\[
    \irule[ForLoop][forloop]{
      \bsconf{G}{L}{\nameN}{M}
      \bsevalexp
      \bsconf{G_1}{L_1}{\nameN}{\true}
      \\\\
      \bsconf{G_1}{L_1}{\nameN}{{ S_{b}}}
      \bseval
      \bsconf{G_2}{L_2}{\nameN}{\modeM}
      \\\modeM \not\in \{\leavek , \breakk\}
      \\\\
      \bsconf{G_2}{L_2}{\nameN}{{ S_{p}}}
      \bseval
      \bsconf{G_3}{L_3}{\nameN}{\regk}
      \\\\
      \bsconf{G_3}{L_3}{\nameN}{\Sfor{\Sblock{}}{M}{{ S_{p}}}{{ S_{b}}}}
      \bseval
      \bsconf{G_4}{L_4}{\nameN}{\modeM_1}
    }{
      \bsconf{G}{L}{\nameN}{\Sfor{\Sblock{}}{M}{{ S_{p}}}{{ S_{b}}}}
      \bseval
      \bsconf{G_4}{L_4}{\nameN}{\modeM_1}
    }
\]
\end{enumerate}

For (2) we have:
\begin{align*}
&\bsconf{G}{L}{\nameN}{\Sfor{\Sblock{}}{M}{{ S_{p}}}{{ S_{b}}}}
\\
&\to
\bsconf{G}{L}{\nameN}{
\ssbreak{
    \Scond{M}{
      \Sblock{
        \sscont{S_b} S_p
        \Sfor{\Sblock{}}{M}{{S_{p}}}{{ S_{b}}}
      }
    }
  }
}
\\
&\to^*
\bsconf{G_1}{L}{\nameN}{
\ssbreak{
    \Scond{\true}{
      \Sblock{
        \sscont{S_b} S_p
        \Sfor{\Sblock{}}{M}{{S_{p}}}{{ S_{b}}}
      }
    }
  }
}&\text{by CG and IH$\bsconf{G}{L}{\nameN}{M}$}\\
&\to
\bsconf{G_1}{L}{\nameN}{
\ssbreak{
      \Sblock{
        \sscont{S_b} S_p
        \Sfor{\Sblock{}}{M}{{S_{p}}}{{ S_{b}}}
      }
  }
}
\end{align*}
From here we have the following cases:
\begin{enumerate}[(A)]
%------- SUB-CASE -------
\item Assume [A] is the case, such that $(\modeM_1,\modeM_2) \in \{(\breakk,\regk) , (\leavek,\leavek)\}$. We continue the trace as follows:
\begin{align*}
&\to^*
\bsconf{G_2}{L_2}{\nameN}{
\ssbreak{
      \Sblock{
        \sscont{\modeM_1} S_p
        \Sfor{\Sblock{}}{M}{{S_{p}}}{{ S_{b}}}
      }
  }
}\quad\text{by CG, IH$\langle S_b \dots \rangle$ for [A]}
\\
&\to
\bsconf{G_2}{L_2}{\nameN}{
\ssbreak{
      \Sblock{
        \modeM_1 S_p
        \Sfor{\Sblock{}}{M}{{S_{p}}}{{ S_{b}}}
      }
  }
}
\\
&\to
\bsconf{G_2}{L_2}{\nameN}{
\ssbreak{\modeM_1}
}
\\
&\to
\bsconf{G_2}{L_2}{\nameN}{\modeM_2}
\end{align*}
Therefore [A] $\implies$ (A).

%------- SUB-CASE -------
\item Assume [B] is the case, such that $\modeM \in \{\regk, \continuek\}$. We continue the trace as follows:
\begin{align*}
&\to^*
\bsconf{G_2}{L_2}{\nameN}{
\ssbreak{
      \Sblock{
        \sscont{\modeM} S_p
        \Sfor{\Sblock{}}{M}{{S_{p}}}{{ S_{b}}}
      }
  }
}
&&\text{by CG, IH$\langle S_b \dots \rangle$ for [B]}
\\
&\to
\bsconf{G_2}{L_2}{\nameN}{
\ssbreak{
      \Sblock{
        \regk S_p
        \Sfor{\Sblock{}}{M}{{S_{p}}}{{ S_{b}}}
      }
  }
}
\\
&\to
\bsconf{G_2}{L_2}{\nameN}{
\ssbreak{
      \Sblock{
        S_p
        \Sfor{\Sblock{}}{M}{{S_{p}}}{{ S_{b}}}
      }
  }
}
\\
&\to^*
\bsconf{G_3}{L_3}{\nameN}{
\ssbreak{
      \Sblock{
        \leavek~
        \Sfor{\Sblock{}}{M}{{S_{p}}}{{ S_{b}}}
      }
  }
}
&&\text{by CG, IH$\langle S_p \dots \rangle$ for [B]}
\\
&\to
\bsconf{G_3}{L_3}{\nameN}{
\ssbreak{\leavek}
}
\\
&\to
\bsconf{G_3}{L_3}{\nameN}{
\leavek
}
\end{align*}
Therefore [B] $\implies$ (B).

%------- SUB-CASE -------
\item Assume [C] is the case, such that $\modeM \in \{\regk, \continuek\}$ and $\modeM \in \{\regk, \leavek\}$. We continue the trace as follows:
\begin{align*}
&\to^*
\bsconf{G_2}{L_2}{\nameN}{
\ssbreak{
      \Sblock{
        \sscont{\modeM} S_p
        \Sfor{\Sblock{}}{M}{{S_{p}}}{{ S_{b}}}
      }
  }
}\quad\text{by CG, IH$\langle S_b \dots \rangle$ for [C]}
\\
&\to
\bsconf{G_2}{L_2}{\nameN}{
\ssbreak{
      \Sblock{
        \regk S_p
        \Sfor{\Sblock{}}{M}{{S_{p}}}{{ S_{b}}}
      }
  }
}
\\
&\to
\bsconf{G_2}{L_2}{\nameN}{
\ssbreak{
      \Sblock{
        S_p
        \Sfor{\Sblock{}}{M}{{S_{p}}}{{ S_{b}}}
      }
  }
}
\\
&\to^*
\bsconf{G_3}{L_3}{\nameN}{
\ssbreak{
      \Sblock{
        \regk~
        \Sfor{\Sblock{}}{M}{{S_{p}}}{{ S_{b}}}
      }
  }
}\quad\text{by CG, IH$\langle S_p \dots \rangle$ for [C]}
\\
&\to
\bsconf{G_3}{L_3}{\nameN}{
\ssbreak{
      \Sblock{
        \Sfor{\Sblock{}}{M}{{S_{p}}}{{ S_{b}}}
      }
  }
}
\\
&\to^*
\bsconf{G_3}{L_3}{\nameN}{
\ssbreak{
      \Sblock{
        \modeM_1
      }
  }
}\quad^\text{by CG, IH$\bsconf{G_3}{L_3}{\nameN}{\Sfor{\Sblock{}}{M}{{ S_{p}}}{{ S_{b}}}}$ for [C]}
\\
&\to
\bsconf{G_3}{L_3}{\nameN}{
\ssbreak{
        \modeM_1
  }
}
\\
&\to
\bsconf{G_3}{L_3}{\nameN}{
\modeM_1
}\quad\text{since $\modeM_1 \neq \breakk$ by \cref{lem:break:cont} direction of on $\langle  \Sfor{\Sblock{}}{M}{{ S_{p}}}{{ S_{b}}} \dots \rangle$}
\end{align*}
Therefore [C] $\implies$ (C).
\end{enumerate}
Since [A,B,C] $\implies$ (A,B,C), we have (1) $\implies$ (2).
\item \textbf{Function Call:}
Let $S = \Efuncall{x}{M_1 , .. , M_n}$. For (1) we have:
\[
    \irule[FunCall]{
      \bsconf{G}{L}{\nameN}{M_n}
      \bsevalexp
      \bsconf{G_1}{L_1}{\nameN}{v_n}
      \\\\
      :
      \\\\
      \bsconf{G_{n-1}}{L_{n-1}}{\nameN}{M_1}
      \bsevalexp
      \bsconf{G_n}{L_n}{\nameN}{v_1}
      \\\\
      \nameN(x) = (\{y_1,..,y_n\},\{z_1,..,z_m\},S_{b})
      \\
      L' = \{\{y_1,..,y_n\} \mapsto \{v_1, .. , v_n\}\} \uplus \{\{z_1,..,z_m\} \mapsto 0\}
      \\\\
      \bsconf{G_n}{L'}{\nameN}{{S_b}}
      \bseval
      \bsconf{G''}{L''}{\nameN}{\modeM}
      \\\\
      S_{ret} =
      \nbox{
      \begin{cases}
      L''(z_1) &\text{if $m = 1$} \\
      \tuple{L''(z_1),..,L''(z_m)} &\text{if $m > 1$} \\
      \regk &\text{otherwise}
      \end{cases}
      }
    }{
      \bsconf{G}{L}{\nameN}{\Efuncall{x}{M_1 , .. , M_n}}
      \bsevalexp
      \bsconf{G''}{L_n}{\nameN}{S_{ret}}
    }
\]
For (2) we have:
\begin{align*}
&\bsconf{G}{L}{\nameN}{\Efuncall{x}{M_1 , .. , M_n}} \\
&\to^*\bsconf{G_1}{L_1}{\nameN}{\Efuncall{x}{M_1 , .. M_{n-1}, v_n}}
&&\quad\text{by CG and IH$\bsconf{G}{L}{\nameN}{M_n}$}\\
&\to^*\bsconf{G_2}{L_2}{\nameN}{\Efuncall{x}{M_1 , .. v_{n-1}, v_n}}
&&\quad\text{by CG and IH$\bsconf{G_1}{L_1}{\nameN}{M_{n-1}}$}\\
&\dots\\
&\to^*\bsconf{G_{n}}{L_{n}}{\nameN}{\Efuncall{x}{v_1 , .. , v_n}}
&&\quad\text{by CG and IH$\bsconf{G_{n-1}}{L_{n-1}}{\nameN}{M_1}$}\\
&\to\bsconf{G_{n}}{L'}{\nameN}{\ssframe{S_b}^{\vec z}_{L}}
&&\quad\text{where $\nameN(x) = (\vec y, \vec z, S_b)$}\\
&\to^*\bsconf{G''}{L''}{\nameN}{\ssframe{\modeM}^{\vec z}_{L_n}}
&&\quad\text{by CG and IH$\bsconf{G_{n}}{L'}{\nameN}{S_b}$}\\
&\to\bsconf{G''}{L_n}{\nameN}{S_{ret}}
&&
\begin{cases}
S_{ret} = L''(z_1)                     & m = 1\\
S_{ret} = \tuple{L''(z_1),..,L''(z_m)} & m > 1\\
S_{ret} = \regk                        & m = 0\\
\end{cases}
\end{align*}
Therefore (1) $\implies$ (2).
\item \textbf{Opcode Call:} \iref{opcall} is a simpler case to \iref{funcall} above.
\end{itemize}
\qed
%\end{proof}

%%% Local Variables:
%%% mode: latex
%%% TeX-master: "main"
%%% End:

  %============== LEMMA ==============
\begin{lemma}[Semantic Entailment 2]\label{lem:2}
Given statement $S$ in source syntax, global environment $G$, state $L$, namespace $\nameN$, %and a non-stuck return statement S_{ret} \in \{\modeM, v, \tuple{\vec v}\}$ such that %$S_{ret}\neq \mathcal{M}$,
then for a configuration $\bsconf{G}{L}{\nameN}{S}$:
\begin{enumerate}[(1)]
\item $\bsconf{G}{L}{\nameN}{S} \to^* \bsconf{G'}{L'}{\nameN'}{S_{ret}}$
\item $\bsconf{G}{L}{\nameN}{S} \bseval \bsconf{G'}{L'}{\nameN'}{S_{ret}}$
\end{enumerate}
it is the case that (1) $\implies$ (2).
\end{lemma}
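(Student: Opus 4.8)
The plan is to prove the implication directly by induction on the length $k$ of the reduction $\bsconf{G}{L}{\nameN}{S}\to^*\bsconf{G'}{L'}{\nameN'}{S_{ret}}$, with a case analysis on the shape of $S$ that mirrors, in reverse, the one used for \cref{lem:1}. When $k=0$ we have $S=S_{ret}$, so $S$ is already a value, a tuple, or a mode, and the corresponding big-step judgement holds by the evident reflexivity for such terms. For $k>0$ the first transition is forced by the unique evaluation-context decomposition of $S$ (the small-step rules being syntax-directed on the redex), and the substance of the argument is the reconstruction of a $\bseval$-derivation from the $\to^*$-trace.

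The key technical tool, dual to the congruence \cref{lem:congruence}, is a \emph{reduction-decomposition} fact: if $\bsconf{G}{L}{\nameN}{E[S_0]}\to^*\bsconf{G'}{L'}{\nameN'}{T}$ with $T$ final and $S_0$ not already final, then the trace factors as a strictly shorter sub-trace $\bsconf{G}{L}{\nameN}{S_0}\to^*\bsconf{G_1}{L_1}{\nameN_1}{T_0}$ to some final $T_0$, followed by $\bsconf{G_1}{L_1}{\nameN_1}{E[T_0]}\to^*\bsconf{G'}{L'}{\nameN'}{T}$ (if $T_0$ is an external mode the second part is the single discarding step $E[T_0]\to T_0$). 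This holds by inspection of the reduction rules: a context-consuming step is enabled only once the content of the hole is a value, a tuple, or a mode, so while $S_0$ is not final every step of $\bsconf{G}{L}{\nameN}{E[S_0]}\to\cdots$ is a step of $S_0$ in isolation, and since the whole trace terminates $S_0$ must reach a final form. With this in hand the inductive step is routine for most forms of $S$: for $\Sblock{S_1..S_n}$, peel the block-entry step, repeatedly apply the decomposition fact to carve off the sub-trace of $S_1$ (apply the IH), the administrative $\Sblock{\regk,\vec S}_L^\nameN\to\Sblock{\vec S}_L^\nameN$ step, the sub-trace of $S_2$, and so on, reassemble these via \iref{seqempty}/\iref{seqreg}/\iref{seqirreg} into a $\bsevalseq$-derivation, close with \iref{block}, and use \cref{lem:bound:l} to justify that the final restriction discards exactly the variables introduced inside the block; for $\Svardecl{\vec x}{M}$, $\Sassign{\vec x}{M}$, $\Scond{M}{S}$ and $\Sswitch{M}{\dots}$, carve off the sub-trace evaluating $M$ (apply the IH), then match the single administrative step — and, for the conditional/switch, a further IH on the chosen branch — against the corresponding big-step rule; for $x$ the trace is the single \iref{ident}-step; and for $\Efuncall{f}{\vec M}$ and $\Efuncall{\opcode}{\vec M}$ we follow exactly the trace-factoring already displayed in the reverse direction of \cref{theorem:equiv} (right-to-left sub-traces of the $M_i$, the call step into $\ssframe{S_b}_L^{\vec z}$, the IH on the body, the return step; for opcodes the $\toop$ premise is read off directly).

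The main obstacle is the for-loop. After peeling the initialiser-refactoring step (when the initialiser block is non-empty, reducing the loop to a block and deferring to the block case) and the loop-unrolling step, the remaining trace of $\ssbreak{\Scond{M}{\Sblock{\sscont{S_b}\,S_p\,\fork\dots}}}$ interleaves genuine sub-traces (of $M$, of $S_b$ inside $\sscont{\cdot}$, of $S_p$, and of the tail loop) with a cascade of administrative steps through the $\ssbreak{\cdot}$, $\sscont{\cdot}$ and block contexts. The plan is to apply the decomposition fact in turn to $M$ (IH), then on the $\true$ branch to $S_b$ (IH), case-splitting on the mode it produces to separate the \iref{forhalt1} situation ($\breakk$ or $\leavek$) from the continuing one ($\regk$ or $\continuek$); in the continuing case, decompose off $S_p$ (IH) and case-split again — $\leavek$ yielding \iref{forhalt2}, $\regk$ handing control to the tail loop, which is source syntax and whose sub-trace is a proper prefix, so the IH applies and we close with \iref{forloop}. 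Discharging the side conditions of the loop rules is where the auxiliary results are needed: \cref{lem:break:cont} to rule out the overall loop result being $\breakk$ or $\continuek$ (so that the reconstructed \iref{forloop}/\iref{forhalt2} instances are well-formed), and \cref{lem:bound:l} together with \cref{lem:l:invariance:loops} to line up the local-state domains across block entries and exits. A minor but pervasive bookkeeping obligation throughout is to check that every appeal to the IH is on a \emph{strictly} shorter trace, which the decomposition fact guarantees by returning a proper prefix.
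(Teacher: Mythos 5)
Your proposal follows essentially the same route as the paper's proof: induction on the length of the reduction sequence with case analysis on the shape of $S$, factoring the terminating trace into sub-traces of the sub-terms, applying the IH to each, and reassembling the big-step derivation with the matching rules, invoking \cref{lem:break:cont}, \cref{lem:bound:l} and \cref{lem:l:invariance:loops} exactly where the paper does. The only difference is presentational: you state the trace-factoring property as an explicit reduction-decomposition lemma dual to \cref{lem:congruence}, whereas the paper assumes the factored shape of each trace implicitly case by case.
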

\subsection{Proof of Semantic Entailment 2}\label{appendix:entailment:2}
%============== PROOF ==============
%\begin{proof}
Consider a configuration $\bsconf{G}{L}{\nameN}{S}$ such that
\[\bsconf{G}{L}{\nameN}{S} \to^* \bsconf{G'}{L'}{\nameN'}{S_{r}}\]
Recall Lemma~\ref{lem:congruence} for congruence (CG).
By induction on the length of the reduction sequence and case analysis on the shape of the term, we have the following cases:
\begin{itemize}

%======= CASE: LOOPS =======
\item \textbf{Loops:}
Let $S = \Sfor{S_i}{M}{ S_{p}}{ S_{b}}$. For (1) we have the following cases:
\begin{enumerate}[{[}A{]}]
%======= ---- LOOPS: [A] =======
\item \textbf{Empty initialisation block:}
Let $S_i = \Sblock{}$. We have:
\begin{flalign*}
&\bsconf{G}{L}{\nameN}{\Sfor{\Sblock{}}{M}{{ S_{p}}}{{ S_{b}}}} &\\
&\to\bsconf{G}{L}{\nameN}{
  \ssbreak{
    \Scond{M}{
      \Sblock{
        \sscont{S_b} S_p
        \Sfor{\Sblock{}}{M}{{S_{p}}}{{ S_{b}}}
      }
    }
  }
} &\text{}
\end{flalign*}
By case analysis on the evaluation of $M$:
\begin{enumerate}[{[}I{]}]
%======= ---- LOOPS: [A][I] =======
\item \textbf{False:}
Assume $\bsconf{G}{L}{\nameN}{M}\to^*\bsconf{G'}{L'}{\nameN}{\false}$. Then:
\begin{flalign*}
&\to^*\bsconf{G'}{L'}{\nameN}{
  \ssbreak{
    \Scond{\false}{
      \Sblock{
        \sscont{S_b} S_p
        \Sfor{\Sblock{}}{M}{{S_{p}}}{{ S_{b}}}
      }
    }
  }
} &\text{by CG}\\
&\to\bsconf{G'}{L'}{\nameN}{
  \ssbreak{\regk}
}\\
&\to\bsconf{G'}{L'}{\nameN}{\regk} &
\end{flalign*}

%======= ---- LOOPS: [A](II) =======
\item \textbf{True:}
Assume $\bsconf{G}{L}{\nameN}{M}\to^*\bsconf{G'}{L}{\nameN}{\true}$ (since expressions cannot modify the top-level $L$) and let $\nameN' = \funsof{\Sfor{\Sblock{}}{M}{{S_{p}}}{{ S_{b}}}}$. Then:
\begin{flalign*}
&\to^*\bsconf{G'}{L}{\nameN}{
  \ssbreak{
    \Scond{\true}{
      \Sblock{
        \sscont{S_b} S_p
        \Sfor{\Sblock{}}{M}{{S_{p}}}{{ S_{b}}}
      }
    }
  }
} &\text{by CG}\\
&\to\bsconf{G'}{L}{\nameN}{
  \ssbreak{
      \Sblock{
        \sscont{S_b} S_p
        \Sfor{\Sblock{}}{M}{{S_{p}}}{{ S_{b}}}
      }
  }
} &
\\
&\to\bsconf{G'}{L}{\nameN}{
  \ssbreak{
      \Sblock{
        \sscont{S_b} S_p
        \Sfor{\Sblock{}}{M}{{S_{p}}}{{ S_{b}}}
      }_{L}^\nameN
  }
} &\text{since $\nameN' = \varnothing$}
\end{flalign*}
By case analysis on the evaluation of $S_b$:
\begin{enumerate}[{[}a{]}]
%======= ---- LOOPS: [A](II)(a) =======
\item \textbf{Continue:}
Assume $\bsconf{G'}{L}{\nameN}{S_b}\to^*\bsconf{G''}{L''}{\nameN}{\continuek}$:
\begin{flalign*}
&\to^*\bsconf{G''}{L''}{\nameN}{
  \ssbreak{
      \Sblock{
        \sscont{\continuek} S_p
        \Sfor{\Sblock{}}{M}{{S_{p}}}{{ S_{b}}}
      }_{L}^\nameN
  }
} &\text{by CG}
\\
&\to\bsconf{G''}{L''}{\nameN}{
  \ssbreak{
      \Sblock{
        \regk S_p
        \Sfor{\Sblock{}}{M}{{S_{p}}}{{ S_{b}}}
      }_{L}^\nameN
  }
} &
\\
&\to^*\bsconf{G''}{L''}{\nameN}{
  \ssbreak{
      \Sblock{
        S_p
        \Sfor{\Sblock{}}{M}{{S_{p}}}{{ S_{b}}}
      }_{L}^\nameN
  }
} &
\end{flalign*}
By cases on evaluation of $S_p$, given Definition~\ref{def:halt:restrict} and \cref{lem:break:cont} with \cref{lem:1} (i.e. can only evaluate to \leavek and \regk):
\begin{enumerate}[{[}i{]}]
%======= ---- LOOPS: [A][II][a][i] =======
\item \textbf{Leave:}
Assume $\bsconf{G''}{L''}{\nameN}{S_p}\to^*\bsconf{G'''}{L'''}{\nameN}{\leavek}$:
\begin{flalign*}
&\to^*\bsconf{G'''}{L'''}{\nameN}{
  \ssbreak{
      \Sblock{
        \leavek\,
        \Sfor{\Sblock{}}{M}{{S_{p}}}{{ S_{b}}}
      }_{L}^\nameN
  }
} &\text{by CG}
\\
&\to^*\bsconf{G'''}{L'''\restrict L}{\nameN}{
  \ssbreak{\leavek}
} &
\\
&\to\bsconf{G'''}{L'''}{\nameN}{
  \leavek
} &\text{by Lem.\ref{lem:l:invariance:loops}}
\end{flalign*}

%======= ---- LOOPS: [A][II][a][ii] =======
\item \textbf{Regular:}
Assume $\bsconf{G''}{L''}{\nameN}{S_p}\to^*\bsconf{G_3}{L_3}{\nameN}{\regk}$ and $\bsconf{G_3}{L_3}{\nameN}{\Sfor{\Sblock{}}{M}{{S_{p}}}{{ S_{b}}}}
\to^*
\bsconf{G_4}{L_4}{\nameN}{\modeM}$:
\begin{flalign*}
&\to^*\bsconf{G_3}{L_3}{\nameN}{
  \ssbreak{
      \Sblock{
        \regk\,
        \Sfor{\Sblock{}}{M}{{S_{p}}}{{ S_{b}}}
      }_{L}^\nameN
  }
} &
\\
&\to\bsconf{G_3}{L_3}{\nameN}{
  \ssbreak{
      \Sblock{
        \Sfor{\Sblock{}}{M}{{S_{p}}}{{ S_{b}}}
      }_{L}^\nameN
  }
} &\\
&\to^*\bsconf{G_4}{L_4}{\nameN}{
  \ssbreak{
      \Sblock{
        \modeM
      }_{L}^\nameN
  }
}&\\
&\to^*\bsconf{G_4}{L_4}{\nameN}{\modeM} &\text{by Lem.\ref{lem:l:invariance:loops} \& \ref{lem:bound:l}}
\end{flalign*}
where $\modeM \in \{\regk,\leavek\}$ by Def.\ref{def:halt:restrict}, \cref{lem:break:cont} with \cref{lem:1}.
\end{enumerate}

%======= ---- LOOPS: [A][II][b] =======
\item \textbf{Regular:}
Assume $\bsconf{G'}{L'}{\nameN}{S_b}\to^*\bsconf{G''}{L''}{\nameN}{\regk}$:
\begin{flalign*}
&\to^*\bsconf{G'}{L'}{\nameN}{
  \ssbreak{
      \Sblock{
        \sscont{\regk} S_p
        \Sfor{\Sblock{}}{M}{{S_{p}}}{{ S_{b}}}
      }
  }
} &\\
&\to\bsconf{G'}{L'}{\nameN}{
  \ssbreak{
      \Sblock{
        \regk S_p
        \Sfor{\Sblock{}}{M}{{S_{p}}}{{ S_{b}}}
      }
  }
} &
\end{flalign*}
Then proceeds similar to case [A][II][a].

%======= ---- LOOPS: [A][II][c] =======
\item \textbf{Regular:}
Assume $\bsconf{G'}{L'}{\nameN}{S_b}\to^*\bsconf{G''}{L''}{\nameN}{\breakk}$:
\begin{flalign*}
&\to^*\bsconf{G''}{L''}{\nameN}{
  \ssbreak{
      \Sblock{
        \sscont{\breakk} S_p
        \Sfor{\Sblock{}}{M}{{S_{p}}}{{ S_{b}}}
      }
  }
} &\\
&\to\bsconf{G''}{L''}{\nameN}{
  \ssbreak{
      \Sblock{
        \breakk\,
        \Sfor{\Sblock{}}{M}{{S_{p}}}{{ S_{b}}}
      }
  }
} &\\
&\to^*\bsconf{G''}{L''}{\nameN}{
  \ssbreak{ \breakk }
} & \\
&\to\bsconf{G''}{L''}{\nameN}{
  \regk
} &
\end{flalign*}

\end{enumerate}
\end{enumerate}

%======= ---- LOOPS: [B] =======
\item \textbf{Non-empty initialisation block:}
Let $S_i = \Sblock{S_1..S_n}$ and let $\nameN = \funsof{S_1..S_n\Sfor{\Sblock{}}{M}{{ S_{p}}}{{ S_{b}}}}$. We have:
\begin{flalign*}
&\bsconf{G}{L}{\nameN}{\Sfor{\Sblock{S_1..S_n}}{M}{{ S_{p}}}{{ S_{b}}}} &
\\
&\to\bsconf{G}{L}{\nameN}{\Sblock{S_1..S_n\Sfor{\Sblock{}}{M}{{ S_{p}}}{{ S_{b}}}}} &
\\
&\to\bsconf{G}{L}
{\nameN}
{\Sblock
{S_1..S_n\Sfor{\Sblock{}}{M}{{ S_{p}}}{{ S_{b}}}}_L^\nameN
} &\text{since $\nameN' = \varnothing$}
\end{flalign*}
From here, by case analysis on the evaluation of $\Sblock{S_1..S_n}$:
\begin{enumerate}[{[}I{]}]
%======= ---- LOOPS: [B][I] =======
\item \textbf{Leave:}
Assume $\bsconf{G}{L}{\nameN}{S_i}\to^*\bsconf{G'}{L'}{\nameN}{\leavek}$. Then:
\begin{flalign*}
&\to^*\bsconf{G'}{L'}{\nameN}
{
\Sblock{\leavek\,\Sfor{\Sblock{}}{M}{{ S_{p}}}{{ S_{b}}}}_L^\nameN
} &\text{by CG}
\\
&\to\bsconf{G'}{L'\restrict L}{\nameN}{\leavek}
\end{flalign*}

%======= ---- LOOPS: [B][II] =======
\item \textbf{Regular:}
Assume $\bsconf{G}{L}{\nameN}{S_i}\to^*\bsconf{G'}{L'}{\nameN}{\regk}$. Then:
\begin{flalign*}
&\to^*\bsconf{G'}{L'}{\nameN}{
\Sblock{\regk\,\Sfor{\Sblock{}}{M}{{ S_{p}}}{{ S_{b}}}}_L^\nameN
} &\text{by CG}
\\
&\to\bsconf{G'}{L'}{\nameN}{
\Sblock{\Sfor{\Sblock{}}{M}{{ S_{p}}}{{ S_{b}}}}_L^\nameN
} &\text{}
\end{flalign*}
which by Lemma~\ref{lem:l:invariance:loops} proceeds similar to case [A] above.
\end{enumerate}
\end{enumerate}
%=========== BEGIN: CASES FOR (2) ===========
For (2) we have the following cases:
\begin{itemize}
\item[(A)(I)] \textbf{Assume [A][I].} We have:
\[
    \irule[ForFalse]{
        \irule[\normalfont{by IH on [A][I]}]{
        }{
          \bsconf{G}{L}{\nameN}{M}
          \bsevalexp
          \bsconf{G'}{L}{\nameN}{\false}
        }
    }{
      \bsconf{G}{L}{\nameN}{\Sfor{\Sblock{}}{M}{{ S_{p}}}{{ S_{b}}}}
      \bseval
      \bsconf{G'}{L}{\nameN}{\regk}
    }
\]
Therefore [A][I] $\implies$ (A)(I).

\item[(A)(II)] \textbf{Assume [A][II].} By IH on [A][II], we know:
\[
 \bsconf{G}{L}{\nameN}{M}
 \bsevalexp
 \bsconf{G'}{L}{\nameN}{\true} \qquad\text{{(A)(II)}}
\]
Then we have the following cases:
\begin{enumerate}
%========= CASE (a)
\item[(a)] \textbf{Assume [a].} By IH on [A][II][a], we know:
\begin{align*}
\bsconf{G'}{L}{\nameN}{S_b}\bseval\bsconf{G''}{L''}{\nameN}{\continuek}
&\qquad\text{{(A)(II)(a)}}
\end{align*}

%========= CASE (a)(i)
\item[~\quad(i)] \textbf{Assume [a][i].} By IH on [A][II][a][i], we know:
\begin{align*}
\bsconf{G''}{L''}{\nameN}{S_p}\bseval\bsconf{G'''}{L'''}{\nameN}{\leavek}
&\qquad\text{{(A)(II)(a)(i)}}
\end{align*}
We have:
\[
    \irule[ForHalt2]{
      \text{(A)(II)}
      \\
      \text{(A)(II)(a)}
      \\
      \text{(A)(II)(a)(i)}
    }{
      \bsconf{G}{L}{\nameN}{\Sfor{\Sblock{}}{M}{{ S_{p}}}{{ S_{b}}}}
      \bseval
      \bsconf{G'''}{L'''}{\nameN}{\leavek}
    }
\]
Therefore [A][II][a][i] $\implies$ (A)(II)(a)(i).
%========= CASE (a)(ii)
\item[~\quad(ii)] \textbf{Assume [a][ii].} By IH on [A][II][a][ii], we know:
\begin{align*}
\bsconf{G''}{L''}{\nameN}{S_p}\bseval\bsconf{G_3}{L_3}{\nameN}{\regk}
&\qquad\text{{(A)(II)(a)(ii)1}}
\\
\bsconf{G_3}{L_3}{\nameN}{\Sfor{\Sblock{}}{M}{{S_{p}}}{{ S_{b}}}}
\bseval
\bsconf{G_4}{L_4}{\nameN}{\modeM}
&\qquad\text{{(A)(II)(a)(ii)2}}
\end{align*}
We have:
\[
    \irule[ForLoop]{
      \text{(A)(II)}
      \\
      \text{(A)(II)(a)}
      \\\\
      \text{(A)(II)(a)(ii)1}
      \\
      \text{(A)(II)(a)(ii)2}
    }{
      \bsconf{G}{L}{\nameN}{\Sfor{\Sblock{}}{M}{{ S_{p}}}{{ S_{b}}}}
      \bseval
      \bsconf{G_4}{L_4}{\nameN}{\modeM}
    }
\]
Therefore [A][II][a][ii] $\implies$ (A)(II)(a)(ii).
%========= CASE (b)
\item[(b)] \textbf{Assume [b].} Proven like cases for (A)(II)(a) above.

%========= CASE (c)
\item[(c)] \textbf{Assume [c].} By IH on [A][II], we know:
\[
 \bsconf{G'}{L}{\nameN}{{ S_{b}}}
 \bseval
 \bsconf{G''}{L''}{\nameN}{\breakk} \qquad\text{{(A)(II)(c)}}
\]
We have:
\[
    \irule[ForHalt1]{
      \text{(A)(II)}
      \\
      \text{(A)(II)(c)}
    }{
      \bsconf{G}{L}{\nameN}{\Sfor{\Sblock{}}{M}{{ S_{p}}}{{ S_{b}}}}
      \bseval
      \bsconf{G''}{L''}{\nameN}{\regk}
    }
\]
Therefore [A][II][c] $\implies$ (A)(II)(c).
\end{enumerate}

\item[(B)(I)] \textbf{Assume [B][I].} We have:
\[
\irule[ForInit]{
  \irule[Block]{
    \irule[]{
    \irule[\normalfont{by IH on [B][I]}]{
    }{
        \bsconf{G}{L}{\nameN}{S_1, .. ,S_n}
        \bseval
        \bsconf{G'}{L'}{\nameN}{\leavek}
    }
    }{
        \bsconf{G}{L}{\nameN}{S_1 , .. , S_n,\Sfor{\Sblock{}}{M}{S_{p}}{ S_{b}}}
        \bseval
        \bsconf{G'}{L'}{\nameN}{\leavek}
    }
  }{
    \bsconf{G}{L}{\nameN}{\Sblock{S_1 .. S_n\,\Sfor{\Sblock{}}{M}{S_{p}}{ S_{b}}}}
    \bseval
    \bsconf{G'}{L' \restrict L}{\nameN}{\leavek}
  }
}{
  \bsconf{G}{L}{\nameN}{\Sfor{\Sblock{S_1 .. S_n}}{M}{{S_{p}}}{{ S_{b}}}}
  \bseval
  \bsconf{G'}{L' \restrict L}{\nameN}{\leavek}
}
\]
Therefore [B][I] $\implies$ (B)(I).
\item[(B)(II)] \textbf{Assume [B][II].} Proven like cases for (A) above.
\end{itemize}

All cases [A],[B] $\implies$ (A),(B), therefore (1) $\implies$ (2).

%======= CASE: BLOCKS, BRANCHING =======
\item \textbf{Blocks and branching:}
Proven by the same mechanisms already seen in the case for loops above. Loops involve both blocks and conditional statements and thus the proof of the former covers a substantial portion of the latter's. The correspondence is also more clear by inspection.

%======= CASE: VARIABLES =======
\item \textbf{Variables:}
The correspondence from small-step to big-step semantics is clear by inspection of the rules. The proof proceeds by simple application of IH on the configuration containing $M$ for a given assignment $x := M$.

%======= CASE: EXPRESSIONS =======
\item \textbf{Expressions:}
Since the case for dereferencing identifiers is obvious by inspection, and the case for opcode calls is a simplification of the case for function calls, let us focus on function calls. For a term $S = \Efuncall{x}{M_1, .., M_n}$, assuming
\[
\bsconf{G_{n-i}}{L_{n-i}}{\nameN}{M_i}
\to^*
\bsconf{G_{n-i+1}}{L_{n-i+1}}{\nameN}{v_i} \qquad\text{[A]}
\]
for each $i \in [1,n]$, for (1) we have the trace:
\begin{align*}
&\bsconf{G_0}{L_0}{\nameN}{\Efuncall{x}{M_1 , .. , M_n}}\\
&
\to^*
\bsconf{G_1}{L_1}{\nameN}{\Efuncall{x}{M_1 , .. , v_n}}\\
&
\to^*
\bsconf{G_{n}}{L_{n}}{\nameN}{\Efuncall{x}{v_1 , .. , v_n}}\\
&
\to^*
\bsconf{G_{n}}
{\{ \vec y \mapsto \vec v \} \uplus \{ \vec z \mapsto 0 \}}
{\nameN}
{\ssframe{S_b}^{\vec z}_{L_n}}&\nameN(x) = (\vec y, \vec z, S_b)
\end{align*}
assuming
\[
\bsconf{G_{n}}
{\{ \vec y \mapsto \vec v \} \uplus \{ \vec z \mapsto 0 \}}
{\nameN}
{S_b}
\to^*
\bsconf{G''}
{L''}
{\nameN}
{\modeM}
\qquad\text{[B]}
\]
we continue trace as follows:
\begin{align*}
&
\to^*
\bsconf
{G''}
{L''}
{\nameN}
{\ssframe{\modeM}^{\vec z}_{L_n}}\\
&
\to^*
\bsconf
{G''}
{L_n}
{\nameN}
{\tuple{L''(z_1) , .. , L''(z_m))}}&\text{$\vec z = z_1, .., z_m$, $m > 1$}
\end{align*}
For (2), we first have by IH on [A]:
\[
\bsconf{G_{n-i}}{L_{n-i}}{\nameN}{M_i}
\bseval
\bsconf{G_{n-i+1}}{L_{n-i+1}}{\nameN}{v_i} \qquad\text{(A)}
\]
Following, by IH on [B]:
\[
\bsconf{G_{n}}
{\{ \vec y \mapsto \vec v \} \uplus \{ \vec z \mapsto 0 \}}
{\nameN}
{S_b}
\bseval
\bsconf{G''}{L''}{\nameN}{\modeM}
\qquad\text{(B)}
\]
We therefore have the following:
\[
    \irule[FunCall]{
      \bsconf{G_0}{L_0}{\nameN}{M_n}
      \bsevalexp
      \bsconf{G_1}{L_1}{\nameN}{v_n}
      \\\\
      :
      \\\\
      \bsconf{G_{n-1}}{L_{n-1}}{\nameN}{M_1}
      \bsevalexp
      \bsconf{G_n}{L_n}{\nameN}{v_1}
      \\\\
      \bsconf{G_n}{\{ \vec y \mapsto \vec v \} \uplus \{ \vec z \mapsto 0 \}}{\nameN}{{S_b}}
      \bseval
      \bsconf{G''}{L''}{\nameN}{\modeM}
    }{
      \bsconf{G_0}{L_0}{\nameN}{\Efuncall{x}{M_1 , .. , M_n}}
      \bsevalexp
      \bsconf{G''}{L_n}{\nameN}{\tuple{L''(z_1),..,L''(z_m)}}
    }
\]
Therefore, the case for $n,m > 1$ holds.
Cases for $n=1$, $m=1$ and $m=0$ are simplifications of the above. Note for opcode calls that cases for $\mathcal{M}$ are not necessary to inspect because we are only considering correctly terminating reductions.
\end{itemize}
With all cases done, we have shown (1) $\implies$ (2).
\qed
%\end{proof}

%%% Local Variables:
%%% mode: latex
%%% TeX-master: "main"
%%% End:

  \section{Redundant Contexts}
  In this section we define and prove properties of our small-step semantics that will become useful to remove redundant evaluation contexts.

%============== redundant contexts ==============
\begin{lemma}[Redundant frames]\label{lem:redundant}
Given a statement $S$, local stores $L_1 .. L_n$, $E \in \{ \ssbreak{ \cdot} , \Sblock{\cdot}_{L_i}^{\nameN''}\}$ for $i > 1$,
and $\dom{L_1} \subseteq \dom{L_2} \subseteq .. \subseteq \dom{L_n}$ then
\begin{align*}
&\bsconf{G}{L}{\nameN}{E_1[{\Sblock{S}_{L_1}^{\nameN''}}]}
\to^*
\bsconf{G'}{L'}{\nameN'}{\modeM}
\\
&\iff
\bsconf{G}{L}{\nameN}{E_m[..E_1[{\Sblock{S}_{L_1}^{\nameN''}}]..]}
\to^*
\bsconf{G'}{L'}{\nameN'}{\modeM}
\end{align*}
where {$E_m[..E_1[\cdot]..]$} is a context composed only of $E$'s nested in descending order such that either (1) $\ssbreak{ \cdot} \not\in \{E_1 , .. , E_m \}$ or (2) $E_1 = \ssbreak{ \cdot}$ .
\end{lemma}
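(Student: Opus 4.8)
The plan is to prove both directions of the equivalence at once, by exhibiting a common ``phase structure'' shared by the two reduction sequences. Write $\mathcal{E}$ for a nesting of frames from $E$ (so $\mathcal{E} = E_1[\cdot]$ on the left-hand side and $\mathcal{E} = E_m[\cdots E_1[\cdot]\cdots]$ on the right-hand side), and set $T = \Sblock{S}_{L_1}^{\nameN''}$. First I would observe that in any reduction of $\bsconf{G}{L}{\nameN}{\mathcal{E}[T]}$ the small-step rules force $T$ to be driven to a mode before any enclosing $E$-frame can fire: every reduction rule for $\ssbreak{\cdot}$ and for $\Sblock{\cdot}_{L'}^{\nameN'}$ requires the content of its hole to already be a mode, and the top-level context rules only allow reduction in the innermost redex. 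Hence, invoking the Congruence Lemma (\cref{lem:congruence}), there is a reduction $\bsconf{G}{L}{\nameN}{T} \to^* \bsconf{G_0}{L_0}{\nameN''}{\modeM_0}$ with $\modeM_0 \in \{\regk,\breakk,\continuek,\leavek\}$, with $\dom{L_0}\subseteq\dom{L_1}$ (the block-exit rule restricts the current store to $L_1$), and with the namespace reset to $\nameN''$, such that both $\mathcal{E}[T] \to^* \mathcal{E}[\modeM_0]$ and the analogous reduction for the other choice of $\mathcal{E}$ hold with the \emph{same} $G_0,L_0,\modeM_0$, since the $E$-frames never touch $G$; moreover any reduction of $\mathcal{E}[T]$ to a final mode configuration must factor through this ``Phase~0''. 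If $T$'s reduction never reaches a mode (it diverges or gets stuck) then by CG neither side completes and the equivalence is vacuous.

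Next I would determine exactly how a stack of $E$-frames acts on a mode configuration whose store has domain $\subseteq\dom{L_1}$. A block frame $\Sblock{\cdot}_{L_i}^{\nameN''}$ with $i>1$ is a complete no-op on such a configuration: since $\dom{L_i}\supseteq\dom{L_1}$ the restriction to $L_i$ is the identity, and the recorded namespace $\nameN''$ is already in force. A break frame $\ssbreak{\cdot}$ leaves store and namespace untouched and rewrites the mode by $\breakk\mapsto\regk$, passing $\regk$ and $\leavek$ through unchanged, and has no rule for $\continuek$ (it gets stuck). Therefore $\mathcal{E}[\modeM_0]$ reduces to $\bsconf{G_0}{L_0}{\nameN''}{\modeM}$, where $\modeM=\regk$ if $\modeM_0=\breakk$ and $\mathcal{E}$ contains at least one break frame, and $\modeM=\modeM_0$ otherwise; the reduction is stuck precisely when $\modeM_0=\continuek$ and $\mathcal{E}$ contains a break frame. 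Crucially this outcome depends only on whether $\mathcal{E}$ contains a break frame, and this is where hypotheses (1)/(2) enter: under (1) neither $E_1[\cdot]$ nor $E_m[\cdots E_1[\cdot]\cdots]$ contains a break frame, while under (2) both contain $E_1=\ssbreak{\cdot}$ innermost, and then the outer frames $E_2,\dots,E_m$ receive only $\regk$ or $\leavek$ and so are no-ops. In every case the two contexts yield the same $\bsconf{G_0}{L_0}{\nameN''}{\modeM}$ (or both get stuck); combining this with the shared Phase~0 gives the stated equivalence, with $G'=G_0$, $L'=L_0$, $\nameN'=\nameN''$. The ``peeling'' step itself can be organised as a straightforward induction on the number of enclosing frames.

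The main obstacle is making the Phase~0 factorisation precise: proving that every reduction of $\mathcal{E}[T]$ to a final mode configuration must first reduce $T$ to a mode requires an inversion argument on the evaluation contexts of \cref{def:small-step} and on the possible shapes of $\ssbreak{\cdot}$- and $\Sblock{\cdot}$-redexes, plus care about the benign degenerate cases where $T$ diverges or gets stuck. A secondary point that needs to be stated explicitly is the namespace bookkeeping: the innermost block exit already resets the namespace to $\nameN''$, so that every subsequent block frame $\Sblock{\cdot}_{L_i}^{\nameN''}$, which records the same $\nameN''$, is a genuine no-op on the namespace component and not merely on the store.
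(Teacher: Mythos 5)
Your proposal is correct and follows essentially the same route as the paper's (much terser) proof: factor the reduction through the inner block's exit, observe that the domain inclusions make every outer restriction $(\cdot)\restrict L_i$ an identity so the block frames are no-ops, and note that the innermost $\ssbreak{\cdot}$ catches $\breakk$ and converts it to $\regk$, after which all remaining break frames are inert. Your elaboration of the Phase~0 factorisation, the namespace bookkeeping, and the stuck $\continuek$ case only makes explicit what the paper leaves to ``inspection and structural induction on the reductions.''
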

%============== PROOF ==============
\begin{proof}
This follows by inspection and structural induction on the reductions.
For case (1), it is clear that $L'' \restrict L_1 = ((L'' \restrict L_1) \restrict L_2) .. \restrict L_n$, with everything else in the configurations being the same in the reduction sequence (for one nesting and then for arbitrarily many). For case (2), if
$\bsconf{G}{L}{\nameN}{S}
\to^*
\bsconf{G'}{L'}{\nameN'}{\breakk}$,
then this is caught by $E_1$ and converted into \regk. Afterwards, $\ssbreak{\cdot}$ become nullable (as they do not have any effect), which reduces case (2) to case (1), which also applies in every other case.
\qed
\end{proof}

\begin{corollary}[Block dropping]\label{cor:block:drop}
From Lemma~\ref{lem:redundant} we can drop sequences of block contexts that appear nested in descending order (i.e. innermost $L$ subsumes all outer ones).
\end{corollary}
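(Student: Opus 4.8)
The plan is to derive \cref{cor:block:drop} as an immediate specialisation of \cref{lem:redundant}, instantiating the generic frames $E$ of that lemma to be block contexts. Concretely, given a statement $S$, a namespace $\nameN''$, and stores $L_1, \ldots, L_n$ whose domains are nested $\dom{L_1} \subseteq \dom{L_2} \subseteq \cdots \subseteq \dom{L_n}$ (so that the innermost restriction subsumes all the outer ones), I would consider the nesting $\Sblock{\cdots\Sblock{\Sblock{S}_{L_1}^{\nameN''}}_{L_2}^{\nameN''}\cdots}_{L_n}^{\nameN''}$ and apply \cref{lem:redundant} taking each $E_i$ to be the block context $\Sblock{\cdot}_{L_{i+1}}^{\nameN''}$, listed in descending domain order from the outside in. Since $\ssbreak{\cdot} \notin \{E_1, \ldots, E_{n-1}\}$, this puts us in case~(1) of the lemma, whose hypothesis on the domains is precisely the one assumed here; the lemma then yields the desired bi-implication between reducing the full nesting to some $\bsconf{G'}{L'}{\nameN'}{\modeM}$ and reducing the statement with the redundant outer block contexts dropped.

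The only content being reused from \cref{lem:redundant} in this instance is the elementary observation underlying its proof: the sole way a block context $\Sblock{\cdot}_{L}^{\nameN''}$ can affect a reduction is, upon block exit, to restrict the current store to $\dom{L}$ and to reinstate $\nameN''$, while every other step proceeds inside the context unchanged (this is the congruence of \cref{lem:congruence}). Because $\dom{L_1}$ is the smallest domain, the successive restrictions telescope, $L'' \restrict L_1 = ((L'' \restrict L_1) \restrict L_2) \restrict \cdots \restrict L_n$, so the outer block exits are no-ops on both store and namespace. I would conclude that the two reductions pass through the same configurations up to the (identical) innermost block context, in both directions, which is the claim.

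Reading \cref{lem:redundant} for arbitrary $n$ then gives the corollary for descending-order block nestings of any length, and in particular supports the block-collapsing optimisation used by the interpreter, the special case $\dom{L_1} = \cdots = \dom{L_n}$. I do not expect a genuine obstacle here, since the substantive argument has already been carried out in \cref{lem:redundant}; the remaining work is only to match the corollary's informal phrasing to the precise hypotheses of the lemma and to check the restriction-composition identity above, both of which are routine. The one point to keep in mind is that all the block contexts in the collapsed stack must carry the same namespace $\nameN''$, which is exactly what the hypotheses of \cref{lem:redundant} already require.
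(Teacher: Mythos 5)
Your proposal is correct and matches the paper's intent exactly: the paper treats \cref{cor:block:drop} as an immediate instantiation of \cref{lem:redundant} (case~(1), all frames being block contexts with a common namespace), with no further argument given, and your telescoping observation $L'' \restrict L_1 = ((L'' \restrict L_1) \restrict L_2) \restrict \cdots \restrict L_n$ is precisely the point already established inside the proof of that lemma. Nothing is missing.
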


\begin{corollary}[Break dropping]\label{cor:break:drop}
From Lemma~\ref{lem:redundant} we can see that any interleaving of $\ssbreak{ \cdot}$ and $\Sblock{\cdot}_{L}^{\nameN}$ will not affect the result so long as one $\ssbreak{ \cdot}$ is present. This means we can drop break contexts that occur nested with block contexts.
\end{corollary}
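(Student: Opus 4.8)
The plan is to obtain \cref{cor:break:drop} as a direct instance of \cref{lem:redundant}, treating it as a repackaging of that lemma rather than a fresh argument. First I would isolate the only two facts needed about the contexts involved: a break context $\ssbreak{\cdot}$ interacts with its hole solely via the rules $\ssconf{\ssbreak{\modeM}}{L}{\nameN}\to\ssconf{\modeM}{L}{\nameN}$ for $\modeM\in\{\regk,\leavek\}$ and $\ssconf{\ssbreak{\breakk}}{L}{\nameN}\to\ssconf{\regk}{L}{\nameN}$, so it never alters the store and never emits $\breakk$; an empty block context $\Sblock{\cdot}_{L_i}^{\nameN''}$ acts only by the restriction $(\cdot)\restrict L_i$, which is a no-op once the store has already been restricted to $\dom{L_i}$, and \emph{a fortiori} along a non-decreasing chain of such domains. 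Together these say that, once the innermost break context has consumed the mode emerging from its hole, every surrounding break context merely relays a $\regk$ or $\leavek$ unchanged and every surrounding empty block context leaves the store fixed, so the outer contexts contribute nothing to the reduction sequence or to its endpoint.

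The key steps, in order, would be: (i) pin down the precise statement, since the corollary is phrased informally --- the intended claim is the equal-domain special case of \cref{lem:redundant}, equivalently the equivalence $\ssconf{\ssbreak{E[\ssbreak{S'}]}}{L'}{\nameN'}\to^*\ssconf{\modeM}{L''}{\nameN''}$ iff $\ssconf{\ssbreak{E[S']}}{L'}{\nameN'}\to^*\ssconf{\modeM}{L''}{\nameN''}$, for $E$ an arbitrary interleaving of $\ssbreak{\cdot}$ and empty $\Sblock{\cdot}_{L_j}^{\nameN}$ contexts sharing one namespace and domain; (ii) in any such nested interleaving containing at least one break, single out the innermost break context and, using the facts above, observe that the contexts surrounding it are inert; (iii) invoke \cref{lem:redundant} case~(2), with that innermost break playing the role of $E_1=\ssbreak{\cdot}$, to discard the remaining outer block and break contexts, leaving a single break context. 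The iff-form then follows by one further observation: the retained outer break already handles every $\breakk$, so the inner $\ssbreak{S'}$ can instead be dropped --- any $\breakk$ it would have caught is caught by the first break encountered higher up (which exists, as at least one remains), and in all other cases it is transparent.

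The step I expect to be the main obstacle is reconciling the shape permitted by the corollary with the precise hypotheses of \cref{lem:redundant}: the corollary allows a break context buried arbitrarily deep inside a stack of block contexts, whereas \cref{lem:redundant} is stated with the innermost relevant context already a break sitting immediately around an empty block $\Sblock{\cdot}_{L_1}^{\nameN''}$. Bridging the gap requires arguing that the reductions of the innermost block(s) commute with those of the adjacent break contexts, so that a deeply nested break may be treated as if it were innermost without changing the observed trace or its final configuration; but this is exactly the case analysis on the small-step rules --- splitting on whether the hole reduces to $\breakk$ or to $\regk/\leavek$ --- already carried out in the proof of \cref{lem:redundant}. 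I would therefore reuse that reasoning rather than redo it, so that the corollary's proof amounts to citing the lemma together with this short reassociation remark.
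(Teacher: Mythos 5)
Your proposal is correct and follows essentially the same route as the paper, which states \cref{cor:break:drop} as an immediate consequence of \cref{lem:redundant} (in particular its case~(2), where the innermost context is a break) without supplying any further argument. The care you take in reconciling the corollary's arbitrarily interleaved context shape with the lemma's stated hypotheses fills in detail the paper leaves implicit, but the underlying derivation is the same.
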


%%% Local Variables:
%%% mode: latex
%%% TeX-master: "main"
%%% End:

  \section{Yul Objects}
  \label{sec:obj}
  To use Yul with the solidity compiler, one must use the Yul object notation. Since this notation is external to Yul statements and an official semantics has not been provided, we shall leave out discussing their semantics formally. Instead, this section is an overview. The grammar for objects is as follows.
  \[\begin{array}{r@{}r@{\,}c@{\,}l}
    \textsc{\ObjSet: } & O & \mis & \yulobj{c_{\Str}}{\Sblock{S^*}}{D^*} \\
    \textsc{\DataSet: } & D & \mis & O \mor \yuldat{c_{\Str}}{c} \\
  \end{array}\]
That is, an object is in essence defined by a triple $(c_{\Str}, S, D^*)$ where $c_{\Str}$ is a string literal that \emph{names} the object, a \emph{code fragment} $S$, and a list of \emph{data fragments} that allows for nesting of objects.
In addition to the object notation, the EVM dialect for the Solidity compiler extends the set of opcodes with data functions that enable access to named object data elements nested in $D^*$.
  \[\begin{array}{r@{}r@{\,}c@{\,}l}
    \textsc{\DataFuns:}   & \delta & \mis & \datacopyk \mor \dataoffsetk \mor \datasizek\\
    \textsc{\Exp:} & M & \mis & \dots \mor \Efuncall{\datasizek}{c_\Str}
                                      \mor \Efuncall{\dataoffsetk}{c_\Str}
                                      \mor \Efuncall{\datacopyk}{M,M,M}\\
  \end{array}\]
Data functions $\delta$ are variants of opcodes \textsf{sizeof}, \textsf{offset}, and \textsf{codecopy}, and behave analogously --- \datasizek and \dataoffsetk must be provided a string that identifies an element in the object, and \datacopyk behaves just like \textsf{codecopy} in the EVM dialect. Addressing object data is done with a dot notation similar to that of object-oriented programming languages such as Java.
Lastly, the EVM dialect with objects introduces a function \textsf{memoryguard}, which assists the compiler and the optimiser by indicating that memory access is restricted to the range defined by interval $[0, \textsf{size})$ for $\textsf{memoryguard(size)}$.
To illustrate the use of $\delta$ functions, the following is a typical pattern in Yul objects produced by compiling Solidity smart contracts.
\begin{lstlisting}
/// @use-src 0:"HelloWorld.sol"
object "HelloWorld_20" {
  code {
    ...
    let _1 := allocate_unbounded()
    codecopy(_1, dataoffset("HelloWorld_20_deployed"),
                 `datasize("HelloWorld_20_deployed"))
    ...
  }
  /// @use-src 0:"HelloWorld.sol"
  object "HelloWorld_20_deployed" {
    code { ... }
    data ".metadata" hex"a26469706..."
  }
}
\end{lstlisting}
Comments contain line annotations added by the compiler. The top-level object, named \textsf{"HelloWorld\_20"}, contains a sub-object, named \textsf{"HelloWorld\_20\_deployed"}, which the top-level code fragment copies into a newly allocated portion of memory specified in \textsf{\_1} after calling a Yul function \textsf{allocate\_unbounded()}. Besides code, \textsf{"HelloWorld\_20\_deployed"} contains a special data fragment \textsf{".metadata"} containing a hex literal which cannot be accessed from the code fragment. Names for special data fragments always start with a dot \cite{yul}.

{
Although a semantics is not officially provided, one can estimate a semantics for objects that incorporates our semantics for statements based on the behaviour of the Solidity compiler. Consider the following big-step rule.
  \[\begin{array}{@{}cllll@{}}
    %%%%%% OBJECTS %%%%%%
    \irule[Object][object]{
      \bsconf{G}{\varnothing}{\varnothing}{S}
      \bseval_\Delta
      \bsconf{G_1}{L_1}{\nameN_1}{\modeM}
    }{
      \objconf{\yulobj{c_{\Str}}{S}{D^*}}{G}
      \bseval_\ObjSet
      \objconf{\modeM}{G_1}
    }
  \end{array}\]
where $\Delta$ is an environment that encodes the list of data fragments $D^*$ in some manner parametric to the dialect. Intuitively, this says that the evaluation of an object is that of its code fragment within the context of its data fragment. Within $S$, only $\delta$ functions would require accessing $\Delta$. Note that the specific interpretation of $\Delta$ depends on the dialect. More specifically, depending on how program code is stored, $\Delta$ may additionally need to map the corresponding address offset for each $D \in \Delta$ such that \textsf{codecopy} can access the named data via \dataoffsetk and \datasizek. Lastly, we shall leave the definition of the dot naming notation to the chosen instance of $\Delta$.
}

%%% Local Variables:
%%% mode: latex
%%% TeX-master: "main"
%%% End:

  %%%%% To display Open Access text and logo, Please add below text and copy the cc_by_4-0.eps in the manuscript package %%%

\end{document}

%%% Local Variables:
%%% mode: latex
%%% TeX-master: t
%%% End: